\newif\ifblind
\newif\ifjasa
\renewcommand{\baselinestretch}{1}
\newcites{body}{References}
\newcites{supp}{References}
\theoremstyle{definition}
\newtheorem{proposition}{Proposition}
\newtheorem{corollary}{Corollary}
\newtheorem{lemma}{Lemma}
\newtheorem{remark}{Remark}
\newtheorem{assumption}{Assumption}
\newtheoremstyle{empty}%
{}{}%
{}{}%
{\bfseries}{.}%
{ }%
{\thmnote{#3}} %
\theoremstyle{empty}
\newcommand{\indist}{\overset{d}{\rightarrow}}
\newcommand{\inprob}{\overset{\prob}{\rightarrow}}
\newcommand{\tabby}{\hspace{10pt}}
\newcommand{\ind}{\overset{\mathrm{ind}}{\sim}}
\newcommand{\iid}{\overset{\mathrm{iid}}{\sim}}
\newcommand{\tp}{\intercal}
\newcommand{\expect}{\mathbb{E}}
\newcommand{\prob}{\mathbb{P}}
\newcommand{\vecz}{\operatorname{vec}}
\newcommand{\diag}{\operatorname{diag}}
\newcommand{\var}{\operatorname{Var}}
\newcommand{\real}{\mathbb{R}}
\newcommand{\liza}[1]{\textcolor{blue}{#1}}
\newcommand\numberthis{\addtocounter{equation}{1}\tag{\theequation}}
\def\twoImages#1#2
\begin{document}

% JASA line spacing commands
\ifjasa
\def\spacingset#1{\renewcommand{\baselinestretch}%
{#1}\small\normalsize} \spacingset{1}
\fi

\ifblind
{
  \bigskip
  \bigskip
  \bigskip
  \begin{center}
    {\LARGE\bf Mesoscale two-sample testing for networks}
\end{center}
  \medskip
}
\else
{
  \title{\bf Mesoscale two-sample testing for networks}
  \author{Peter W. MacDonald \ifjasa\thanks{
    The authors gratefully acknowledge support from NSERC grant RGPIN-2025-02892 to PWM, NSF grant 2052918 to EL, and NSF grant 2210439 to EL and JZ}\hspace{.2cm}\fi\\
    Department of Statistics and Actuarial Science, University of Waterloo\\
    and \\
    Elizaveta Levina \\
    Department of Statistics, University of Michigan\\
    and \\
    Ji Zhu \\
    Department of Statistics, University of Michigan}
  \maketitle
}
\fi

\bigskip
\begin{abstract}
  Networks arise naturally in many scientific fields as a representation of pairwise connections.
  Statistical network analysis has most often considered a single large network, but it is common in a number of applications to observe multiple networks on a shared node set.
  When these networks are grouped by case-control status or another categorical covariate, the classical statistical question of two-sample comparison arises.
  In this work, we address the problem of testing for statistically significant differences in a given arbitrary subset of connections.
  This general framework allows an analyst to focus on a single node, a specific region of interest, or compare whole networks.   Our ability to conduct ``mesoscale'' testing on a meaningful group of edges is particularly relevant for applications such as neuroimaging and distinguishes our approach from prior work, which tends to focus either on a single node or the whole network.
  In this mesoscale setting, we develop statistically sound projection-based tests for two-sample comparison in both weighted and binary edge networks.   The key to our approach is to leverage network information from outside the set of interest to learn informative low-rank projections which leads to more powerful tests.  
\end{abstract}

\noindent
{\bf Keywords:} multiple networks, multiplex networks, latent space model, low rank matrix, hypothesis testing
\ifjasa
\vfill

\newpage
\spacingset{1.9} % DON'T change the spacing!
\fi
\section{Introduction} \label{sec:intro}

Modern data structures are increasingly complex, and classical statistical questions need to be answered anew for these complex structures.   One such structure is network data, which arise naturally in many scientific fields as a representation of pairwise connections (edges) %or the more general interconnected behavior
among a collection of units (nodes).   In this work, we study a classical question, two-sample testing for differences in the means, as it applies to networks.   
This problem is especially relevant in applications such as neuroimaging, where it is common to observe multiple of networks (each patient's brain image corresponds to one network), and compare groups by a case-control status or another categorical covariate.

Seminal work on two-sample testing for samples of networks \citepbody[e.g.,][]{ginestet17hypothesis,xia22hypothesis} has focused on either {\em global} testing, where the null hypothesis is of no difference anywhere in the network, or {\em local} testing, which considers univariate null hypotheses of no difference for each node pair.
%Both global and local network testing have limitations in the usual network testing setting in which we observe large networks in terms of $n$, the number of nodes, but few samples.
Global network testing can often correspond to a scientifically uninteresting null hypothesis.  %, especially when the number of nodes $n$ is large.
For popular network models like the
\ifjasa
stochastic block model (SBM) or the random dot product graph (RDPG),
\else
stochastic block model \citepbody[SBM,][]{holland83stochastic} or the random dot product graph \citepbody[RDPG,][]{athreya18statistical},
\fi
which fit relatively low-dimensional models to large networks, the global null hypothesis would nearly always be rejected on real data.
On the other hand, local testing creates a massive multiple testing problem and therefore loss of power, especially for smaller samples or when there is no information sharing between node pairs.
Our novel contribution is a flexible framework that spans the spectrum between these two extremes, including them as special cases and allowing for testing at an intermediate or {\em meso}scale, %\citepbody{khambhati18modeling}
so an analyst may choose a scale that is sufficiently local to be scientifically interesting, and yet large enough to have substantial power.
%Both local and mesoscale network testing rely on the {\em alignment} (also called labeling) of nodes across networks.
%Alignment is common in many applications: brain images describe connections among brain regions that can be mapped onto a common atlas, microbiome co-occurence networks use a fixed collection of biological taxa, and international economic and political networks use a fixed set of countries.
%In the language of multilayer networks \citepbody{kivela14multilayer}, we say that the samples of aligned networks form a {\em multiplex} network, which means all nodes are the same across layers.  Each network in each sample is a layer of the multiplex, and edges may take different values on different layers.
%Without alignment of nodes across layers, network two-sample testing is still possible at the global scale, using global network summary statistics such as edge density, triangle density, or other structural features invariant to  permutation of node labels \citepbody{wozniak13global,chung22valid}, but since our focus is on mesoscale network testing, we assume full node alignment across all layers.
% Next we fix the notation.   Throughout the paper, we will assume the data consists of two independent samples of networks on a fixed collection of $n$ aligned nodes.
% The edges can be binary or have weights.   The layers are represented by $n \times n$ adjacency matrices
% $$
%   \{A_k^{(g)}\}_{k=1}^{m_g}
% $$
% for samples $g=1,2$, each consisting of $m_g$ i.i.d.\ networks.
We develop mesoscale network tests for  differences in the means of the edge variables for a fixed {\em hypothesis set}, a subset of node pairs  $\mathcal{S} \subseteq \{1,\ldots,n\}^2$.
% For a hypothesis set $\mathcal{S} \subseteq \{1,\ldots,n\}^2$, the mesoscale null hypothesis is given by
% \begin{equation} \label{E_hypothesis}
%   \mathcal{H}_{0,\mathcal{S}}: \mathbb{E}[A_1^{(1)}]_{ij} = \mathbb{E}[A_1^{(2)}]_{ij} \tabby \forall (i,j) \in \mathcal{S}.
% \end{equation}
If $\mathcal{S} = \{1,\ldots,n\}^2$, then the mesoscale null coincides with the global null hypothesis (or the difference in means for all node pairs), while if $\mathcal{S}$ is a singleton set, the mesoscale hypothesis coincides with a local null hypothesis on the difference in the mean for a single node pair.

Mesoscale hypotheses of this form appear naturally in applications.
For example, in neuroimaging, it is common to focus on anatomical regions of the brain or  on  ``functional networks''  \citepbody{fair09functional}, consisting of connections within or between known groups of nodes, with the groups determined by a brain atlas.
We will refer to these as {\em rectangles}, as up to a reordering of nodes they correspond to a rectangular block of entries of the adjacency matrices. %$(a square if the comparison is on node pairs within a region, and in general a rectangle if the comparison is on node pairs between two regions).
\ifjasa
\else
In an exploratory analysis, \citebody{fair09functional} compare samples of child and adult brain images for some of the rectangles induced by functional and lobe-based groupings.
\citebody{sripada20prediction} also work at this scale to identify which functional groupings of brain regions are related to a continuous-valued neurocognitive score.
%Finally, although \citebody{ginestet17hypothesis} develop a global network test, in their application they also do inference on a rectangle: the node pairs within the default-mode functional network.
More generally, scientifically interesting rectangular hypothesis sets can arise from any known groupings of nodes.
While in some cases given node groupings are used as motivation for inference under a block-constant model like the SBM, homogeneous edge distributions within each rectangle are a restrictive assumption we would like to avoid making.
%In the analysis of microbiome co-occurence networks, recent research has considered mesoscale questions of network difference based on the idea of ``driver'' nodes. 
%In addition to comparisons at the rectangular scale induced by the taxonomy of bacterial species, \citebody{kuntal19netshift} consider a statistic summarizing the ``rewiring'' of a given node, essentially a comparison of all the node pairs with a fixed origin node, corresponding to a fixed row (or column) of the adjacency matrices.
%Two-sample comparisons at this mesoscale can identify the nodes that ``drive'' the differences between the two samples.
Two-sample comparison at the scale of fixed rows, columns, sub-rows, or sub-columns of the adjacency matrices may also be of interest if we want to isolate the influence of a given node on changes in the two samples \citepbody{kuntal19netshift}.
\fi

%These hypotheses are also closely related to the hypothesis of equality of individual node positions under a latent space model.
%However, unlike mesoscale row or column testing, latent position testing faces challenges due to non-identifiability, which we will discuss in more detail in Section~\ref{subsubsec:positions}.
%\textcolor{red}{(Is it necessary to have this paragraph here? Can it be delayed to a later section?)}  \liza{Or alternatively, elaborate on latent position testing as part of literature review.   As written, the paragraph does seem left hanging. } \pwm{Moved some of this position testing paragraph to the related work where position testing is mentioned.}

%To understand the advantages of our mesoscale network testing methodology \liza{that's an odd reason to introduce a model.  Are we assuming this model for the rest of the paper?  Giving an example?}, we consider the inhomogeneous Erd\H{o}s-R\'enyi (IER) model \citepbody{ghoshdastidar18practical}, under which all edges are independent, and satisfy
%$$
%  [A_k^{(g)}]_{ij} \sim \operatorname{Bernoulli}(P^{(g)}_{ij})
%$$
%for $1 \leq i,j \leq n$, $g=1,2$, and $k=1,\ldots,m_g$.
%To simplify presentation, we assume these networks are directed \liza{I am not sure, but it might be that this makes it no longer an IERG} and allow self loops.
%For $g=1,2$, $P^{(g)}$ are the expected adjacency matrices.

For the remainder of the paper, we will consider samples of networks with independent edges (possibly conditional on latent variables).
Much of the network literature  assumes some form of this model, which includes the SBM (conditional on latent communities), the RDPG (conditional on latent node positions), and other latent space models.
\ifjasa
\else
%In its most general form for binary networks, with arbitrary edge probabilities, the model is referred to as the inhomogeneous Erd\H{o}s-R\'enyi (IER) model  \citepbody{ghoshdastidar18practical}.
The SBM and RDPG are examples of {\em low-dimensional} latent space network models, which restrict the number of free parameters used to specify the expected adjacency matrix.
%These low dimensional models are very popular in statistical network analysis: they are interpretable, and have shown good empirical performance for tasks like link prediction [citation?] and node classification \citepbody{tang13universally}.
%In Section~\ref{sec:testing}, we generalize these binary edge models to allow for weighted edges with general exponential family distributions of weights.
Under a low-dimensional network model, although the edges in $\mathcal{S}$ are independent of the edges in its complement (henceforth denoted by $-\mathcal{S}$), these edges still provide information about the mesoscale hypothesis through the underlying model parameters.
%For instance, in a likelihood ratio test of $\mathcal{H}_{0,\mathcal{S}}$, and assuming an RDPG, in general all the edge variables will influence the likelihood ratio statistic: they are all linked through the latent position parameters.

\fi
While low-dimensional latent space models allow us to incorporate information from the entire network in a given mesoscale hypothesis test, model misspecification can invalidate test results.
%For global network testing, 
This can occur even under mild model misspecification, for instance incorrect choices of tuning parameters like latent space dimension \citepbody[][Figure 2]{ghoshdastidar18practical},
or latent position similarity function (cf. Section~\ref{subsec:gaussian_sims}); these are typically unknown and often chosen using heuristic methods.

Our new proposal for mesoscale two-sample testing takes advantage of low-dimensional latent structure in a much more robust fashion.
In contrast to previous network model-based approaches, our methods work with projections of the edges in $\mathcal{S}$ and thus remain robust to model and tuning parameter misspecification.
% This is similar to previous (random) projection-based approaches for high-dimensional two-sample testing and classification \citepbody{lopes11more, cannings17randomprojection}.
%Related work has shown that rather than using uninformed random projections, held out data can be used to guide a choice of projection which improves test power or classification accuracy \citepbody{cannings17randomprojection}.
Assuming independent edges, we have a natural split of the data into the edges in $\mathcal{S}$, which will be used for testing the mesoscale hypothesis; and the held out edges in its complement $-\mathcal{S}$, which would otherwise be discarded.
In Section~\ref{subsec:proj_learning}, we will show that low-dimensional network models imply shared structure between these two sets that can make $-\mathcal{S}$ informative for mesoscale testing.
Our projection-based method thus makes the most of the available information to test the mesoscale hypothesis: its size properties are robust to any specification of expected adjacency matrices for which the null is satisfied, and when the edges in $-\mathcal{S}$ are informative for the test, incorporating that information increases power.  

%\subsection{Related work} \label{subsec:lit}

Next, we review some of the literature on global and local hypotheses testing for networks.
Both local and mesoscale network testing rely on the {\em alignment} (also called labeling) of nodes across networks.
Without node alignment, one standard approach to global network testing is to evaluate a network summary statistic, such as modularity or edge density, on the two samples and compare their means either with a $t$-test, or a permutation test \citepbody{wozniak13global}.
\ifjasa
Several papers in statistical network analysis have taken more complicated model-based approaches to unaligned global testing \citepbody{agterberg20nonparametric,sabanayagam22graphon}.
\else
Several papers in statistical network analysis have taken more complicated model-based approaches.
Assuming one sample per group, \citebody{agterberg20nonparametric} and \citebody{chung22valid} develop tests for differences in two networks generated from the RDPG model (unconditional on positions), by testing for differences in the distributions of latent positions, up to unknown orthogonal transformation.
\citebody{athreya21estimation}
%\liza{If we're going to use the number citation style, then it can't be the subject of a sentence...}
%\pwm{changing to author-year to avoid changing the wording of this section}
instead consider testing for differences in specific functionals of these latent position distributions.
Similarly, \citebody{sabanayagam22graphon} assume each network is sampled from a graphon, and test the equality of those two graphons up to measure preserving transformation.
In all of these papers, by positing network models which are well defined for arbitrary node sets, the global test can be performed in terms of some underlying model parameter, without node alignment.
\fi

In the aligned node setting, many global test statistics are based on some distance between the adjacency matrices of the two samples.
\citebody{ghoshdastidar17twosample1} look at a normalized difference between the two samples of adjacency matrices measured in Frobenius or operator norms.
\ifjasa
\else
For the Frobenius norm, they propose an asymptotically Gaussian test statistic, 
as well as a bootstrap algorithm to estimate an empirical null distribution.
For the operator norm they apply the same bootstrap  algorithm, and in later work \citebody{ghoshdastidar18practical} also provide an asymptotically Tracy-Widom distributed test statistic.
\fi
\citebody{chen20spectralbased} develop an asymptotically Gaussian test statistic for the global null based on the third power of the normalized adjacency matrix difference.
\ifjasa
\citebody{ginestet17hypothesis} propose an asymptotically $\chi^2$ statistic based on the distance between means of graph Laplacians.
\fi
In recent work, \citebody{jin25optimal} use a statistic based on cycles to develop a two-sample hypothesis test that achieves the optimal phase transition under a class of degree-corrected mixed-membership models.
\ifjasa
\else
Other statistics for global network testing are based on graph distances \citepbody{donnat18tracking}. %tailored to the network setting.
\citebody{ginestet17hypothesis} propose an asymptotically $\chi^2$ statistic based on the geodesic distance between Fr\'echet means of graph Laplacians in a suitable manifold.
\citebody{lunagomez21modeling} consider more general modeling and Bayesian inference for unimodal distributions of graphs, in which the probability of a particular graph on $n$ nodes depends on its distance from a fixed graph.
\citebody{lovato20modelfree} develop statistics for two-sample testing based on the pairwise distances among the networks,
%\liza{also not sure}
comparing average distances within and between the two samples, and use the bootstrap to estimate an empirical null.
\fi

Another branch of the global testing literature focuses on latent space models.
In this case, the global null hypothesis is equivalent to equality of all the latent positions, up to some unidentifiable transformation.
These tests typically assume one network per sample, and that the latent space dimension is known.
\citebody{tang17semiparametric} evaluate the distance between the two sets of estimated latent positions after a Procrustes alignment, and estimate the null distribution with a parametric bootstrap.
\citebody{ghoshdastidar18practical} propose a similar bootstrap-based test which compares estimates of the expected adjacency matrices based on the ASE (Adjacency Spectral Embedding, a standard way of estimating latent positions under RDPG).
\ifjasa
\else
\citebody{li18twosample} test for the equality of all node community memberships in the weighted SBM, and develop an asymptotically Gaussian statistic.
\fi
\citebody{levin17central} and \citebody{draves21biasvariance1} instead analyze an omnibus embedding such that under the global null hypothesis, the estimated embeddings for the two networks are comparable without applying an orthogonal transformation.
\ifjasa
\else
While \citebody{levin17central} estimate the null distribution with a bootstrap method, \citebody{draves21biasvariance1} compare to a $\chi^2$ critical value, and calculate power when the expectations of the two adjacency matrices share a common column space.
\citebody{draves21biasvariance1} also discuss an adaptation of their global testing methodology to test for differences in individual latent positions, as their statistic decomposes across nodes.
However, their theoretical null distribution for a given node's contribution to the statistic only holds under the global null hypothesis.
Other recent work \citepbody{du21hypothesis} has considered %such an intermediate network testing problem:
testing for equality of individual node positions under a latent space model.
Unlike mesoscale testing for an entire row or column of the adjacency matrix, latent position testing creates challenges due to non-identifiability, which we will discuss in more detail in Appendix~\ref{subsubsec:positions}.
In all of these cases, testing is done assuming a low-dimensional network model, which can greatly increase test power, but these tests can be highly sensitive to unknown tuning parameters like the latent space dimension or the number of communities.
\fi

Local network testing can more easily leverage classical methods, as the local hypothesis for a given node pair is a univariate comparison of means.
\citebody{xia22hypothesis} study local network testing, and develop univariate test statistics based on the mean difference between the two samples' edge variables for each node pair.
They provide a power enhancement for their multiple testing procedure which takes advantage of sparsity in the network setting.
Multiple testing methods can be used to combine local tests and address mesoscale hypotheses, but achieving non-trivial power still relies on large signals at individual node pairs, which may be unrealistic for small samples and/or sparse networks.

The rest of this paper is organized as follows. In Section~\ref{sec:testing} we specify a flexible network model with independent edges drawn from an exponential family distribution, and develop projection-based mesoscale hypothesis tests and inference for generalized linear models (GLMs).
The connection to GLMs leads to theoretical guarantees presented in
\ifjasa
Section~\ref{sec:theory}.
\else
Section~\ref{subsec:theory_fixed},
and low rank network modeling leads to results about power of the proposed tests  presented in
Section~\ref{subsec:theory_learned}.
\fi
In Section~\ref{sec:simulations}, we provide empirical evaluations of our tests on synthetic networks, as well as an application to comparing fMRI brain scans of Parkinson's patients to healthy controls.    
Section~\ref{sec:conclusion} concludes with discussion.

\section{Mesoscale testing methodology} \label{sec:testing}

\subsection{Notation}

We write $M_{ij}$ or $[M]_{ij}$ for the $(i,j)$ entry of a $p \times q$ matrix $M$.  We denote submatrices of $M$ by $M_{\mathcal{I}}$ or $[M]_{\mathcal{I}}$
for a subset of entries $\mathcal{I} \subseteq \{1,\ldots,p\} \times \{1,\ldots,q\}$.
If $\mathcal{I}$ forms a rectangle, we treat $M_{\mathcal{I}}$ as a matrix.

The function $\vecz$ vertically stacks columns of a $p \times q$ matrix into a $pq$ vector, with inverse $\vecz^{-1}$.
For a vector $x \in \mathbb{R}^p$, $\operatorname{diag}(x)$ is a $p \times p$ diagonal matrix with entries of $x$ on the diagonal.
We use $\otimes$ to denote the Kronecker product, and $\oplus$ for the matrix direct sum \citepbody{bhatia97matrix}.
The Kronecker product and $\vecz$ satisfy the identity
\begin{equation} \label{kron_identity}
  \vecz\left( XMY^{\tp} \right) = (Y \otimes X) \vecz(M)
\end{equation}
for matrices $X$, $Y$, and $M$ of suitable dimensions.
We use $\lVert \cdot \rVert_2$ for the vector Euclidean norm or matrix operator norm, $\lVert \cdot \rVert_{2 \rightarrow \infty}$ for the two-to-infinity matrix norm (maximum of Euclidean norms of rows) %\citepbody[,][]{cape19twotoinfinity}, 
$\lVert \cdot \rVert_F$ for the Frobenius norm, and $\lVert \cdot \rVert_{\infty}$ for the maximum absolute entry of a vector.
The zero vector in $\mathbb{R}^p$ is denoted by $\bm{0}_p$, the vector of all ones by $\bm{1}_p$, and $I_p$ is the $p \times p$ identity matrix.
Finally, $\mathcal{O}_p$ denoted the group of orthonormal $p \times p$ matrices, and for $q \leq p$, $\mathcal{O}_{p \times q}$ is the set of $p \times q$ matrices with mutually orthonormal columns.

\subsection{An exponential family edge model}

Throughout the paper, we assume we observe two independent samples of networks on a fixed collection of $n$ aligned nodes.
Edges can be binary or weighted.
The networks (layers) are represented by $n \times n$ adjacency matrices
$
  \{A^{(1)}_k\}_{k=1}^{m_1}, \  \{A^{(2)}_k\}_{k=1}^{m_2}
$
for the two samples of sizes  $m_1$ and  $m_2$ networks.
For simplicity we assume that  $m_1 = m_2 = m$; analogous results hold when sample sizes are of the same asymptotic order.
A mesoscale test considers mean differences in a prespecified set of node pairs $\mathcal{S}$, 
testing the hypothesis 
\begin{equation} \label{E_hypothesis}
  \mathcal{H}_{0,\mathcal{S}}: \mathbb{E}[A^{(1)}_1]_{ij} = \mathbb{E}[A^{(2)}_1]_{ij} \tabby \forall (i,j) \in \mathcal{S}.
\end{equation}
We now propose a flexible independent edge model, with the goal of restating \eqref{E_hypothesis} in terms of model parameters.
Assume all entries in all layers are independent (the generalization to undirected networks which requires enforcing symmetry can be found in Appendix~\ref{subsec:nonrect})),  and that for $k=1,\ldots,M$, and $1 \leq i,j \leq n$, the distribution of $[A^{(g)}_k]_{ij}$ follows an {\em exponential family model} with probability function
\begin{equation} \label{expo_edges}
  \bm{p}(z;\Theta^{(g)}_{ij},\phi_{ij}) = C(z,\phi_{ij}) \cdot \operatorname{exp}\left( \frac{1}{\phi_{ij}^2}\left\{z[\Theta^{(g)}]_{ij} - H\left([\Theta^{(g)}]_{ij}\right)\right\}\right),
\end{equation}
where $H$ is a convex log partition function, and $\phi_{ij} > 0$ is a dispersion parameter.  
In many commonly used models for network data, the entries of $\Theta^{(1)}$ and $\Theta^{(2)}$ are further parameterized through latent variables. 
These could be latent community memberships like in SBM, or latent positions in a continuous space like in RDPG. 
In this work, we treat $\Theta^{(1)}$ and $\Theta^{(2)}$ as fixed parameters. %, or there could be no intrinsic low-dimensional structure.
The model in \eqref{expo_edges} is a standard exponential dispersion family.  Letting $h = H'$, we can,  for $g \in \{1,2\}$,  express edge expectations via  the corresponding edge expectation parameters $\Theta^{(g)}$ as 
$$
  \mathbb{E}[A^{(g)}_k]_{ij} = h\left([\Theta^{(g)}]_{ij}\right).
$$
We refer to $h$ as the inverse link function, and, in a slight abuse of notation, apply it element-wise to both vector and matrix-valued arguments.
This model is similar to exponential family models for independent matrix entries defined in \citebody{lin21exponentialfamily}, and for independent multiplex network edges in
\ifblind
\citebody{zhang20flexible}, 
\else
\citebody{macdonald22latent}.
\fi

This general framework includes two important special cases.
The Gaussian edge model with constant edge variance corresponds to $H(x) = x^2/2$  for $x \in \mathbb{R}$ and $\phi_{ij}^2 = \sigma^2$ for all $i, j$.
We develop mesoscale hypothesis tests with finite sample guarantees on size for this model, and leave heteroscedastic Gaussian edge models for future work.
The logistic link binary edge model corresponds to $H(x) = \log(1 + e^x)$ and $\phi_{ij} = 1$.
It belongs to the general class of exponential family models with a known mean-variance relationship, and thus no unknown dispersion parameters, for which we develop mesoscale hypothesis tests with asymptotic guarantees on size.
We also generalize this methodology to exponential family edge models with one unknown dispersion parameter ($\phi_{ij} \equiv \phi$).
% One related model for binary networks, which we do not treat in detail here is the so-called inhomogeneous Erdos-Renyi (IER) model \citepbody{ghoshdastidar18practical}, which models independent binary edges by specifying a Bernoulli distribution with identity link function.
% The IER model is equivalent to the logistic link binary edge model, but we find it more convenient to model the edge expectations through a link function to avoid the bounded parameter space required by the IER model.

Some of our additional assumptions are made to simplify notation.  We keep the two sample sizes equal, but this is not a requirement and they are not paired.   We work with directed edges and assume the same distribution for the diagonal entries of the adjacency matrices, but this can be easily adjusted to the undirected case by using only the upper triangular entries of each network, with or without the diagonal (see Appendix~\ref{subsec:nonrect}).

\subsection{Mesoscale null hypotheses and projection-based tests}

We can now state the mesoscale null hypothesis in terms of the parameters of the model \eqref{expo_edges}.   Given a set of entries $\mathcal{S}$,
the mesoscale null hypothesis is
\begin{equation} \label{hypothesis}
  \mathcal{H}_{0,\mathcal{S}}: \Theta^{(1)}_{\mathcal{S}} = \Theta^{(2)}_{\mathcal{S}},
\end{equation}
tested against the general two-sided alternative $\Theta^{(1)}_{\mathcal{S}} \neq \Theta^{(2)}_{\mathcal{S}}$.

To further simplify presentation, we take $\mathcal{S}$  to be an $r \times c$  rectangle, without loss of generality made up of the node pairs corresponding to the first $r$ rows and the last $c$ columns of the adjacency matrix.
\ifjasa
\else
However, \eqref{hypothesis} is stated for an arbitrary (directed) subset of the node pairs. We discuss adjustments to our method needed for non-rectangular hypothesis sets and networks with undirected edges in Appendix~\ref{subsec:nonrect}.
\fi

Our projection-based approach is designed to robustly incorporate information from outside $\mathcal{S}$, with the goal of boosting test power.
We consider projections onto the column spaces of two orthonormal matrix-valued tuning parameters, $U \in \mathcal{O}_{r \times d_r}$ and $V \in \mathcal{O}_{c \times d_c}$ for some $d_r \leq r$ and $d_c \leq c$, chosen adaptively to improve power against alternatives with differences in a particular subspace of the $(r \times c)$-dimensional parameter space.  Intuitively, if we could find projection matrices $U$ and $V$ such that $\Theta_{\mathcal{S}}^{(g)} = U \Gamma_g V^{\tp}$, we could just test the hypothesis $\Gamma_1 = \Gamma_2$, which retains all of the available signal with many fewer degrees of freedom.   In practice, we can look for $U$ and $V$ with low column dimensions which keep the approximation error,
\begin{equation} \label{approx_glm}
  \min_{\Gamma_g \in \mathbb{R}^{d_r \times d_c}} \left\lVert \Theta_{\mathcal{S}}^{(g)} - U \Gamma_g V^{\tp} \right\rVert_F ,
\end{equation}
% or equivalently
% \begin{equation} \label{approx_glm}
%   \vecz \left( \Theta_{\mathcal{S}}^{(g)} \right) \approx (V \otimes U) \vecz(\Gamma_g)
% \end{equation}
as small as possible for $g=1,2$, while maximizing power.   
%If \eqref{approx_glm} was exactly zero, that is, if $\Theta_{\mathcal{S}}^{(1)}$ and $\Theta_{\mathcal{S}}^{(2)}$ had low-dimensional representations, a likelihood-based test of $\Gamma_1 = \Gamma_2$ would retain all of the available signal with many fewer degrees of freedom. In practice, the reconstruction in \eqref{approx_glm} is only
%\ifjasa
%approximate.
%\else
%approximate, and our tests will instead leverage quasi-likelihood results for misspecified GLMs \citepbody{lv14model}.
%\fi
With this intuition, we will develop a two-stage approach to projection-based mesoscale hypothesis testing. % described below for a given nominal level $\alpha \in (0,1)$ and a generic $r \times c$ rectangular hypothesis set $\mathcal{S}$.
\begin{description}
  \item[Stage I:] Using node pairs in $-\mathcal{S}$, learn orthonormal matrices $\widehat{U} \in \mathbb{R}^{r \times d_r}$ and $\widehat{V} \in \mathbb{R}^{c \times d_c}$.
  \item[Stage II:] Apply a test function  $\Psi_{\alpha,\widehat{U},\widehat{V}}$ to the node pairs in $\mathcal{S}$.
\end{description}

In order to control type I errors at the nominal level $\alpha$ with adaptively chosen projections,
we take advantage of a natural independent split of the data induced by $\mathcal{S}$, and learn the projections  (Stage I) using only the held-out node pairs in $-\mathcal{S}$.
The edge independence assumption of \eqref{expo_edges} allows us to specify any learning procedure in Stage I %for $\widehat{U}$ and $\widehat{V}$, as well as their column dimensions $d_r$ and $d_c$,
without affecting the size of the Stage II test.  We do need a well specified class of tests for Stage II in order to achieve the (asymptotic) test size $\alpha$.  
%The flexibility of the learning procedure in Stage I is afforded by the use of a well specified class of tests in Stage II.  if the tests all have (asymptotic) size $\alpha$, the test procedure in Stage II has the same (asymptotic) size.
%Moreover, the form of $\bm{\Psi}_{\alpha,\mathcal{S}}$ will help us to motivate the suggested Stage I learning procedures.
We begin by proposing appropriate classes for Stage II in Section~\ref{subsubsec:expo_edges}, for use under different variants of the exponential family edge model \eqref{expo_edges}; we then return to Stage I and propose appropriate ways to estimate the projection matrices in Section~\ref{subsec:proj_learning}.  

\subsection{Classes of projected tests (Stage II)}\label{subsubsec:expo_edges}

In this section, we develop four classes of tests for different exponential families for use in Stage II.
In all cases we define a test procedure for some arbitrary, fixed projection matrices $U \in \mathbb{R}^{r \times d_r}$ and $V \in \mathbb{R}^{c \times d_c}$ with orthonormal columns, $d_r \leq r$, and $d_c \leq c$, which will be determined (from independent data) in Stage I.

% The first family of tests (Definition~\ref{def:glm_test}) is for the general exponential family edge model \eqref{expo_edges} with known mean-variance relationship ($\phi_{ij} \equiv 1$); each test has asymptotic size $\alpha$.
% The second (Definition~\ref{def:glm_test_od}) provides an extension to the case of unknown dispersion.
% The final two (Definitions~\ref{def:gaussian_test1} and \ref{def:gaussian_test2}) are modifications for the Gaussian edge model with unknown dispersion parameter $\sigma$; each test has size $\alpha$ in finite samples.
% All four of these families will be motivated by classical results on generalized linear models.
%and referred to as {\em projected tests}, since they replace the edge expectation parameters with their left and right-hand side projections.  % onto $\operatorname{col}(U)$ and $\operatorname{col}(V)$ respectively.

We consider families of tests for the following four settings:  (A) a general exponential family edge model \eqref{expo_edges} with known mean-variance relationship (labeled ``E''); (B) a general exponential family edge model with unknown dispersion (``E-UD''); (C) a modification of (A) using the exact null distribution under the Gaussian edge model with unknown dispersion parameter $\sigma$ (``G'') and (D)
a further modification for the Gaussian edge model which operates on only the projected data (``G-P'').  Theoretical results establishing test sizes are provided in 
Section~\ref{sec:theory}.
%All of these classes will be motivated by classical results on generalized linear models.
 Table~\ref{tab:test_usage} sums up the settings and theoretical guarantees on type I error for the four families of tests, to be descibed in detail in the rest of this section. 
 Propositions 4--6 are stated in Section~\ref{sec:theory}, and Corollary~\ref{cor:wobs_od} is stated in Appendix~\ref{subsubsec:theory_od}.

\begin{table}[h!]
%\label{tab:test_usage}
\begin{center}
\begin{tabular}{|l|l|l|}
\hline
%\textbf{Test}            
& \textbf{Edge model}                       & \textbf{Theoretical guarantees}                                                      \\ \hline
%$\Psi_{\alpha,U,V}^{(\mathrm{E})}$
E & Exponential family ($\phi=1$)       & Asymptotic size $\alpha$ (Proposition~\ref{prop:wobs_largen}) \\ %\hline
%$\Psi_{\alpha,U,V}^{(\mathrm{E-UD})}$ 
E-UD & Exponential family (unknown $\phi$) & Asymptotic size $\alpha$ (Corollary~\ref{cor:wobs_od}) \\ 
%\hline
%$\Psi_{\alpha,U,V}^{(\mathrm{G})}$ 
G & Gaussian ($m=1$) & Size $\alpha$ (Proposition~\ref{prop:fobs_tilde})  \\ 
%\hline
%$\Psi_{\alpha,U,V}^{(\mathrm{G-P})}$ 
G-P & Gaussian ($m > 1$) & Size $\alpha$ (Proposition~\ref{prop:fobs})  \\ 
\hline
\end{tabular}
\caption{Suggested setting and theoretical guarantees for the proposed test families. 
%in Sections~\ref{subsubsec:glm_test}-\ref{subsubsec:gaussian_test2}. 
\label{tab:test_usage}}
\end{center}
\end{table}

\subsubsection{Exponential family model with known mean-variance relationship} \label{subsubsec:glm_test}
We begin with the general exponential family model \eqref{expo_edges} with dispersion parameters $\phi_{ij} \equiv 1$.
Any $U$ and $V$ correspond to a GLM for node pairs in $\mathcal{S}$.
The link function $h$ is specified in \eqref{expo_edges}, while the design depends on $U$ and $V$.
In particular, applying \eqref{kron_identity}, we can write $\vecz(\Theta_{\mathcal{S}}^{(g)}) = h \{ (V \otimes U)\vecz(\Gamma_g) \}$ for $g = 1,2$. 
% \begin{align} \label{working_glm0}
%   \mathbb{E} \left\{ \vecz\left( [A_k^{(1)}]_{\mathcal{S}} \right) \right\} &= h \left\{ (V \otimes U)\vecz(\Gamma_1) \right\}, \nonumber \\
%   \mathbb{E} \left\{ \vecz\left( [A_k^{(2)}]_{\mathcal{S}} \right) \right\} &= h \left\{ (V \otimes U)\vecz(\Gamma_2) \right\}.
% \end{align}
% for $k=1,\ldots,m$.
% \begin{equation} \label{working_glm0}
%   h \left\{ (V \otimes U)\vecz(\Gamma_1) \right\}, \tabby h \left\{ (V \otimes U)\vecz(\Gamma_2) \right\}
% \end{equation}
Reparameterizing, we write 
\begin{equation*} %\label{working_glm0}
  \gamma_1 = \{ \vecz(\Gamma_1) + \vecz(\Gamma_2)\}/2, \tabby
  \gamma_2 = \{ \vecz(\Gamma_1) - \vecz(\Gamma_2)\}/2,
\end{equation*}
where $\gamma_1$ and $\gamma_2$ are $(d_rd_c)$-dimensional parameter vectors.
A classical test for group differences in this parameteric setting is a Wald test of $\gamma_2 = \bm{0}_{d_rd_c}$, based on asymptotic normality of the maximum likelihood estimator (MLE), and an estimator of its asymptotic covariance matrix.
However, we only use this GLM parameterization to define a quasi-likelihood function of $\gamma_1$ and $\gamma_2$, as we do not actually believe or assume the model holds.
The GLM design will in general be misspecified, but following \citebody{lv14model}, we modify the test so it is robust to this misspecification.
  Specifically, let 
  \begin{align}
    Y &= \begin{pmatrix}
  \vecz([A^{(1)}_1]_{\mathcal{S}}) & \cdots & \vecz([A^{(1)}_{m}]_{\mathcal{S}}) & \vecz([A^{(2)}_1]_{\mathcal{S}}) & \cdots &\vecz([A^{(2)}_{m}]_{\mathcal{S}})
\end{pmatrix} \in \real^{rc \times 2m}, \label{wobs_response} \\
  % \bm{X} &= (V \otimes U) \otimes \begin{pmatrix}
  %     \bm{1}_{m} & \bm{1}_{m} \\
  %     \bm{1}_{m} & -\bm{1}_{m}
  % \end{pmatrix} \in \mathbb{R}^{rcM \times 2d_rd_c} \ ,  \label{wobs_design}
  \bm{X} &= \begin{pmatrix} (V \otimes U) & (V \otimes U) \\
  (V \otimes U) & -(V \otimes U) \end{pmatrix} \otimes \bm{1}_m \in \mathbb{R}^{2rcm \times 2d_rd_c} \ ,  \label{wobs_design}
  \end{align}
  and let $(\hat{\gamma}_1, \hat{\gamma}_2 )^{\tp}$ be the MLE found by fitting a GLM with inverse link function $h$, response vector $\vecz(Y)$, and design matrix $\bm{X}$.
  %Assume this MLE exists and is unique for the specified $Y$ and $\bm{X}$.
  Define matrices
  \begin{equation}
%\hspace{-0.8cm}
    \widehat{F} = 2 (V \otimes U)^{\tp} \operatorname{diag}\left\{ h'\left( \widetilde{\Theta}_{\mathcal{S}} \right) \right\} (V \otimes U), \ 
    \widehat{G} = 2 (V \otimes U)^{\tp} \operatorname{diag}\left( h'\left\{ (V \otimes U) \hat{\gamma}_1 \right\}\right) (V \otimes U) \label{wobs_cov} ,
  \end{equation}
  where $\widetilde{\Theta}_{\mathcal{S}}$ is an estimator of $(\Theta_{\mathcal{S}}^{(1)} + \Theta_{\mathcal{S}}^{(2)})/2$.
  In Section~\ref{subsubsec:glm_largen}, we put a precise requirement on the consistency of this estimator (Assumption~\ref{assump:theta_est}) to guarantee that the test controls type I errors at the nominal level.
  Define the test statistic $w^{(\mathrm{E})}$ and test function  $\Psi_{\alpha,U,V}^{(\mathrm{E})}$ by  %\liza{I got confused by scaling... why is this multiplied by $m$?} \pwm{added a remark after}
  \begin{equation*}
    w^{(\mathrm{E})} = m \hat{\gamma}_2^{\tp} \left( \widehat{G} \widehat{F}^{-1} \widehat{G} \right) \hat{\gamma}_2, \quad
    \Psi_{\alpha,U,V}^{(\mathrm{E})}(Y) = \mathbb{I}\left(w^{(\mathrm{E})} > \chi^2_{d_rd_c,1-\alpha} \right),
  \end{equation*}
  where $\chi^2_{d_rd_c,1-\alpha}$ denotes the $(1-\alpha)$-th quantile of a $\chi^2_{d_rd_c}$ distribution.

% Under \eqref{working_assump}, classical MLE theory implies that $\Psi^{(1)}_{\alpha,U,V}$ has asymptotic size $\alpha$ for  $\mathcal{H}_{0,\mathcal{S}}$ with fixed $r$ and $c$, and $m \rightarrow \infty$.
% However, we only treat \eqref{working_assump} as a working assumption, based on an arbitrary pair $U$, $V$, and we do not actually believe that it holds; the GLM we fit in \eqref{working_glm0} will in general be misspecified.  \liza{Maybe if this is moved earlier, this flow will make more sense. I'll leave it for you to determine where the best place is - possibly even before $U$ and $V$ are first introduced. }

\begin{remark}
Note that under classical MLE theory, we have
\begin{equation*}
  \sqrt{m} \widehat{G}^{-1/2} \hat{\gamma}_2 \indist \mathcal{N}(\bm{0},I_{d_rd_c})
\end{equation*}
under the null hypothesis.
Hence the matrix $\widehat{F}$ provides the additional adjustment for model misspecification.
In Section~\ref{subsubsec:glm_largen}, we will show that under some regularity conditions, $\Psi^{(\mathrm{E})}_{\alpha,U,V}$ has asymptotic size $\alpha$.
%Moreover, in Section~\ref{sec:simulations}, we see that empirically, the asymptotic approximation of test level appears to be quite accurate even for very small values of $m$ when $r$ and $c$ are large relative to $d_r$ and $d_c$.
\end{remark}

\begin{remark}
In \eqref{wobs_cov},
%we may \liza{what does ``may" mean here? } replace $\widehat{F}$ by
% $$
%   2 (V \otimes U)^{\tp} \operatorname{diag}( h'\{\vecz(\widetilde{\Theta}_{\mathcal{S}})\}) (V \otimes U)
% $$
$\widehat{F}$ contains an arbitrary consistent estimate of the pooled edge expectation parameters in the hypothesis set.
For experiments and real data analysis in Section~\ref{sec:simulations} using the Gaussian edge model, we suggest the pooled means for each node pair.
% $$
%   \frac{1}{2m} \begin{pmatrix} I_{rc} & I_{rc} \end{pmatrix} Y \bm{1}_m.
% $$
For the logistic link binary edge model, we instead suggest a regularized Bayes estimator
\begin{equation} \label{theta_bayes}
    \widetilde{\Theta}_{\mathcal{S}} = \frac{1}{2m + 2} \sum_{k=1}^m \left( [A_k^{(1)}]_{\mathcal{S}} + [A_k^{(1)}]_{\mathcal{S}} + 1 \right) %\in \left[ \frac{1}{2m + 2} , \frac{2m + 1}{2m + 2} \right]
\end{equation}
to avoid boundary issues evaluating the inverse logistic link at $0$ or $1$.
\end{remark} 

\subsubsection{Exponential family edge model with unknown dispersion} \label{subsubsec:glm_test_od}

We now consider a general exponential family edges with a single unknown dispersion parameter, $\phi_{ij} \equiv \phi$.
Building off of classical quasilikelihood results for GLMs, we propose a rescaled statistic for testing $\mathcal{H}_{0,\mathcal{S}}$ based on an estimator of the dispersion parameter.
%In Section~\ref{sec:theory}, we state conditions under which this rescaled test statistic leads to a test which controls type I error at the nominal level.
%In Section~\ref{sec:simulations}, we show that this statistic empirically controls type I error at the nominal level for the logistic link binary edge model with unknown dispersion.
%Reusing the notation from Section~\ref{subsubsec:glm_test}, and assuming a correctly specified GLM, the asymptotic covariance of $\hat{\gamma}_2$ is scaled by a factor of $\phi^2$.
%Thus the usual Wald statistic must be rescaled by a consistent estimator of $\phi^2$.
In our setting and notation, the classical consistent estimator of $\phi^2$ \citepbody{mccullagh83generalized} is given by
% \begin{equation}
%   %\hspace{-0.8cm}
%   \hat{\phi}^2 = \frac{1}{2(rcm - d_rd_c)} \left\{ \vecz(Y) - h(\bm{1}_m \otimes \hat{\mu}_{\mathcal{S}})\right\}^{\tp} \left( \operatorname{diag}\{h'(\bm{1}_m \otimes \hat{\mu}_{\mathcal{S}}) \} \right)^{-1} \left\{ \vecz(Y) - h(\bm{1}_m \otimes \hat{\mu}_{\mathcal{S}})\right\},  \label{phi_hat1}
%   %\hat{\phi}^2_2 & = \frac{1}{2(rc - d_rd_c)} \left\{ \frac{1}{m} Y\bm{1}_m - h(\hat{\mu}_{\mathcal{S}})\right\}^{\tp} \left( \operatorname{diag}\left\{ \frac{1}{m} h'(\hat{\mu}_{\mathcal{S}}) \right\} \right)^{-1} \left\{ \frac{1}{m} Y\bm{1}_m - h(\hat{\mu}_{\mathcal{S}}) \right\},  \label{phi_hat2}
% \end{equation}
\begin{equation}\label{phi_hat1}
  \hat{\phi}^2 = \frac{1}{2rc(m-1)} \sum_{g=1}^2 \sum_{k=1}^m \sum_{(ij) \in \mathcal{S}} \frac{([A^{(g)}_k]_{ij} - h([\widetilde{\Theta}^{(g)}]_{ij}))^2}{h'([\widetilde{\Theta}^{(g)}]_{ij})}, %\inprob \phi. 
\end{equation}
where for $(i,j) \in \mathcal{S}$  and  $g = 1, 2$,  $[\widetilde{\Theta}^{(g)}]_{ij}$ is a consistent estimator of $[\Theta^{(g)}]_{ij}$.
% where
% \begin{equation*}
%   \hat{\mu}_{\mathcal{S}} = \begin{pmatrix}
%     V \otimes U & V \otimes U \\
%     V \otimes U & -(V \otimes U)
% \end{pmatrix} \begin{pmatrix} \hat{\gamma}_1 \\ \hat{\gamma}_2 \end{pmatrix}
% \end{equation*}
% vectorizes the estimates of $\Theta^{(1)}_{\mathcal{S}}$ and $\Theta^{(2)}_{\mathcal{S}}$ under the GLM described in Section~\ref{subsubsec:glm_test}.
%In Section~\ref{sec:simulations}, we use \eqref{phi_hat2} to estimate the dispersion parameter under the logistic link binary edge model, as it is preferred for binary data \citepbody{mccullagh83generalized}.
%Motivated by these results, we construct the following adjusted test for general exponential family edges with unknown dispersion.
%\begin{definition} \label{def:glm_test_od}
  Define %$\nu_1$ and $\nu_2' as in Corollary~\ref{cor:fobs_tilde},
  $w^{(\mathrm{E})}$ as in Section~\ref{subsubsec:glm_test}.
Then we can define the test statistic and test function for general exponential family edges with unknown dispersion by
\begin{equation*}
  w^{(\mathrm{E-UD})}  = w^{(\mathrm{E})} / \hat{\phi}^2, \quad
  \Psi_{\alpha,U,V}^{(\mathrm{E-UD})}(Y)  = \mathbb{I}\left(w^{(\mathrm{E-UD})} > \chi^2_{d_rd_c,1-\alpha} \right) . 
\end{equation*}
% \liza{previously defined} where $\chi^2_{d_r d_c,1-\alpha}$ denotes the $1-\alpha$ quantile of a $\chi^2_{d_rd_c}$ distribution.
%\end{definition}

% \begin{remark}
% In Section~\ref{sec:simulations}, we use an alternative estimator, also proposed by \citebody{mccullagh83generalized},
% \begin{equation}
%   \tilde{\phi}^2 = \frac{1}{2(rc - d_rd_c)} \left\{ \frac{1}{m} Y\bm{1}_m - h(\hat{\mu}_{\mathcal{S}})\right\}^{\tp} \left( \operatorname{diag}\left\{ \frac{1}{m} h'(\hat{\mu}_{\mathcal{S}}) \right\} \right)^{-1} \left\{ \frac{1}{m} Y\bm{1}_m - h(\hat{\mu}_{\mathcal{S}}) \right\},  \label{phi_hat2}
% \end{equation}
% for the dispersion parameter under the logistic link binary edge model, as it is preferred for binary data.
% \end{remark}

%\subsubsection{OLD: Gaussian testing}

\subsubsection{Gaussian edge model} \label{subsubsec:gaussian_test1}

If we assume that the edges are Gaussian, we have $H(x) = x^2/2$, $\phi_{ij} \equiv \sigma$, and the GLM reduces to a standard linear model with Gaussian errors.
In this special case, we define a modified class of test statistics with rejection thresholds based on the quantiles of its exact $F$-distribution. Let
\begin{equation}
    \bar{Y}^{(\mathrm{diff})} = \frac{1}{m} \sum_{k=1}^m \vecz([A^{(1)}_k]_{\mathcal{S}}) - \frac{1}{m} \sum_{k=1}^m \vecz([A^{(2)}_k]_{\mathcal{S}}) \in \real^{rc},
\end{equation}
the vector of edgewise mean differences.
%\begin{definition} \label{def:gaussian_test1}
  Let $\nu_1 = d_rd_c$ and $\nu_2= rc - d_rd_c$, and let $Q^{\perp}(U,V)$ denote
  % $$
  %   Q^{\perp}(U,V) = I_{rc} - (V \otimes U)(V \otimes U)^{\tp},
  % $$
  the projection matrix onto the orthogonal complement of $\operatorname{col}\{ (V \otimes U) \} \subseteq \mathbb{R}^{rc}$.
  Following \eqref{approx_glm}, a well chosen projection $(V \otimes U)$ should capture most of the structure in $\vecz(\Theta^{(1)}_{\mathcal{S}} - \Theta^{(2)}_{\mathcal{S}})$ and thus satisfy $Q^{\perp}(U,V)\vecz(\Theta^{(1)}_{\mathcal{S}} - \Theta^{(2)}_{\mathcal{S}}) \approx \bm{0}_{rc}$,
    whether or not the null hypothesis holds.
    Thus, the sum of squares of $\bar{Y}^{(\mathrm{diff})}$, after proper scaling and projection onto $Q^{\perp}(U,V)$, provides an estimate of the edge variance $\sigma^2$ which is independent of $(V \otimes U) \bar{Y}^{(\mathrm{diff})}$ and not overly conservative under the alternative hypothesis.
  % and let
  % $$
  %   W = \frac{1}{\sqrt{2}} \begin{pmatrix} V \otimes U \\ -(V \otimes U) \end{pmatrix} \in \mathbb{R}^{2rc \times d_rd_c}.
  % $$
 We thus define the test statistic and test function as 
  \begin{equation*}
    w^{(\mathrm{G})} = \frac{\nu_2 \lVert (V \otimes U)^{\tp} \bar{Y}^{(\mathrm{diff})} \rVert_2^2}{\nu_1 \lVert Q_{\perp}(U,V) \bar{Y}^{(\mathrm{diff})} \rVert_2^2} \ , \quad
    \Psi_{\alpha,U,V}^{(\mathrm{G})}(Y) = \mathbb{I}\left(w^{(\mathrm{G})} > F_{\nu_1,\nu_2,1-\alpha} \right) \, ,
  \end{equation*}
  where $F_{\nu_1,\nu_2,1-\alpha}$ denotes the $1-\alpha$ quantile of a $F_{\nu_1,\nu_2}$ distribution.
%\end{definition}
In Section~\ref{subsec:theory_gaussian}, we show that each $\Psi^{(\mathrm{G})}_{\alpha,U,V}$ has size $\alpha$ under the Gaussian edge model.

\subsubsection{Gaussian edge model with projected data} \label{subsubsec:gaussian_test2}

The tests $\Psi^{(\mathrm{G})}_{\alpha,U,V}$ are for the Gaussian edge model with constant variance, which assumes the edge covariance matrix is a multiple of the identity.
In particular, it requires that the sum of squares of $Q_{\perp}(U,V) \bar{Y}^{(\mathrm{diff})}$ is a good estimator of the edge variance.   However, application of our methods to real data show that empirically, if $U$ and $V$ are well-chosen to preserve the structure in $\Theta^{(1)}_{\mathcal{S}} - \Theta^{(2)}_{\mathcal{S}}$, then the variability of the projection of the edges onto $\operatorname{col}(V \otimes U)$ tends to be higher than variability of the projection onto its orthogonal complement.
This means that the estimator of $\sigma^2$ based on $Q_{\perp}(U,V) \bar{Y}^{(\mathrm{diff})}$ tends to be too small, leading to inflated rejection rates.
%\liza{This needs to be rephrased to be more precise, as a space can't really contain variability.}
Thus, for the Gaussian edge model with $m > 1$, we propose a modified class of tests which takes advantage of replication and uses only the projected data to get an unbiased estimate of $\sigma^2$ under a milder assumption on the covariance structure of the edge variables (see Remark~\ref{rem:restricted_cov}).

%\begin{definition} \label{def:gaussian_test2}
  Let $\nu_1 = d_rd_c$, $\nu_2' = d_rd_c(2m - 2)$, and $Y^{(\mathrm{resid})}$ the $rc \times 2m$ matrix with columns
\begin{equation}
    \vecz([A^{(g)}_k]_{\mathcal{S}}) - \frac{1}{m} \sum_{k=1}^m \vecz([A^{(g)}_{k'}]_{\mathcal{S}}) \in \real^{rc},
\end{equation}
for $k=1,\ldots,m$ and $g=1,2$, collecting the node pairs in $\mathcal{S}$ after group-wise centering.
Then we define the test statistic and test function by
\begin{equation*}
    w^{(\mathrm{G-P})} = \frac{m \nu'_2 \lVert (V \otimes U)^{\tp} \bar{Y}^{(\mathrm{diff})} \rVert_2^2}{2 \nu_1 \lVert (V \otimes U)^{\tp} Y^{(\mathrm{resid})} \rVert_F^2}, \quad \Psi_{\alpha,U,V}^{(\mathrm{G-P})}(Y) = \mathbb{I}\left(w^{(\mathrm{G-P})} > F_{\nu_1,\nu_2',1-\alpha} \right) \, . 
  \end{equation*}

% \ifjasa
% \else
% \begin{remark}
% The test $\Psi^{(\mathrm{G-P})}_{\alpha,U,V}$ has size $\alpha$, and is equivalent to performing an $F$-test on projected data
% \begin{equation} \label{proj_data}
%   (V \otimes U)^{\tp} \vecz\left( [A^{(g)}_k]_{\mathcal{S}} \right) \in \mathbb{R}^{d_rd_c}
% \end{equation}
% for $k=1,\ldots,m$ and $g=1,2$.
% By orthonormality of $U$ and $V$, the projected data vectors are Gaussian with covariance matrix $\sigma^2I_{d_rd_c}$.
% Moreover, since the test only depends on these projected data vectors, it only requires that the covariance is correctly specified for the data after projection onto the linear subspace $\operatorname{col}\{(V \otimes U)\}$.
% \end{remark}
% \fi 

%\subsubsection{Summary of test classes}

\subsection{Learning projections from held-out edges} \label{subsec:proj_learning}

Having described tests to use in Stage II, we return to learning appropriate projections $U$ and $V$ from the independent held out edges in $-\mathcal{S}$ (Stage I).
% \ifjasa
% \else
% Under the Gaussian edge model, and with structural assumptions on the edge expectation parameter matrices $\Theta^{(1)}$ and $\Theta^{(2)}$, these projections will approximate the oracle, most powerful projections for a given dimension $d = d_r = d_c \leq \min\{r,c\}$, under a fixed alternative.
% \fi
We will show in Section~\ref{sec:theory} that under the Gaussian edge model and for any fixed alternative, maximizing power is equivalent to maximizing the following non-centrality parameter: 
\begin{equation} \label{ncp}
  \psi(U,V) = \frac{m}{2\sigma^2} \left\lVert U^{\tp} ( \Theta_{\mathcal{S}}^{(1)} - \Theta_{\mathcal{S}}^{(2)}) V \right\rVert_F^2 \leq \frac{m}{2\sigma^2} \left\lVert \Theta_{\mathcal{S}}^{(1)} - \Theta_{\mathcal{S}}^{(2)} \right\rVert_F^2. 
\end{equation}
This implies that the optimal $d$-dimensional projections 
%can be recovered directly from the singular value decomposition (SVD) of $\Theta_{\mathcal{S}}^{(1)} - \Theta_{\mathcal{S}}^{(2)}$.
%In particular, they 
are orthonormal bases for the leading left and right singular subspaces of $\Theta_{\mathcal{S}}^{(1)} - \Theta_{\mathcal{S}}^{(2)}$. 
While this is exactly true only under the Gaussian edge model, we find that in practice these provide a good approximation in other settings as well, and refer to these singular subspaces as the {\em oracle} left and right projections, regardless of the underlying model.

One sensible approach to estimate the oracle projections is to leverage existing approaches for matrix completion to estimate
$\Theta_{\mathcal{S}}^{(1)} - \Theta_{\mathcal{S}}^{(2)}$ \citepbody{mazumder10spectral,lin21exponentialfamily},
and then compute the leading singular subspaces of the estimate.
\ifjasa
\else
Iterative approaches to low rank matrix completion have been developed that are well suited to this problem, either by completing
%\begin{equation*}
  $M^{-1} \sum_{k=1}^M [ A_k ]_{-\mathcal{S}}$
%\end{equation*}
for $g=1,2$, and then taking a difference; or directly completing the difference
%\begin{equation*}
$m^{-1} \sum_{k=1}^m \left\{ [ A_k^{(1)} ]_{-\mathcal{S}} - [ A_k^{(2)} ]_{-\mathcal{S}} \right\}.$
%\end{equation*}
\citebody{mazumder10spectral}, for instance, develop a low rank matrix completion approach for the Gaussian setting by using alternating least squares.
This approach has been extended to matrices with general exponential family entries in \citebody{lin21exponentialfamily} (eSVD), which can be adapted to the case with missing entries.
\citebody{mazumder10spectral} and \citebody{lafond15low} also consider low rank matrix completion based on solving a convex optimization problem with nuclear norm penalty, in the least squares and general exponential family settings respectively.
%These convex approaches are computationally slower, but replace the unknown rank with a continuous tuning parameter.
These existing matrix completion algorithms are generally developed for unstructured or random missingness patterns.
\fi
These are especially useful for estimating projections with non-rectangular hypothesis node pairs, since they provide estimates of the singular subspaces of the entire expected adjacency matrix, not just for the hypothesis set.

When the set $\mathcal{S}$ is rectangular, we can take advantage of this structure to characterize the oracle projections in terms of the node pairs in $-\mathcal{S}$.
Partition the indices of the adjacency matrices into four contiguous blocks
%\liza{this needs more words, somehow}
\begin{equation*}
  \begin{pNiceMatrix}[c]
  & \Block[draw]{1-1}{\mathcal{C}} & \Block[draw,fill=lightgray]{1-1}{\mathcal{S}} & \\
  & \Block[draw]{1-1}{\mathcal{D}} & \Block[draw]{1-1}{\mathcal{R}} &\\
  %& & & & &\\
  %& & & & &\\
\end{pNiceMatrix},
\end{equation*}
where $\mathcal{S}$ is the $r \times c$ hypothesis set.  %, the columns of $\mathcal{C}$ have the same dimension as the columns of $\mathcal{S}$, and the rows of $\mathcal{R}$ have the same dimension as the rows of $\mathcal{S}$.
We also write $[r]$ for the first $r$ nodes and  $-[r]$ for its complement, as well as $[c]$ for the last $c$ nodes and $-[c]$ for its complement.  

Suppose $\Theta^{(1)} - \Theta^{(2)}$ has rank $d_*$ and its singular value decomposition (SVD) is given by   $U_{\mathrm{diff}}S_{\mathrm{diff}}V_{\mathrm{diff}}^{\tp}$.  Partitioning this matrix into the same four blocks $\mathcal{C, S, D, R}$, we have 
%\begin{equation*}
%  \Theta^{(1)} - \Theta^{(2)} = \begin{pmatrix}
%    \Theta^{(1)}_{\mathcal{C}} - \Theta^{(2)}_{\mathcal{C}} & \Theta^{(1)}_{\mathcal{S}} - \Theta^{(2)}_{\mathcal{S}} \\
%    \Theta^{(1)}_{\mathcal{D}} - \Theta^{(2)}_{\mathcal{D}} & \Theta^{(1)}_{\mathcal{R}} - \Theta^{(2)}_{\mathcal{R}}
%  \end{pmatrix},
%\end{equation*}
%with singular value decomposition (SVD)
\begin{equation} \label{block_svd}
  U_{\mathrm{diff}}S_{\mathrm{diff}}V_{\mathrm{diff}}^{\tp} = \begin{pmatrix}
    U_{[r]}S_{\mathrm{diff}}V_{-[c]}^{\tp} & U_{[r]}S_{\mathrm{diff}}V_{[c]}^{\tp} \\
    U_{-[r]}S_{\mathrm{diff}}V_{-[c]}^{\tp} & U_{-[r]}S_{\mathrm{diff}}V_{[c]}^{\tp}
  \end{pmatrix}.
\end{equation}
Note that the blocks of \eqref{block_svd} are not the SVD's of the individual blocks, as first and third factors of each are in general not orthonormal.
Moreover, the rank  $d_{\mathcal{S}}$ of the block corresponding to $\mathcal{S}$ may be smaller than $d_*$.
By \eqref{ncp}, the non-centrality parameter is maximized by projections of dimension $d = \min\{d_{\mathcal{S}},r,c\}$, with additional projection dimensions only increasing the test degrees of freedom.

%As discussed above, the oracle projections can be directly recovered from the singular subspaces of $\Theta^{(1)} - \Theta^{(2)}$.
%Towards this goal, our one-step estimator will approximate a matrix with related singular subspaces.
Define
\begin{equation} \label{T_matrix}
  \mathbb{T} = (\Theta^{(1)}_{\mathcal{C}} - \Theta^{(2)}_{\mathcal{C}})(\Theta^{(1)}_{\mathcal{D}} - \Theta^{(2)}_{\mathcal{D}})^{\dagger}(\Theta^{(1)}_{\mathcal{R}} - \Theta^{(2)}_{\mathcal{R}}) \in \mathbb{R}^{r \times c},
\end{equation}
where $\dagger$ denotes the Moore-Penrose pseudoinverse.
%\liza{Why?  Need to explain transition to $\mathbb{T}$.}
The following two propositions show how and when we can relate the spectral structure of $\mathbb{T}$ to the oracle projections. 
\begin{proposition} \label{prop:onestep_weak}
  Suppose $U_{-[r]}$ and $V_{-[c]}$ have linearly independent columns.
  Then $\mathbb{T}$ has rank $d_{\mathcal{S}}$, and the column and row spaces of $\mathbb{T}$ coincide with the column and row spaces of $\Theta^{(1)}_{\mathcal{S}} - \Theta^{(2)}_{\mathcal{S}}$.
\end{proposition}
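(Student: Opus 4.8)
The plan is to prove the stronger statement that $\mathbb{T}$ equals $\Theta^{(1)}_{\mathcal{S}} - \Theta^{(2)}_{\mathcal{S}}$ exactly; both claimed conclusions then follow immediately, since having rank $d_{\mathcal{S}}$, and having a particular column and row space, are properties of $\Theta^{(1)}_{\mathcal{S}} - \Theta^{(2)}_{\mathcal{S}}$ itself. First I would substitute the block expressions from the SVD \eqref{block_svd} into the definition of $\mathbb{T}$, obtaining
\begin{equation*}
  \mathbb{T} = \left( U_{[r]} S V_{-[c]}^{\tp} \right) \left( U_{-[r]} S V_{-[c]}^{\tp} \right)^{\dagger} \left( U_{-[r]} S V_{[c]}^{\tp} \right),
\end{equation*}
so that the problem reduces to evaluating the pseudoinverse of the middle (block $\mathcal{D}$) factor $B := U_{-[r]} S V_{-[c]}^{\tp}$.

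The key step is to compute $B^{\dagger}$ through a full-rank factorization. Since $S$ is an invertible $d_* \times d_*$ diagonal matrix and, by hypothesis, $U_{-[r]}$ and $V_{-[c]}$ have $d_*$ linearly independent columns, the factor $F := U_{-[r]} S$ has full column rank $d_*$ while $G := V_{-[c]}^{\tp}$ has full row rank $d_*$, so $B = FG$ is a rank-$d_*$ factorization. Under exactly these rank conditions the reverse-order law $(FG)^{\dagger} = G^{\dagger} F^{\dagger}$ holds, which yields
\begin{equation*}
  B^{\dagger} = V_{-[c]} \left( V_{-[c]}^{\tp} V_{-[c]} \right)^{-1} S^{-1} \left( U_{-[r]}^{\tp} U_{-[r]} \right)^{-1} U_{-[r]}^{\tp},
\end{equation*}
where $G^{\dagger} = V_{-[c]}(V_{-[c]}^{\tp}V_{-[c]})^{-1}$ and $F^{\dagger} = S^{-1}(U_{-[r]}^{\tp}U_{-[r]})^{-1}U_{-[r]}^{\tp}$ exist because the relevant Gram matrices are invertible.

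Substituting this back and telescoping is then routine: the factor $V_{-[c]}^{\tp} V_{-[c]}$ arising on the left cancels its inverse, the factor $U_{-[r]}^{\tp} U_{-[r]}$ on the right cancels its inverse, and the two outer copies of $S$ cancel against the central $S^{-1}$, leaving $\mathbb{T} = U_{[r]} S V_{[c]}^{\tp} = \Theta^{(1)}_{\mathcal{S}} - \Theta^{(2)}_{\mathcal{S}}$. I expect the only genuine obstacle to be the justification of the reverse-order law: one must confirm that full column rank of $F$ together with full row rank of $G$ suffices for $(FG)^{\dagger} = G^{\dagger}F^{\dagger}$, which can be done either by appealing to the standard pseudoinverse formula for a full-rank factorization or by directly verifying the four Moore--Penrose conditions for the proposed $B^{\dagger}$. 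The linear independence hypothesis enters the argument only here, to guarantee that the Gram matrices $U_{-[r]}^{\tp}U_{-[r]}$ and $V_{-[c]}^{\tp}V_{-[c]}$ are invertible.
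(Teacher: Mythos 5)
Your proof is correct, and it diverges from the paper's own argument at the decisive step, in a way worth spelling out. Both arguments begin identically, expanding $\mathbb{T} = (U_{[r]}SV_{-[c]}^{\tp})(U_{-[r]}SV_{-[c]}^{\tp})^{\dagger}(U_{-[r]}SV_{[c]}^{\tp})$, but the paper's display \eqref{t_expanded} evaluates the middle pseudoinverse as though it were $V_{-[c]}S^{-1}U_{-[r]}^{\tp}$ --- a formula valid only when $U_{-[r]}$ and $V_{-[c]}$ are orthonormal, which sub-blocks of orthonormal matrices are not --- and is therefore left with the residual Gram factors $V_{-[c]}^{\tp}V_{-[c]}$ and $U_{-[r]}^{\tp}U_{-[r]}$ sandwiched in the middle, from which it concludes only that the column and row spaces agree. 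Your full-rank-factorization formula $B^{\dagger} = V_{-[c]}(V_{-[c]}^{\tp}V_{-[c]})^{-1}S^{-1}(U_{-[r]}^{\tp}U_{-[r]})^{-1}U_{-[r]}^{\tp}$ is the genuine Moore--Penrose pseudoinverse (the four Penrose conditions are immediate to check: $BB^{\dagger}$ and $B^{\dagger}B$ come out as the orthogonal projectors onto $\operatorname{col}(U_{-[r]})$ and $\operatorname{col}(V_{-[c]})$), and with it the Gram factors cancel exactly, yielding $\mathbb{T} = U_{[r]}SV_{[c]}^{\tp} = \Theta^{(1)}_{\mathcal{S}} - \Theta^{(2)}_{\mathcal{S}}$. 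This is a strictly stronger conclusion --- it is the classical fact that a rank-$d_*$ matrix whose $\mathcal{D}$ block already has rank $d_*$ is completed exactly by $\mathcal{C}\,\mathcal{D}^{\dagger}\,\mathcal{R}$ --- and it is easy to confirm on a $2 \times 2$ rank-one example, where the paper's expression gives a different number. Your route also buys something the paper's does not: when $d_{\mathcal{S}} < d_*$, so that $U_{[r]}$ or $V_{[c]}$ is rank-deficient, two matrices $U_{[r]}M_1V_{[c]}^{\tp}$ and $U_{[r]}M_2V_{[c]}^{\tp}$ with different invertible middle factors need not share column and row spaces, so the paper's final inference from its displayed expression has a gap that your exact identity sidesteps. (The same computation indicates that the scalar $\rho_U\rho_V$ in Proposition~\ref{prop:onestep_strong} should in fact be $1$; the subspace conclusion there is unaffected.)
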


\begin{proposition} \label{prop:onestep_strong}
  Suppose
 % \begin{equation*}
  $  \left( U_{-[r]} \right)^{\tp} U_{-[r]} = \rho_U I_{d_*}$, $\left( V_{-[c]} \right)^{\tp} V_{-[c]} = \rho_V I_{d_*}$
  %\end{equation*}
  for strictly positive $\rho_U$ and $\rho_V$.
  Then $\mathbb{T}$ has rank $d_{\mathcal{S}}$, and for any $1 \leq d \leq d_{\mathcal{S}}$ the leading $d$-dimensional left and right singular subspaces of $\mathbb{T}$ coincide with the leading $d$-dimensional left and right singular subspaces of $\Theta^{(1)}_{\mathcal{S}} - \Theta^{(2)}_{\mathcal{S}}$.
\end{proposition}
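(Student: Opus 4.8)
The plan is to prove the stronger statement that, under the two Gram conditions, $\mathbb{T}$ coincides with $\Theta^{(1)}_{\mathcal{S}} - \Theta^{(2)}_{\mathcal{S}}$ \emph{exactly}; the claims about rank and leading singular subspaces are then immediate, since identical matrices share every SVD-derived quantity. Throughout I abbreviate each block difference $\Theta^{(1)}_{\bullet} - \Theta^{(2)}_{\bullet}$ by its factorization from \eqref{block_svd}: the $\mathcal{C}$, $\mathcal{D}$, $\mathcal{R}$, and $\mathcal{S}$ blocks equal $U_{[r]}SV_{-[c]}^{\tp}$, $U_{-[r]}SV_{-[c]}^{\tp}$, $U_{-[r]}SV_{[c]}^{\tp}$, and $U_{[r]}SV_{[c]}^{\tp}$ respectively, where $S \in \mathbb{R}^{d_* \times d_*}$ is diagonal with strictly positive entries (hence invertible), and the hypotheses read $(U_{-[r]})^{\tp} U_{-[r]} = \rho_U I_{d_*}$ and $(V_{-[c]})^{\tp} V_{-[c]} = \rho_V I_{d_*}$.

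The crux is the pseudoinverse of the $\mathcal{D}$ block, $(U_{-[r]}SV_{-[c]}^{\tp})^{\dagger}$. Because the reverse-order law for pseudoinverses fails in general, rather than invoke it I would write down the candidate $X = (\rho_U\rho_V)^{-1} V_{-[c]} S^{-1} (U_{-[r]})^{\tp}$ and verify the four Moore--Penrose defining identities directly. Each verification collapses to one of the two Gram conditions: for instance $(U_{-[r]}SV_{-[c]}^{\tp}) X = \rho_U^{-1} U_{-[r]} (U_{-[r]})^{\tp}$ and $X (U_{-[r]}SV_{-[c]}^{\tp}) = \rho_V^{-1} V_{-[c]} (V_{-[c]})^{\tp}$, both symmetric, while the two ``product equals itself'' identities follow similarly upon inserting $(U_{-[r]})^{\tp}U_{-[r]} = \rho_U I_{d_*}$ and $(V_{-[c]})^{\tp}V_{-[c]} = \rho_V I_{d_*}$. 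This is the one place where the strong assumption is essential: semi-orthogonality of $U_{-[r]}$ and $V_{-[c]}$ is exactly what makes the interior Gram factors scalar multiples of the identity.

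With the pseudoinverse in hand I would substitute into $\mathbb{T} = \Theta_{\mathcal{C}} \Theta_{\mathcal{D}}^{\dagger} \Theta_{\mathcal{R}}$ and multiply out,
\begin{equation*}
  \mathbb{T} = U_{[r]}S V_{-[c]}^{\tp} \cdot \tfrac{1}{\rho_U\rho_V} V_{-[c]} S^{-1} (U_{-[r]})^{\tp} \cdot U_{-[r]}S V_{[c]}^{\tp},
\end{equation*}
so that the interior factors $V_{-[c]}^{\tp}V_{-[c]} = \rho_V I_{d_*}$ and $(U_{-[r]})^{\tp}U_{-[r]} = \rho_U I_{d_*}$ cancel the scalars and two of the copies of $S$, leaving $\mathbb{T} = U_{[r]}SV_{[c]}^{\tp} = \Theta^{(1)}_{\mathcal{S}} - \Theta^{(2)}_{\mathcal{S}}$. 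Since the rank of this block is $d_{\mathcal{S}}$ by definition, $\mathbb{T}$ has rank $d_{\mathcal{S}}$, and because $\mathbb{T}$ and $\Theta^{(1)}_{\mathcal{S}} - \Theta^{(2)}_{\mathcal{S}}$ are literally the same matrix, their leading $d$-dimensional left and right singular subspaces agree for every $1 \leq d \leq d_{\mathcal{S}}$.

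I expect the pseudoinverse computation to be the only genuine obstacle, precisely because the reverse-order law is unavailable and $U_{-[r]}SV_{-[c]}^{\tp}$ is a true three-factor product of non-square matrices; handling it by direct verification of the Moore--Penrose axioms sidesteps the difficulty and makes transparent why Proposition~\ref{prop:onestep_strong} requires the Gram conditions rather than the weaker linear-independence hypothesis of Proposition~\ref{prop:onestep_weak}. Under mere linear independence the interior factors $(U_{-[r]})^{\tp}U_{-[r]}$ and $(V_{-[c]})^{\tp}V_{-[c]}$ are only invertible, not scalar, so $\mathbb{T}$ agrees with $\Theta^{(1)}_{\mathcal{S}} - \Theta^{(2)}_{\mathcal{S}}$ merely up to an invertible change of basis on each side---enough to match column and row spaces, but not the nested sequence of leading singular subspaces.
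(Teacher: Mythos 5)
Your proof is correct, and it follows the same basic route as the paper: expand $\mathbb{T}$ by substituting an explicit formula for $(\Theta^{(1)}_{\mathcal{D}} - \Theta^{(2)}_{\mathcal{D}})^{\dagger}$ and let the Gram conditions collapse the interior factors. The one substantive difference is that you handle the pseudoinverse more carefully than the paper does, and you land on a different (and in fact correct) constant. The paper passes through its intermediate identity \eqref{t_expanded}, $\mathbb{T} = U_{[r]}S ( V_{-[c]}^{\tp} V_{-[c]} ) S^{-1} ( U_{-[r]}^{\tp} U_{-[r]} ) SV_{[c]}^{\tp}$, which tacitly takes $V_{-[c]}S^{-1}U_{-[r]}^{\tp}$ as the pseudoinverse of $U_{-[r]}SV_{-[c]}^{\tp}$; this omits the Gram-inverse normalizations $(V_{-[c]}^{\tp}V_{-[c]})^{-1}$ and $(U_{-[r]}^{\tp}U_{-[r]})^{-1}$, and so the paper concludes $\mathbb{T} = \rho_U\rho_V(\Theta^{(1)}_{\mathcal{S}} - \Theta^{(2)}_{\mathcal{S}})$. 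Your direct verification of the four Moore--Penrose axioms gives the correct $(\rho_U\rho_V)^{-1}V_{-[c]}S^{-1}U_{-[r]}^{\tp}$ and hence exact equality $\mathbb{T} = \Theta^{(1)}_{\mathcal{S}} - \Theta^{(2)}_{\mathcal{S}}$. Since a positive scalar multiple has the same rank and the same nested singular subspaces, both versions yield the stated conclusion, but yours is the correct identity. One caveat on your closing remark: the claim that under mere linear independence $\mathbb{T}$ matches the target only up to an invertible change of basis on each side inherits the paper's unnormalized formula; with the true pseudoinverse (via the full-rank factorization $G^{\tp}(GG^{\tp})^{-1}(F^{\tp}F)^{-1}F^{\tp}$) the Gram inverses cancel against the adjacent Gram factors and one again gets exact equality even under the weaker hypothesis of Proposition~\ref{prop:onestep_weak}. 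That does not affect your proof of the present statement, which is complete as written.
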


%The assumptions for both of these results can be slightly weakened.
%Loosely, the conditions on $U_{-[r]}$ and $V_{-[c]}$ only need to hold outside of the null spaces of $U_{[r]}$ and $V_{[c]}$, respectively.
Proposition~\ref{prop:onestep_weak} puts a weaker condition on the singular vectors, such that $\mathbb{T}$ has the correct column and row spaces, but may order the singular values incorrectly.
In contrast, Proposition~\ref{prop:onestep_strong} ensures the singular vectors of $\mathbb{T}$ are correctly ordered.
When this stronger condition holds, Proposition~\ref{prop:onestep_strong} justifies the optimality of the singular subspaces of $\mathbb{T}$ in the case where $\Theta^{(1)}_{\mathcal{S}} - \Theta^{(2)}_{\mathcal{S}}$ is not exactly low rank, but still has low effective rank.
\ifjasa
%Moreover, $\mathbb{T}$ can easily be estimated using the node pairs in $-\mathcal{S}$.
In Appendix~\ref{subsec:theory_learned}, we propose a plug-in estimator of $\mathbb{T}$ based on the nodes pairs in $-\mathcal{S}$ and analyze its theoretical properties under the Gaussian edge model.
\else

Thus, we propose the singular subspaces of an estimate $\widehat{\mathbb{T}}$ as a plug-in for the singular subspaces of $\Theta^{(1)}_{\mathcal{S}} - \Theta^{(2)}_{\mathcal{S}}$, where $\widehat{\mathbb{T}}$ itself plugs in estimates for each of the block edge expectation parameters for each group.
In the Gaussian case, we naturally use
\begin{equation*}
  \widehat{\Theta}^{(g)}_{\mathcal{I}} = \frac{1}{m} \sum_{k=1}^m [A_k^{(g)}]_{\mathcal{I}}
\end{equation*}
for $g=1,2$ and $\mathcal{I} \in \{\mathcal{C}, \mathcal{R}, \mathcal{D}\}$, plus an additional rank truncation step for block $\mathcal{D}$ to control the error after applying the pseudoinverse.
In the case with no self loops, rank truncations may also be used to fill in the missing diagonal entries.
With general exponential family edges and a well-behaved link function, we could apply the link function to the group mean adjacency matrices, or perform an initial low rank denoising before calculating $\widehat{\mathbb{T}}$.
We analyze the theoretical performance of this estimator for the Gaussian edge model in Section~\ref{subsec:theory_learned}.
\fi
%In either case, since the target $\mathbb{T}$ is constructed from the held out edge expectations, imputation is no longer required.

\ifjasa
\else
\ifjasa
\subsection{Row and column hypothesis sets}
\else 
\subsubsection{Row and column hypothesis sets}
\fi
One special case of interest is when $\mathcal{S}$ is a row (without loss of generality the first row) of the adjacency matrix ($r=1$ and $c=n$), and the hypothesis is about  edges originating from a particular node.
Rewriting \eqref{block_svd} with just two blocks, $\mathcal{S}$ and $\mathcal{R}$, we have 
\begin{equation*}
  \Theta^{(1)} - \Theta^{(2)} = \begin{pmatrix}
    \Theta^{(1)}_{\mathcal{S}} - \Theta^{(2)}_{\mathcal{S}} \\
    \Theta^{(1)}_{\mathcal{R}} - \Theta^{(2)}_{\mathcal{R}}
    \end{pmatrix} = U_{\mathrm{diff}}S_{\mathrm{diff}}V_{\mathrm{diff}}^{\tp} = \begin{pmatrix}
    U_{[1]}S_{\mathrm{diff}}V_{\mathrm{diff}}^{\tp} \\
    U_{-[1]}S_{\mathrm{diff}}V_{\mathrm{diff}}^{\tp}
  \end{pmatrix}.
\end{equation*}
In this case, since $r=1$, the non-centrality parameter can be maximized by a trivial left projection, and a one-dimensional right projection, onto the direction of $\Theta^{(1)}_{\mathcal{S}} - \Theta^{(2)}_{\mathcal{S}}$.
However, learning this direction requires an estimate of $U_{[1]}^{\tp} \in \real^{d_*}$, which is not possible using only the held-out edges in $\mathcal{R}$.

On the other hand, $\Theta^{(1)}_{\mathcal{S}} - \Theta^{(2)}_{\mathcal{S}}$ will share a row space with $\Theta^{(1)}_{\mathcal{R}} - \Theta^{(2)}_{\mathcal{R}}$, a subspace which can be recovered from the held out edges.
While we cannot get a guarantee as strong as Proposition~\ref{prop:onestep_strong}, if $U_{-[r]}$ has linearly independent columns, then all the signal is captured in an estimable $d_*$-dimensional right projection.
That is, the projection test with left projection $1$ and right projection
$\operatorname{row}\left( \Theta^{(1)}_{\mathcal{R}} - \Theta^{(2)}_{\mathcal{R}}\right)$ will have maximum non-centrality parameter
$$
  \frac{m}{2\sigma^2} \lVert \Theta^{(1)}_{\mathcal{S}} - \Theta^{(2)}_{\mathcal{S}} \rVert_2^2.
$$
% \begin{proposition} \label{prop:onestep_row}
%   Suppose $U_{-[r]}$ has linearly independent columns.
%   Then the projection test with left hand side projection $1$ and right hand side projection $\operatorname{row}\left( \Theta^{(1)}_{\mathcal{R}} - \Theta^{(2)}_{\mathcal{R}}\right)$ has maximum non-centrality parameter
%   $$
%   \frac{m}{\sigma^2} \lVert \Theta^{(1)}_{\mathcal{S}} - \Theta^{(2)}_{\mathcal{S}} \rVert_2^2.
%   $$
% \end{proposition}
The case  $r=n$, $c=1$ is analogous and corresponds to testing for differences in the edges which terminate at a particular node.
\fi

\section{Theoretical guarantees} \label{sec:theory}

This section presents theoretical properties of our projection testing approaches, for fixed projections.
For the general exponential family edge model with known mean-variance relationship, Propositions~\ref{prop:wobs} and \ref{prop:wobs_largen} give the asymptotic distribution of $w^{(\mathrm{E})}$ under a sequence of local alternatives, and the asymptotic size of the tests $\Psi_{\alpha,U,V}^{(\mathrm{E})}$.  
For the Gaussian edge model, Propositions~\ref{prop:fobs_tilde} and \ref{prop:fobs} establish the exact distributions of $w^{(\mathrm{G})}$ (see Section~\ref{subsubsec:gaussian_test1}) and $w^{(\mathrm{G-P})}$ (see Section~\ref{subsubsec:gaussian_test2}) under any fixed alternative, and therefore the exact size of the tests $\Psi_{\alpha,U,V}^{(\mathrm{G})}$ and $\Psi_{\alpha,U,V}^{(\mathrm{G-P})}$.  
These results provide intuition about test power and oracle projections. 
\ifjasa
Additional results on  the asymptotic distribution of $w^{(\mathrm{E-UD})}$ and the properties of the test based on projections learned from the plug-in estimate of the matrix $\mathbb{T}$ are presented in Appendix sections~\ref{subsubsec:theory_od} and ~\ref{subsec:theory_learned}, respectively. 
\else
A result on the asymptotic distribution of $w^{(\mathrm{E-UD})}$ is stated in  Section~\ref{subsubsec:theory_od}.  In Section~\ref{subsec:theory_learned},  we expand on the properties of the projection test with learned projections based on a plug-in estimate of the matrix $\mathbb{T}$, defined in Section~\ref{subsec:proj_learning}.
Under the Gaussian edge model, we prove a non-asymptotic lower bound on power for rectangular hypothesis sets, which depends on the spectral properties of the edge expectation parameter matrices.
\fi

% Throughout this section, for simplicity we assume a rectangular hypothesis set of size $r \times c$, although the results in Section~\ref{subsec:theory_fixed} hold similarly in the non-rectangular case (see Appendix~\ref{subsec:nonrect}).
% \ifjasa
% Throughout, we also fix $U$ and $V$ with orthonormal columns, and column dimensions $1 \leq d_r \leq r$ and $1 \leq d_c \leq c$, respectively.
% \fi

\ifjasa
\subsection{The general exponential family edge model} \label{subsubsec:glm_largen}
\else
\subsection{Test properties with fixed projections} \label{subsec:theory_fixed}

\subsubsection{The general exponential family edge model} \label{subsubsec:glm_largen}
\fi
Under the general exponential family edge model with a known mean-variance relationship, we will establish the asymptotic results for the distribution for our test statistic $w^{(\mathrm{E})}$.
For simplicity, we suppose that the hypothesis set is rectangular with $r$ rows and $c$ columns, and that we have two balanced groups ($m_1=m_2=m$).
Consider an asymptotic regime $n \rightarrow \infty$, where $n$ is the number of nodes in each sampled network, and allow $m = m(n)$, $r = r(n)$, $c = c(n)$ to diverge with $n$.
We consider a sequence of local alternatives which scale with $m$ and provide intuition about the optimal projection matrices for $\Psi_{\alpha,U,V}^{(\mathrm{E})}$.
We write the edge expectation parameters $\Theta^{(1)}_{\mathcal{S},n}$ and $\Theta^{(2)}_{\mathcal{S},n}$ as
\begin{equation} \label{loc_alt}
  \Theta^{(1)}_{\mathcal{S},n} = \operatorname{vec}^{-1}(\tau_n) + m^{-1/2} \operatorname{vec}^{-1}(\Delta_n), \quad \Theta^{(2)}_{\mathcal{S},n} = \operatorname{vec}^{-1}(\tau_n) - m^{-1/2} \operatorname{vec}^{-1}(\Delta_n)
\end{equation}
for $(rc)$-dimensional vectors $\tau_n$ and $\Delta_n$.
For each $n$, suppose that we apply the mesoscale test with arbitrary, deterministic sequences of orthonormal matrices $\{U_n\}$ and $\{V_n\}$, of dimensions $r \times d_r$ and $c \times d_c$.
We assume the following asymptotic {\em coherence} condition on the orthonormal matrices $(V_n \otimes U_n)$, which have unit norm columns, but growing row dimension $rc$ as $n$ grows \citepbody{cape19twotoinfinity}.
\begin{assumption} \label{assump:coherence}
  ~ %Suppose
  \begin{equation} \label{coherence_cond}
    \lVert (V_n \otimes U_n) \rVert_{2 \rightarrow \infty} = O\left((rc)^{-1/2}\right).
  \end{equation}
\end{assumption}
For the remainder of this section, let
\begin{equation*}
  \mu_n = \operatorname{vec}\begin{pmatrix}
  \Theta_{\mathcal{S},n}^{(1)} \\ \Theta_{\mathcal{S},n}^{(2)}
  \end{pmatrix} \in \mathbb{R}^{2rc}, \quad
  \bar{X}_n = \begin{pmatrix}
  V_n \otimes U_n & V_n \otimes U_n \\
  V_n \otimes U_n & -(V_n \otimes U_n)
  \end{pmatrix} \in \mathbb{R}^{2rc \times 2d_rd_c},
\end{equation*}
and let $\gamma_n = (\gamma_{1,n}, \gamma_{2,n})^{\tp}$ be the vector that solves the system of equations
\begin{equation} \label{population_glm}
\bar{X}_n^{\tp} \left\{ h(\mu_n) - h\left(\bar{X}_n \gamma_n \right) \right\} =   \bm{0}_{2d_rd_c}. 
\end{equation}
Throughout, we require the following regularity conditions for the edge distribution and inverse link function, similar to \citetbody{lv14model}.
\begin{assumption} \label{assump:glm}
  ~
  \begin{enumerate}[{(A)}]
    \item The function $h$ is strictly increasing and continuously differentiable.
    \item For all $g=1,2$, $k=1,\ldots,m$, $(i,j) \in \mathcal{S}$,
    $\mathbb{E} \left\lvert [A^{(g)}_k]_{ij} - h([\Theta^{(g)}_{\mathcal{S},n}]_{ij}) \right\rvert^3 $
    is bounded uniformly over $n$.
  \end{enumerate}
\end{assumption}
Under Assumption~\ref{assump:glm}, \eqref{population_glm} has a unique solution \citepbody{lv14model}.
We also assume that there exist consistent estimators for the unknown edge expectation parameters, in the following sense.
\begin{assumption} \label{assump:theta_est}
  Suppose there exists a sequence of estimators $\widetilde{\Theta}_{n}$ of $\tau_n$ such that
  \begin{equation} \label{thetahat_cond}
    \frac{1}{rc} \sum_{(ij) \in \mathcal{S}} \left\lvert h'([\widetilde{\Theta}_n]_{ij}) - h'([\vecz^{-1}(\tau_n)]_{ij}) \right\rvert = o_{\prob}(1)
  \end{equation}
  %as $n \rightarrow \infty$.
\end{assumption}
Assumption~\ref{assump:theta_est} does not require uniform consistency, and thus is reasonable as long as $m \rightarrow \infty$.
%, no matter the growth rates of $r$ and $c$.

In the following Proposition~\ref{prop:wobs}, we allow for local alternatives, but fix the size and shape of the hypothesis set, the edge expectation parameters, and the projections used for testing.
For simplicity of notation, we omit the subscripts from $U$, $V$, $\tau$, and $\Delta$.
These restrictions mean that Assumption~\ref{assump:coherence} holds trivially, but we require one additional assumption on $\gamma_n$.
\begin{assumption} \label{assump:gamma_size}
    Let $\gamma_n$ be the solution of \eqref{population_glm}.  We assume that
    \begin{equation*} %\label{gamma_sizes}
  \sup_{n \geq 1} \left\{ \lVert (\gamma_{1,n} , \sqrt{m} \gamma_{2,n} ) \rVert_2 \right\} = K_{\gamma} < \infty.
\end{equation*}
\end{assumption}
From \eqref{population_glm}, $\gamma_n = (\gamma_{1,n},\gamma_{2,n})^{\tp}$ is the coefficient vector for the best model in the misspecified GLM family (with design matrix $\bm{X}$) to approximate the distribution of $Y$ \citepbody{lv14model}. It is reasonable to assume that these coefficients are uniformly bounded, and the norm of $\gamma_{2,n}$ scales like the norm of the mean difference, which has rate $m^{-1/2}$ under the sequence of local alternatives \eqref{loc_alt}.
%\liza{I still don't understand the explanation, not by itself... maybe if one also used some of the earlier equations but I don't think it's immediately evident from just  \eqref{population_glm}.  The assumption itself I think is fine. If you feel the explanation is important, let it be... }
We are now ready to state the first main proposition.
\begin{proposition} \label{prop:wobs}
  Assume the exponential family edge model \eqref{expo_edges} with dispersion $\phi_{ij} \equiv 1$ satisfies Assumptions~\ref{assump:glm}--\ref{assump:gamma_size}.
%   Define $\gamma_n$ as the solution of \eqref{population_glm}, and assume that
%   \begin{equation} \label{gamma_sizes}
%   \sup_{n \geq 1} \left\{ \lVert (\gamma_{1,n} , \sqrt{m} \gamma_{2,n} ) \rVert_2 \right\} = K_{\gamma} < \infty.
% \end{equation}
  Assume the maximum likelihood estimate from fitting a GLM with responses $\vecz(Y)$, and design matrix $\bm{X}$ (as in Section~\ref{subsubsec:glm_test}) exists.
  %denote it by $(\hat{\gamma}_{1,n}, \hat{\gamma}_{2,n})^{\tp}$, and define matrices $\widehat{F}$ and $\widehat{G}$ as in \eqref{wobs_cov}.
  Define the function
  \begin{equation*}
    Q(x,y) = 2\begin{pmatrix}
    (V \otimes U)^{\tp} \left\{ h(\tau) - h((V \otimes U)x) \right\} \\
    (V \otimes U)^{\tp}\left\{ \operatorname{diag}\{h'(\tau)\} \Delta - \operatorname{diag}\{h'((V \otimes U) x)\}(V \otimes U)y \right\}
    \end{pmatrix}.
  \end{equation*}
  which has a (unique) root defined by $Q(\tilde{\tau},\tilde{\Delta}) = \bm{0}_{2d_rd_c}$.
  Finally, define
  \begin{equation*}
    \tilde{F} = 2 (V \otimes U)^{\tp} \operatorname{diag}\{h'(\tau)\} (V \otimes U), \quad
    \tilde{G} = 2 (V \otimes U)^{\tp} \operatorname{diag}(h'\{(V \otimes U)\tilde{\tau}\}) (V \otimes U).
  \end{equation*}
  Then $w^{(\mathrm{E})}$ has an asymptotic $\chi^2_{d_rd_c}(\tilde{\psi})$ distribution as $n \rightarrow \infty$, with non-centrality parameter
  \begin{equation} \label{glm_ncp}
    \tilde{\psi} = \tilde{\Delta}^{\tp} \tilde{G} \tilde{F}^{-1} \tilde{G} \tilde{\Delta}.
  \end{equation}
\end{proposition}
The proof of Proposition~\ref{prop:wobs} can be found in Appendix~\ref{sec:app:proof2}; it establishes the asymptotic size of the test $\Psi_{\alpha,U,V}^{(\mathrm{E})}$, and motivates an optimal choice of projections $U$ and $V$ to maximize the asymptotic non-centrality parameter.
%In \eqref{gamma_sizes}, we require that the coefficient vector $\gamma_{1,n}$, which models $\tau$ must scale like a constant in $m$; the coefficient vector $\gamma_{2,n}$ which models $m^{-1/2}\Delta$ must scale like $m^{-1/2}$.
%the same order as the difference in edge expectations $\Theta_{\mathcal{S},n}^{(1)} - \Theta_{\mathcal{S},n}^{(2)}$.  
%\liza{This is hard to parse;  perhaps it's in the wrong place and should be moved right after \eqref{gamma_sizes}?   Is there an additional point being made here other than describing the assumption?}
Given $\tilde{\tau}$, $\tilde{\Delta}$ solves a generalized least squares problem and satisfies
\begin{equation*}
  %\hspace{-0.25cm}
  \tilde{\Delta} = \frac{\sqrt{m}}{2} \left\{ (V \otimes U)^{\tp} \operatorname{diag}\{ h'((V \otimes U)\tilde{\tau}) \} (V \otimes U) \right\}^{-1} (V \otimes U)^{\tp} \operatorname{diag}\{ h'(\tau) \} \operatorname{vec}( \Theta_{\mathcal{S},n}^{(1)} - \Theta_{\mathcal{S},n}^{(2)}).
\end{equation*}
%the asymptotic non-centrality parameter \eqref{glm_ncp} is a quadratic form of $\operatorname{vec}( \Theta_{\mathcal{S},n}^{(1)} - \Theta_{\mathcal{S},n}^{(2)})$, and 
Thus, after some standardization, the asymptotic non-centrality parameter \eqref{glm_ncp} is proportional to a squared norm of the projection of $\vecz( \Theta_{\mathcal{S},n}^{(1)} - \Theta_{\mathcal{S},n}^{(2)})$ onto $\operatorname{col}(V \otimes U)$. 

In the following Proposition~\ref{prop:wobs_largen}, we work in the general asymptotic regime where the size and shape of the hypothesis set are allowed to grow, but we no longer allow for a sequence of local alternatives, and instead work with the null hypothesis  $\Delta_n \equiv 0$ for all $n$. 
We now include the subscript $n$ for $U$, $V$, and $\tau$.
\begin{proposition} \label{prop:wobs_largen}
  Assume the exponential family edge model \eqref{expo_edges} with dispersion $\phi_{ij} \equiv 1$ satisfies Assumptions~\ref{assump:glm} and \ref{assump:theta_est}, and $\{U_n\}$ and $\{V_n\}$ satisfy Assumption~\ref{assump:coherence}.
  Define $\gamma_n$ as the solution of \eqref{population_glm}, and assume there exist constants $K^-$ and $K^+$ such that
  \begin{equation} \label{hprime_cond}
      0 < K^- \leq h'(\bm{w}_n) \leq K^+ < \infty
  \end{equation}
  elementwise and uniformly over $n$, when $\bm{w}_n$ is replaced by either $\tau_n$ or $\bar{X}_n\gamma_{n}$.
  Assume the maximum likelihood estimate from fitting a GLM with responses $\vecz(Y)$, and design matrix $(\bar{X}_n~\otimes~\bm{1}_m)$ exists.
  %, and denote it by $(\hat{\gamma}_{1,n}, \hat{\gamma}_{2,n} )^{\tp}$.
  Then as long as  $rcm \rightarrow \infty$,
  $w^{(\mathrm{E})}$ has an asymptotic $\chi^2_{d_rd_c}$ distribution as $n \rightarrow \infty$.
\end{proposition}

The proof of Proposition~\ref{prop:wobs_largen} can be found in Appendix~\ref{sec:app:proof2}; it establishes the asymptotic size of the test $\Psi_{\alpha,U,V}^{(\mathrm{E})}$.
Condition \eqref{hprime_cond} will always hold under the Gaussian edge model since $h'(x) = 1$ for all $x$. The upper bound will always hold under the logistic link binary edge model model, since $h'(x)$ is uniformly bounded.
For other models (e.g., Poisson or gamma) and for the lower bound under the logistic link binary edge model, a sufficient condition for \eqref{hprime_cond} is a uniform (over $n$) bound on $\lVert \mu_n \rVert_{\infty}$ and $\lVert \bar{X}_n\gamma_{n} \rVert_{\infty}$.
It is reasonable to expect a uniform bound on $\lVert \bar{X}_n\gamma_{n} \rVert_{\infty}$ would hold given a uniform bound on $ \lVert \mu_n \rVert_{\infty}$, since by \eqref{population_glm}, $\bar{X}_n\gamma_{n}$ is analogous to a projection of $\mu_n$.

\begin{remark} \label{rem:sparse_logit}
  Under the logistic link binary model, the lower bound in \eqref{hprime_cond} can be relaxed to allow sparsity, where the edge probabilities go to 0 as $n$ increases.
  When $h^{-1}$ is the logistic link,
  $
  h'(\theta) = h(\theta)\{1 - h(\theta)\} \sim h(\theta)
$
  as $\theta \rightarrow -\infty$. Suppose that $\tau_n$ satisfies $h(\tau_n) \sim a_n$ entrywise, for some $a_n \rightarrow 0$.
  If the edge probabilities based on the population GLM coefficients also satisfy $h(\bar{X}_n\gamma_{n}) \sim a_n$ and $a_n \gg (rcm)^{-1/3}$, then the conclusions of Proposition~\ref{prop:wobs_largen} will still hold.
\end{remark}

\ifjasa
\else
\ifjasa
\subsection{The general exponential family edge model with overdispersion}
\else
\subsubsection{The general exponential family edge model with overdispersion}
\fi
\label{subsubsec:theory_od}

In this section, we state asymptotic results which characterize the behavior of $w^{(\mathrm{E-UD})}$ under the mesoscale null hypothesis, in an asymptotic regime similar to Section~\ref{subsubsec:glm_largen}.
As in Section~\ref{subsubsec:glm_largen}, assume the hypothesis set is rectangular with $r$ rows and $c$ columns, mesoscale null hypothesis $\Theta^{(1)}_{\mathcal{S},n} = \Theta^{(2)}_{\mathcal{S},n}$ holds for all $n$, and denote the common matrix of edge expectation parameters by $\Theta_n$.
First, we state a stronger version of  Assumption~\ref{assump:glm} which is sufficient for consistency of $\hat{\phi}^2$, as defined in Section~\ref{subsubsec:glm_test_od}, and justifies that the test defined in Section~\ref{subsubsec:glm_test_od} has asymptotic size $\alpha$.

\begin{assumption} \label{assump:glm_od}
  \hfill
  \begin{enumerate}[{(A)}]
    \item $h$ is strictly increasing and twice continuously differentiable.  %\liza{Presumably this is the inverse link function $h$?}
    \item For all $g=1,2$, $k=1,\ldots,m$, $(i,j) \in \mathcal{S}$, and $n$,
    $\mathbb{E} \left\lvert [A^{(g)}_k]_{ij} - h([\Theta_n]_{ij}) \right\rvert^4 $
    is uniformly bounded.
  \end{enumerate}
\end{assumption}

\begin{lemma} \label{lem:od_consistent}
    Assume the exponential family edge model \eqref{expo_edges} with overdispersion parameter $\phi_{ij} \equiv \phi$ satisfies Assumptions~\ref{assump:theta_est} and~\ref{assump:glm_od}, and there exist consistent estimators of $\tau_n$ which satisfy
$$
  [\widetilde{\Theta}^{(g)}]_{ij} - [\vecz^{-1}(\tau_n)]_{ij} = o_{\prob}(1)
$$
as $n \rightarrow \infty$, for each $g \in \{1,2\}$ and $(ij) \in \mathcal{S}$. Then $\hat{\phi}^2 \inprob \phi^2$, where $\hat{\phi}^2$ is defined in \eqref{phi_hat1}.
\end{lemma}

Combining Lemma~\ref{lem:od_consistent} with Proposition~\ref{prop:wobs_largen} gives the following asymptotic result for the distribution of $w^{(\mathrm{E-UD})}$ with unknown dispersion.

\begin{corollary} \label{cor:wobs_od}
    Assume the exponential family edge model \eqref{expo_edges} with overdispersion parameter $\phi_{ij} \equiv \phi$ satisfies Assumptions~\ref{assump:theta_est} and~\ref{assump:glm_od}, and suppose $\{U_n\}$ and $\{V_n\}$ satisfy Assumption~\ref{assump:coherence}.
  Let $\gamma_n$ be the solution of \eqref{population_glm}, and assume there exist constants $K^-$ and $K^+$ such that
  \begin{equation} %\label{hprime_cond2}
      0 < K^- \leq h'(\bm{w}_n) \leq K^+ < \infty
  \end{equation}
  elementwise and uniformly over $n$, when $\bm{w}_n$ is replaced by either $\tau_n$ or $\bar{X}_n\gamma_{n}$.
  Assume the maximum likelihood estimate from fitting a GLM with responses $\vecz(Y)$, and design matrix $(\bar{X}_n~\otimes~\bm{1}_m)$ exists, and denote it by $(\hat{\gamma}_{1,n}, \hat{\gamma}_{2,n} )^{\tp}$.
  Then as long as the product $rcm \rightarrow \infty$,
  $w^{(\mathrm{E-UD})}$ has an asymptotic $\chi^2_{d_rd_c}$ distribution as $n \rightarrow \infty$.
    % \begin{equation}
    %     \hat{\phi}^{-2} (rcm) \hat{\gamma}_{2,n} \widetilde{G}_n \widetilde{F}_n^{-1} \widetilde{G}_n \hat{\gamma}_{2,n} \indist \chi^2_{d_rd_c},
    % \end{equation}
    % where $\widetilde{F_n}$ and $\widetilde{G_n}$ are defined as in Proposition~\ref{prop:wobs_largen}.
\end{corollary}
\fi

\ifjasa
\subsection{The Gaussian edge model} \label{subsec:theory_gaussian}
\else
\subsubsection{The Gaussian edge model} \label{subsec:theory_gaussian}
\fi

% We redefining some additional notation from Section~\ref{sec:testing}, before stating the two main results under the Gaussian edge model.
% The degrees of freedom parameters are denoted by
% $$
%   \nu_1 = d_rd_c, \quad \nu_2 = 2m(rc - d_rd_c), \quad \nu_2' = 2(m-1)d_rd_c.
% $$
% We also recall
% \begin{equation*}
%   W = \frac{1}{\sqrt{2}} \begin{pmatrix} V \otimes U \\ -(V \otimes U) \end{pmatrix} \in \mathbb{R}^{(2rc) \times (d_rd_c)}.
% \end{equation*}
% Finally, let $Q$ be an orthonormal basis for $\operatorname{col}\{ (V \otimes U) \oplus (V \otimes U) \} \subseteq \mathbb{R}^{2rc}$.
% and $Q_{\perp}$ an orthonormal basis for its orthogonal complement.

Under the Gaussian edge model, we have the following two results, which hold under arbitrary fixed alternatives.%, that is
% \begin{equation*}
%   \mathbb{E}\left( [A^{(g)}_k]_{\mathcal{S}}\right) = \Theta^{(g)}_{\mathcal{S}}
% \end{equation*}
% for $g=1,2$ and $k=1,\ldots,m$.
\begin{proposition} \label{prop:fobs_tilde}
  Under the Gaussian edge model with edge variance $\sigma^2$, $w^{(\mathrm{G})}$ %(see Section~\ref{subsubsec:gaussian_test1})
  % \begin{equation*}
  %   \tilde{w}_{\mathrm{obs}} = \frac{\nu_2 \left\lVert W^{\tp} Y \bm{1}_m \right\rVert_2^2}{(\nu_1 m) \cdot \left\lVert Y - \frac{1}{m} QQ^{\tp} Y \bm{1}_m\bm{1}_m^{\tp} \right\rVert_F^2}.
  % \end{equation*}
  has a doubly non-central $F_{\nu_1,\nu_2}(\psi,\zeta)$ distribution \citepbody{bulgren71representations}, where
  \begin{equation*}
    \hspace{-0.5cm}
    \psi = %\psi(U,V) = 
    \frac{m}{2\sigma^2} \left\lVert U^{\tp} ( \Theta_{\mathcal{S}}^{(1)} - \Theta_{\mathcal{S}}^{(2)}) V \right\rVert_F^2, \quad
    \zeta = 
    %\zeta(U,V) = 
    \frac{m}{2\sigma^2} \left\lVert Q^{\perp}(U,V) \vecz(
    \Theta^{(1)}_{\mathcal{S}} - \Theta^{(2)}_{\mathcal{S}}) \right\rVert_2^2,
  \end{equation*}
  and $Q^{\perp}(U,V)$ is defined in Section~\ref{subsubsec:gaussian_test1}.
\end{proposition}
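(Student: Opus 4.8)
The plan is to exploit that under the Gaussian edge model the inverse link $h$ is the identity, so $\vecz(Y) \sim \mathcal{N}(\bm{1}_m \otimes \mu, \sigma^2 I_{2rcm})$ with $\mu = \vecz\begin{pmatrix}\Theta^{(1)}_{\mathcal{S}} & \Theta^{(2)}_{\mathcal{S}}\end{pmatrix}$. Both the numerator and denominator of $w^{(\mathrm{G})}$ are then (quadratic) forms in this single Gaussian vector, and the whole statement reduces to identifying two independent, possibly non-central, chi-square components. First I would rewrite everything as forms in $\vecz(Y)$ via \eqref{kron_identity}. For the numerator, $W^{\tp} Y \bm{1}_m = (\bm{1}_m^{\tp} \otimes W^{\tp}) \vecz(Y)$; since $U$, $V$ have orthonormal columns so does $V \otimes U$, from which a short computation gives $W^{\tp} W = I_{d_r d_c}$, i.e. $W$ has orthonormal columns. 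For the denominator, writing $P_m = m^{-1} \bm{1}_m \bm{1}_m^{\tp}$ and $P_Q = Q Q^{\tp}$, the fitted matrix is $m^{-1} Q Q^{\tp} Y \bm{1}_m \bm{1}_m^{\tp} = P_Q Y P_m$, so the denominator equals $\lVert (I_{2rcm} - P_m \otimes P_Q) \vecz(Y) \rVert_2^2$, a quadratic form in the orthogonal projection $I_{2rcm} - P_m \otimes P_Q$.

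Next I would identify the two chi-square laws. Because $W$ has orthonormal columns, $W^{\tp} Y \bm{1}_m \sim \mathcal{N}(m W^{\tp}\mu, m\sigma^2 I_{d_r d_c})$, so $\lVert W^{\tp} Y \bm{1}_m\rVert_2^2/(m\sigma^2)$ is $\chi^2_{\nu_1}$ with non-centrality $m \lVert W^{\tp}\mu\rVert_2^2/\sigma^2$. Applying \eqref{kron_identity} gives $W^{\tp}\mu = 2^{-1/2} \vecz\{U^{\tp}(\Theta^{(1)}_{\mathcal{S}} - \Theta^{(2)}_{\mathcal{S}}) V\}$, so this non-centrality equals exactly $\psi$. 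For the denominator, $P_m \otimes P_Q$ is a projection of rank $1 \cdot 2 d_r d_c$, so $I_{2rcm} - P_m \otimes P_Q$ has rank $2rcm - 2d_r d_c = \nu_2$, and the scaled denominator is $\chi^2_{\nu_2}$ with non-centrality $\sigma^{-2}\lVert(I_{2rcm} - P_m\otimes P_Q)(\bm{1}_m\otimes\mu)\rVert_2^2$. Using $P_m \bm{1}_m = \bm{1}_m$, this vector equals $\bm{1}_m \otimes (I_{2rc} - P_Q)\mu$, whose squared norm is $m\lVert(I_{2rc} - QQ^{\tp})\mu\rVert_2^2$, giving precisely $\zeta$.

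The remaining and most delicate step is independence of numerator and denominator, which for jointly Gaussian linear images reduces to a zero cross-covariance, namely $(\bm{1}_m^{\tp}\otimes W^{\tp})(I_{2rcm} - P_m\otimes P_Q) = 0$. Expanding with $(A\otimes B)(C\otimes D) = AC \otimes BD$, this is $\bm{1}_m^{\tp}\otimes W^{\tp} - (\bm{1}_m^{\tp}P_m)\otimes(W^{\tp}P_Q)$. Here $\bm{1}_m^{\tp}P_m = \bm{1}_m^{\tp}$, and the crucial observation is $\operatorname{col}(W) \subseteq \operatorname{col}(Q)$: each column of $W$ has the form $2^{-1/2}(w^{\tp}, -w^{\tp})^{\tp}$ with $w \in \operatorname{col}(V\otimes U)$, which lies in $\operatorname{col}\{(V\otimes U)\oplus(V\otimes U)\} = \operatorname{col}(Q)$; hence $P_Q W = W$, so $W^{\tp}P_Q = W^{\tp}$ and the two terms cancel. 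With independence established, setting $X_1 = \lVert W^{\tp}Y\bm{1}_m\rVert_2^2/(m\sigma^2) \sim \chi^2_{\nu_1}(\psi)$ and $X_2 = \sigma^{-2}\lVert Y - m^{-1}QQ^{\tp}Y\bm{1}_m\bm{1}_m^{\tp}\rVert_F^2 \sim \chi^2_{\nu_2}(\zeta)$, the statistic collapses to $w^{(\mathrm{G})} = (X_1/\nu_1)/(X_2/\nu_2)$, which is by definition doubly non-central $F_{\nu_1,\nu_2}(\psi,\zeta)$. I expect the main obstacle to be bookkeeping rather than a genuine difficulty: correctly recognizing the denominator as the single projection form $\lVert(I_{2rcm} - P_m\otimes P_Q)\vecz(Y)\rVert_2^2$ and verifying the cross-covariance cancellation through $\operatorname{col}(W)\subseteq\operatorname{col}(Q)$; the distributional conclusions then follow from standard Gaussian quadratic-form theory.
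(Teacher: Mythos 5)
Your proposal is correct and follows essentially the same route as the paper's proof: write both numerator and denominator as Gaussian quadratic/linear forms in $\vecz(Y)$, identify the numerator as $\chi^2_{\nu_1}(\psi)$ via the orthonormality of $W$, identify the denominator as a non-central $\chi^2_{\nu_2}(\zeta)$ through the idempotent matrix $I_{2rcm} - (m^{-1}\bm{1}_m\bm{1}_m^{\tp})\otimes QQ^{\tp}$, and establish independence by showing the cross-covariance vanishes because $\operatorname{col}(W)\subseteq\operatorname{col}(Q)$. All the key computations (the mean $mW^{\tp}\mu$, the reduction of $\lVert W^{\tp}\mu\rVert_2^2$ to $\psi$ via the Kronecker identity, and the collapse of the denominator non-centrality to $\zeta$) match the paper's argument.
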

Note that $\psi$ corresponds to the non-centrality in the numerator, and $\zeta$ to the non-centrality in the denominator of the test statistic.
Under the null hypothesis, we have $\psi = 0$ and $\zeta = 0$ for all $U$ and $V$, so that $w^{(\mathrm{G})}$ follows an $F$-distribution.
%with non-centrality parameter $\zeta > 0$. As the non-central $F$-distribution is stochastically monotone in its non-centrality parameter, 
This establishes the size of the test given in Section~\ref{subsubsec:gaussian_test1}.

\begin{proposition} \label{prop:fobs}
  Suppose $m>1$.
  Under the Gaussian edge model with edge variance $\sigma^2$, $w^{(\mathrm{G-P})}$ (see Section~\ref{subsubsec:gaussian_test2})
  % \begin{equation*}
  %   f_{\mathrm{obs}} = \frac{\nu_2' \left\lVert W^{\tp} Y \bm{1}_m \right\rVert_2^2}{(\nu_1 m) \left\lVert Q^{\tp} Y (I_m - \frac{1}{m} \bm{1}_m\bm{1}_m^{\tp}) \right\rVert_F^2}
  % \end{equation*}
  has a non-central $F_{\nu_1,\nu_2'}(\psi)$ distribution, where $\psi$ is defined in Proposition~\ref{prop:fobs_tilde}.
\end{proposition}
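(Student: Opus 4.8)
The plan is to recognize that $w^{(\mathrm{G-P})}$ is exactly the classical two-sample $F$-statistic computed on the projected data vectors, so that the result follows from standard Gaussian sampling theory once the algebra is in place. First I would introduce the projected observations $z_k^{(g)} = (V \otimes U)^{\tp} \vecz([A_k^{(g)}]_{\mathcal{S}})$ for $g=1,2$ and $k=1,\ldots,m$. Since $U$ and $V$ have orthonormal columns, $(V\otimes U)^{\tp}(V\otimes U) = (V^{\tp}V)\otimes(U^{\tp}U) = I_{d_rd_c}$, so under the Gaussian edge model each $z_k^{(g)}$ is Gaussian with mean $\mu_g = (V\otimes U)^{\tp}\vecz(\Theta_{\mathcal{S}}^{(g)})$ and covariance $\sigma^2 I_{d_rd_c}$, independent across $k$ and $g$.

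Next I would rewrite the numerator and denominator in terms of these projections. Writing $y_k^{(g)} = \vecz([A_k^{(g)}]_{\mathcal{S}})$, so that the $k$-th column of $Y$ stacks $y_k^{(1)}$ above $y_k^{(2)}$, a direct computation gives that $W^{\tp}$ applied to this column equals $\tfrac{1}{\sqrt 2}(z_k^{(1)} - z_k^{(2)})$, while taking $Q = (V\otimes U)\oplus(V\otimes U)$ (whose columns are orthonormal and span the required subspace; the statistic is invariant to the choice of basis since Frobenius norm is unchanged under orthogonal left-multiplication), $Q^{\tp}$ applied to this column stacks $z_k^{(1)}$ above $z_k^{(2)}$. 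Summing over $k$ yields $W^{\tp}Y\bm{1}_m = \tfrac{m}{\sqrt2}(\bar z^{(1)} - \bar z^{(2)})$ with $\bar z^{(g)} = m^{-1}\sum_k z_k^{(g)}$, so the numerator is $\tfrac{m^2}{2}\lVert\bar z^{(1)}-\bar z^{(2)}\rVert_2^2$. Since right-multiplication by $I_m - m^{-1}\bm{1}_m\bm{1}_m^{\tp}$ centers the rows, the denominator Frobenius norm equals the pooled within-group sum of squares $\mathrm{SS}_W = \sum_{g,k}\lVert z_k^{(g)} - \bar z^{(g)}\rVert_2^2$. Substituting $\nu_1 = d_rd_c$ and $\nu_2' = 2(m-1)d_rd_c$ and cancelling, $w^{(\mathrm{G-P})}$ reduces to $m(m-1)\lVert\bar z^{(1)}-\bar z^{(2)}\rVert_2^2 / \mathrm{SS}_W$.

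Finally I would identify the two chi-square components and their independence. Because $\bar z^{(1)} - \bar z^{(2)}\sim\mathcal{N}(\mu_1-\mu_2, \tfrac{2\sigma^2}{m}I_{d_rd_c})$, the quantity $\tfrac{m}{2\sigma^2}\lVert\bar z^{(1)}-\bar z^{(2)}\rVert_2^2$ is noncentral $\chi^2_{d_rd_c}$ with noncentrality $\tfrac{m}{2\sigma^2}\lVert\mu_1-\mu_2\rVert_2^2$; by the Kronecker identity \eqref{kron_identity}, $\mu_1-\mu_2 = \vecz\{U^{\tp}(\Theta_{\mathcal{S}}^{(1)}-\Theta_{\mathcal{S}}^{(2)})V\}$, so this noncentrality is exactly $\psi$. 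The within-group deviations are central, contributing $\sigma^{-2}\mathrm{SS}_W\sim\chi^2_{\nu_2'}$, and are independent of the group means by the usual orthogonal-decomposition argument for Gaussian samples (sample means depend on $Y\bm{1}_m$, the centered data on the complementary projection). Rescaling both chi-squares by their degrees of freedom then shows $w^{(\mathrm{G-P})}\sim F_{\nu_1,\nu_2'}(\psi)$.

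I expect the main obstacle to be the bookkeeping in the second step: correctly matching the vectorization order of the stacked response $Y$ against the block structure of $W$ and $Q$, and verifying that the row-centering removes the group-mean contribution so that the denominator is a \emph{central} rather than doubly noncentral chi-square. This last point is exactly what distinguishes $w^{(\mathrm{G-P})}$ from $w^{(\mathrm{G})}$ in Proposition~\ref{prop:fobs_tilde}, whose denominator retains the noncentrality $\zeta$.
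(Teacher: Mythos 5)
Your proof is correct, and it establishes exactly the same three facts as the paper's proof: the numerator $\tfrac{1}{m\sigma^2}\lVert W^{\tp}Y\bm{1}_m\rVert_2^2$ is noncentral $\chi^2_{\nu_1}(\psi)$, the denominator $\tfrac{1}{\sigma^2}\lVert Q^{\tp}YC_m\rVert_F^2$ is central $\chi^2_{\nu_2'}$, and the two are independent. The difference is in how you verify them. The paper works directly with the $2rcm$-dimensional vector $\vecz(Y)$, computing means and covariances of $(\bm{1}_m\otimes W)^{\tp}\vecz(Y)$ and $(C_m\otimes Q)^{\tp}\vecz(Y)$ via Kronecker-product algebra, identifying $C_m\otimes I_{2d_rd_c}$ as an idempotent matrix of rank $\nu_2'$ that annihilates the mean, and checking independence by an explicit zero-covariance calculation. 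You instead pass immediately to the $d_rd_c$-dimensional projected observations $z_k^{(g)}$, show the statistic coincides with the classical two-sample $F$-ratio $m(m-1)\lVert\bar z^{(1)}-\bar z^{(2)}\rVert_2^2/\mathrm{SS}_W$, and then cite standard Gaussian sampling theory (independence of sample means and within-group deviations, and the degrees-of-freedom count $2(m-1)d_rd_c$). Your route makes rigorous the equivalence stated in the remark following the definition of $\Psi^{(\mathrm{G-P})}_{\alpha,U,V}$ and is more elementary once that reduction is in place; the paper's direct computation has the advantage of running in parallel with the proof of Proposition~\ref{prop:fobs_tilde}, where the denominator involves the full (unprojected) residual and the reduction to low-dimensional projected data is not available. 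Your identification of the noncentrality via the Kronecker identity \eqref{kron_identity}, giving $\lVert\mu_1-\mu_2\rVert_2^2=\lVert U^{\tp}(\Theta^{(1)}_{\mathcal{S}}-\Theta^{(2)}_{\mathcal{S}})V\rVert_F^2$, matches the paper's convention for the noncentral $\chi^2$ parameter, so the claimed $\psi$ is the correct one.
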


Under the mesoscale null hypothesis, $\psi=0$ regardless of the choices of $U$ and $V$, which establishes the size of the test given in Section~\ref{subsubsec:gaussian_test2}.
Relative to Proposition~\ref{prop:fobs_tilde}, this size guarantee is robust to certain types of covariance misspecification.
\begin{remark} \label{rem:restricted_cov}
    Under the Gaussian edge model, $\operatorname{Cov}\{\vecz(A_k^{(g)})\} = \sigma^2 I_{rc}$ for $g=1, 2$ and $k=1,\ldots,m$. However, Proposition~\ref{prop:fobs} will continue to hold under the weaker condition $\operatorname{Cov}\{(V \otimes U)^{\tp} \vecz(A_k^{(g)})\} = \sigma^2 I_{d_rd_c}$.
\end{remark}
Propositions~\ref{prop:fobs_tilde} and \ref{prop:fobs} show that test power depends on the model parameters through the non-centrality parameter $\psi$, and hence motivates choices of $U$ and $V$ which maximize $\psi$.
The non-centrality parameter $\psi$ is a special case of the non-centrality parameter $\tilde{\psi}$ defined in Proposition~\ref{prop:wobs}.
%Propositions~\ref{prop:fobs_tilde} and \ref{prop:fobs} show that test power depends on the model parameters only through $\psi$, and hence motivates choices of $U$ and $V$ which maximize $\psi$.

As noted in Section~\ref{subsec:proj_learning}, for any choice of $d = d_r = d_c$, the optimal asymptotic power is the result of choosing projections which correspond to the top $d$ left and right singular vectors of
$
  \Theta^{(1)}_{\mathcal{S}} - \Theta^{(2)}_{\mathcal{S}}.
$
%Which is why we propose $\widehat{\mathbb{T}}$ as a target which can be used estimate the singular subspaces of this difference.
While this oracle power is well-defined for fixed $d$, even with known edge expectation parameters, the optimal choice of $d$ does not necessarily maximize $\psi$, but depends on the trade-off between the non-centrality parameter, and the (numerator) degrees of freedom of the test statistic \citepbody{dasgupta74power}.
%This question of the trade-off between non-centrality and degrees of freedom, including procedures to estimate power from a preliminary sample, is studied in detail in \citebody{dasgupta74power}.

% Additional common/individual information for 
\ifjasa
\else
To help interpret the oracle projections, we consider a particular class of low rank alternatives for ranks $d_0$, $d_1$ and $d_2$ satisfying $d_0+d_1+d_2 \leq \min\{r,c\}$.
Suppose the edge expectation parameter matrices in $\mathcal{S}$ have SVDs given by
$$
  U^{(0)} \operatorname{diag}(\Gamma^{(g)}) (V^{(0)})^{\tp} + U^{(g)} \operatorname{diag}(\Lambda^{(g)}) \left\{ V^{(g)} \right\}^{\tp}.
$$
for $g=1,2$, up to reordering of columns.
That is, the entries $\Gamma^{(g)} \in \mathbb{R}^{d_0}$ and $\Lambda^{(g)} \in \mathbb{R}^{d_g}$ need not be in decreasing order, internally or relative to each other.
Under this parameterization, $U^{(0)}$ and $V^{(0)}$ describe the column and row space structure shared between both samples, while $U^{(g)}$ and $V^{(g)}$ for $g=1,2$ describe structure which is individual to one sample, similar to models for multilayer networks in
\ifblind
\citebody{zhang20flexible}, \citebody{loyal21eigenmodel}, and others.
\else
\citebody{macdonald22latent}, \citebody{zhang20flexible}, \citebody{loyal21eigenmodel}, and others.
\fi
Also suppose that the columns of $U^{(1)}$ and $U^{(2)}$ are mutually orthogonal,
and the same for the right singular vectors.
In this way, the parameterization decomposes both $\mathbb{R}^r$ and $\mathbb{R}^c$ into four mutually orthogonal subspaces: a common part, an individual part for each sample, and a residual part which appears in neither sample's edge expectation parameter matrix.

Under this alternative, we can write the SVD of the difference of edge expectations $\Theta^{(1)}_{\mathcal{S}} - \Theta^{(2)}_{\mathcal{S}}$ (up to sign flips and reordering of columns) as
$$
\begin{pmatrix}
  U^{(0)} & U^{(1)} & U^{(2)}
\end{pmatrix} \left\{ \operatorname{diag}\begin{pmatrix}
  \Gamma^{(1)} - \Gamma^{(2)} & \Lambda^{(1)} & \Lambda^{(2)}
\end{pmatrix} \right\} \begin{pmatrix}
  V^{(0)} & V^{(1)} & -V^{(2)}
\end{pmatrix}^{\tp}. %=: \bar{U}_{L} (m^{-1}\Lambda_{\mathrm{diff}}) \bar{U}_R^{\tp}.
$$
Thus for any choice of $d \leq d_0 + d_1 + d_2$, the optimal asymptotic power is the result of choosing the projection spaces correspond to the top singular subspaces, ordered according to the the entries of
$$
  \begin{pmatrix}
    \lvert \Gamma^{(1)} - \Gamma^{(2)} \rvert & \Lambda^{(1)} & \Lambda^{(2)}
  \end{pmatrix}.
$$
For directions in the space common to both samples, the priority is to include dimensions with different singular values, while directions in the individual spaces are ordered according to the magnitude of their singular values.
In particular, directions in $\operatorname{col}(U^{(0)})$ and $\operatorname{col}(V^{(0)})$ corresponding to large singular values in the means of each sample may cancel in the difference, and thus no longer be useful for discriminating those two samples.
Moreover, even if the expected adjacency matrices are not low rank, if $d_0$ is large and many entries of $\Gamma^{(1)} - \Gamma^{(2)}$ are close to zero, we may still be able to identify low-dimensional projections which capture much of the discriminative signal in the difference.
\fi

\ifjasa
\else
\subsection{Results for learned projections} % under the Gaussian edge model} 
\label{subsec:theory_learned}

%\subsubsection{Gaussian edge model}

In this section, we restrict our attention to the Gaussian edge model and the case $m_1 = m_2 = m$.  In a non-asymptotic setting with a fixed alternative, we will establish a bound on the difference in power between an oracle projection test $\mathcal{H}_{0,\mathcal{S}}$ and the one using learned projections based on an estimator $\widehat{\mathbb{T}}$, as described in  Section~\ref{subsec:proj_learning}.

We present this result in two parts.  First, we state a generic bound on the difference in power between the oracle and learned projection tests given a high probability bound on the subspace errors, and then establish a high probability bound on the subspace errors of our learned projections.
\begin{proposition} \label{prop:power_bound}
  For a fixed alternative, let $U_*$ and $V_*$ denote the oracle $d_r$ and $d_c$ dimensional projections for testing $\mathcal{H}_{0,\mathcal{S}}$.
  Let
  $$
    \psi_{\mathrm{orc}} = \frac{m}{2\sigma^2} \lVert U_*^{\tp} \left( \Theta^{(1)}_{\mathcal{S}} - \Theta^{(2)}_{\mathcal{S}} \right) V_* \rVert_F^2
  $$
  denote the oracle non-centrality parameter for $w^{(\mathrm{G-P})}$ with dimensions $d_r$ and $d_c$. Let
  $$
    \psi_{\mathrm{max}} = \frac{m}{2\sigma^2} \lVert \Theta^{(1)}_{\mathcal{S}} - \Theta^{(2)}_{\mathcal{S}} \rVert_F^2,
  $$
  the maximum non-centrality parameter for any choice of dimension.
  For some $\alpha \in (0,1)$, denote the type II error rate of the level $\alpha$ oracle projection test by $\beta_{\mathrm{orc}}$ and the type II error rate of the level $\alpha$ learned projection test by $\beta_{\mathrm{learn}}$.

  Suppose the learned projection test is performed with orthonormal bases $\hat{U} \in \mathbb{R}^{r \times d_r}$ and $\hat{V} \in \mathbb{R}^{c \times d_c}$ which are independent of the test statistics, and satisfy
  % \begin{align*}
  %   \min_{O \in \mathcal{O}_{d_r}} \lVert \hat{U} - UO \rVert_2 \leq \epsilon_L, \\
  %   \min_{O \in \mathcal{O}_{d_c}} \lVert \hat{V} - VO \rVert_2 \leq \epsilon_R.
  % \end{align*}
  \begin{equation} \label{subspace_cond}
    \min_{O \in \mathcal{O}_{d_r}} \lVert \hat{U} - U_*O \rVert_2 + \min_{O \in \mathcal{O}_{d_c}} \lVert \hat{V} - V_*O \rVert_2 \leq \epsilon < \min\left\{ \sqrt{\frac{\psi_{\mathrm{orc}}}{\psi_{\mathrm{max}}}}, \frac{1}{\sqrt{\psi_{\mathrm{orc}}\psi_{\mathrm{max}}}} \right\} %\leq 1
  \end{equation}
  with probability at least $1-\xi$.
  Then
  % $$
  %   \frac{1}{2}\psi_{\mathrm{orc}} \left\{ 1 - (1 - \epsilon_L - \epsilon_R)^2 \right\}(1 - \xi) + \xi.
  % $$
  $$
    \beta_{\mathrm{orc}} \leq \beta_{\mathrm{learn}} \leq \left( \frac{1}{1 - \epsilon \sqrt{\psi_{\mathrm{orc}}\psi_{\mathrm{max}}}} \right) \beta_{\mathrm{orc}} + \xi.
  $$
\end{proposition}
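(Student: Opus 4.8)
The plan is to condition on the learned projections and exploit their independence from the hypothesis-set data, so that Proposition~\ref{prop:fobs} applies verbatim. Writing $D = \Theta^{(1)}_{\mathcal{S}} - \Theta^{(2)}_{\mathcal{S}}$ and conditioning on $(\hat U, \hat V)$, the statistic $w^{(\mathrm{G\text{-}P})}$ has a non-central $F_{\nu_1,\nu_2'}(\hat\psi)$ distribution with $\hat\psi = \frac{m}{2\sigma^2}\lVert \hat U^{\tp} D \hat V\rVert_F^2$. Let $\beta(\psi)$ denote the type II error of the level-$\alpha$ test as a function of the non-centrality, and $c_\alpha = F_{\nu_1,\nu_2',1-\alpha}$ the critical value; the overall learned-projection type II error is then $\beta_{\mathrm{learn}} = \mathbb{E}_{\hat U,\hat V}[\beta(\hat\psi)]$. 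Because the oracle projections maximize the non-centrality over all orthonormal rank-$(d_r,d_c)$ projections (as established in Section~\ref{subsec:proj_learning}), we have $\hat\psi \le \psi_{\mathrm{orc}}$ deterministically; since $\beta$ is decreasing in $\psi$ (shown next), this immediately gives the lower bound $\beta_{\mathrm{orc}} \le \beta_{\mathrm{learn}}$.

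For the upper bound I would establish a multiplicative comparison between $\beta(\psi_1)$ and $\beta(\psi_2)$ for $\psi_1 \le \psi_2$, using the Poisson-mixture representation of the non-central $F$. With $J \sim \mathrm{Poisson}(\psi/2)$ we have $\beta(\psi) = \sum_{j\ge 0} e^{-\psi/2}\tfrac{(\psi/2)^j}{j!}\, g_j$, where $g_j = \mathbb{P}\{(\chi^2_{\nu_1+2j}/\nu_1)/(\chi^2_{\nu_2'}/\nu_2')\le c_\alpha\}$ is a $\psi$-free quantity that is decreasing in $j$ (more numerator degrees of freedom stochastically inflate the ratio). Comparing mixing weights termwise, their ratio $e^{(\psi_2-\psi_1)/2}(\psi_1/\psi_2)^j$ is maximized at $j=0$, so since $g_j \ge 0$ we obtain $\beta(\psi_1) \le e^{(\psi_2-\psi_1)/2}\beta(\psi_2)$; the monotonicity of $g_j$ in $j$ separately yields that $\beta$ is decreasing. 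Applying this with $\psi_1 = \hat\psi$ and $\psi_2 = \psi_{\mathrm{orc}}$ reduces the problem to controlling the gap $\psi_{\mathrm{orc}} - \hat\psi$.

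Next I would translate the subspace error \eqref{subspace_cond} into a lower bound on $\hat\psi$. Writing $\hat U = U O_U + E_U$ and $\hat V = V O_V + E_V$ for the optimal orthogonal alignments, with $\lVert E_U\rVert_2 = \epsilon_U$, $\lVert E_V\rVert_2 = \epsilon_V$, and $\epsilon_U + \epsilon_V \le \epsilon$, I would expand $\hat U^{\tp} D \hat V$ and bound the three cross-terms via $\lVert ABC\rVert_F \le \lVert A\rVert_2 \lVert B\rVert_F \lVert C\rVert_2$ together with $\lVert U\rVert_2 = \lVert V\rVert_2 = 1$. This gives $\lVert \hat U^{\tp} D \hat V\rVert_F \ge \lVert U^{\tp} D V\rVert_F - \epsilon\lVert D\rVert_F$, up to the second-order remainder $\epsilon_U\epsilon_V\lVert D\rVert_F$ whose size is controlled by the condition on $\epsilon$. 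Squaring and scaling, and using the identities $\tfrac{m}{2\sigma^2}\lVert U^{\tp} D V\rVert_F\lVert D\rVert_F = \sqrt{\psi_{\mathrm{orc}}\psi_{\mathrm{max}}}$ and $\tfrac{m}{2\sigma^2}\lVert D\rVert_F^2 = \psi_{\mathrm{max}}$, yields $\psi_{\mathrm{orc}} - \hat\psi \le 2\epsilon\sqrt{\psi_{\mathrm{orc}}\psi_{\mathrm{max}}}$ on the event where \eqref{subspace_cond} holds; the first part of the condition, $\epsilon < \sqrt{\psi_{\mathrm{orc}}/\psi_{\mathrm{max}}}$, is exactly what keeps $\lVert U^{\tp} D V\rVert_F - \epsilon\lVert D\rVert_F$ nonnegative so the squaring is valid.

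Finally I would assemble the pieces. The last two steps combine to give $\beta(\hat\psi) \le e^{\epsilon\sqrt{\psi_{\mathrm{orc}}\psi_{\mathrm{max}}}}\beta_{\mathrm{orc}}$ on the high-probability event, and the elementary inequality $e^x \le (1-x)^{-1}$ for $x\in(0,1)$ — applicable because the second part of \eqref{subspace_cond} forces $\epsilon\sqrt{\psi_{\mathrm{orc}}\psi_{\mathrm{max}}} < 1$ — converts this into the stated factor. Splitting $\beta_{\mathrm{learn}} = \mathbb{E}[\beta(\hat\psi)]$ over the event $\mathcal{E}$ on which \eqref{subspace_cond} holds (probability at least $1-\xi$) and its complement, and bounding $\beta(\hat\psi)\le 1$ on $\mathcal{E}^c$, produces the additive $\xi$ term and completes the argument. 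I expect the main obstacle to be the perturbation step: tracking the cross-terms so that the gap $\psi_{\mathrm{orc}}-\hat\psi$ emerges precisely as $\sqrt{\psi_{\mathrm{orc}}\psi_{\mathrm{max}}}$ scaled by $\epsilon$, and checking that the two-part condition on $\epsilon$ is exactly what simultaneously guarantees nonnegativity of the square root and finiteness of the $(1-\epsilon\sqrt{\psi_{\mathrm{orc}}\psi_{\mathrm{max}}})^{-1}$ factor.
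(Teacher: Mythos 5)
Your proposal is correct in overall structure and reaches the paper's bound, but it replaces the paper's two key technical ingredients with different (and arguably cleaner) arguments. Where the paper bounds $\beta_{\mathrm{orc}}-\beta_{\mathrm{learn}}^{\hat U,\hat V}$ via the mean value theorem together with a derivative bound $\lvert\Phi'(z)\rvert\leq\tfrac12\Phi(z)$ on the non-central $F$ CDF (Lemma~\ref{lem:ncf_cdf}, proved by term-by-term differentiation of the Poisson mixture), you obtain the multiplicative inequality $\beta(\psi_1)\leq e^{(\psi_2-\psi_1)/2}\beta(\psi_2)$ directly by comparing mixture weights termwise; this is precisely the integrated form of the paper's derivative bound, and combined with $e^x\leq(1-x)^{-1}$ it yields the identical constant while sidestepping the uniform-convergence bookkeeping in Lemma~\ref{lem:ncf_cdf}. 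Likewise, where the paper vectorizes and invokes Lemma~\ref{lem:on_kron} to control $\lVert(\hat V\otimes\hat U)-(V\otimes U)O^*\rVert_2$, you perturb $\hat U^{\tp}D\hat V$ in matrix form. Both routes then agree on $\psi_{\mathrm{orc}}-\hat\psi\leq 2\epsilon\sqrt{\psi_{\mathrm{orc}}\psi_{\mathrm{max}}}$ and on the final integration over the event $\mathcal{E}$.

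One small repair is needed in your perturbation step. Expanding $(UO_U+E_U)^{\tp}D(VO_V+E_V)$ into three cross-terms leaves a residual $\epsilon_U\epsilon_V\lVert D\rVert_F$, so your lower bound is really $\lVert U^{\tp}DV\rVert_F-(\epsilon_U+\epsilon_V+\epsilon_U\epsilon_V)\lVert D\rVert_F$, and $\epsilon_U+\epsilon_V+\epsilon_U\epsilon_V$ is not bounded by $\epsilon$; the condition on $\epsilon$ in \eqref{subspace_cond} does not absorb this term, so as written you would end up with a slightly worse factor than the one claimed. The fix is to telescope instead of fully expanding: write $\hat U^{\tp}D\hat V-O_U^{\tp}U^{\tp}DVO_V=\hat U^{\tp}D(\hat V-VO_V)+(\hat U-UO_U)^{\tp}DVO_V$, and use $\lVert\hat U\rVert_2=\lVert VO_V\rVert_2=1$ to get the clean bound $\epsilon\lVert D\rVert_F$ with no second-order term (this is exactly the trick in the paper's Lemma~\ref{lem:on_kron}, transplanted to the matrix form). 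With that one-line change the rest of your argument goes through verbatim.
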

As one would expect, if the subspace errors are small with high probability, the power of the tests using oracle and learned projections will be similar.

To complete the main result of this section, we will now prove a high probability bound on the subspace errors of our learned projections.
We first recall some of the notation introduced in Section~\ref{subsec:proj_learning}.
We split $\Theta^{(1)} - \Theta^{(2)}$ into four blocks
\begin{equation*}
  \Theta^{(1)} - \Theta^{(2)} = \begin{pmatrix}
    \Theta^{(1)}_{\mathcal{C}} - \Theta^{(2)}_{\mathcal{C}} & \Theta^{(1)}_{\mathcal{S}} - \Theta^{(2)}_{\mathcal{S}} \\
    \Theta^{(1)}_{\mathcal{D}} - \Theta^{(2)}_{\mathcal{D}} & \Theta^{(1)}_{\mathcal{R}} - \Theta^{(2)}_{\mathcal{R}}
  \end{pmatrix},
\end{equation*}
with SVD
\begin{equation*}
  U_{\mathrm{diff}}S_{\mathrm{diff}}V_{\mathrm{diff}}^{\tp} = \begin{pmatrix}
    U_{[r]}S_{\mathrm{diff}}V_{-[c]}^{\tp} & U_{[r]}S_{\mathrm{diff}}V_{[c]}^{\tp} \\
    U_{-[r]}S_{\mathrm{diff}}V_{-[c]}^{\tp} & U_{-[r]}S_{\mathrm{diff}}V_{[c]}^{\tp}
  \end{pmatrix},
\end{equation*}
and assume that this matrix has rank $d_* < \min\{r,c\}$.
Denote the maximum and minimum entries of $S$ by $s_{\mathrm{max}}$ and $s_{\mathrm{min}}$ respectively.

We also make an assumption on the left and right singular vectors which is intermediate to the conditions assumed in Propositions~\ref{prop:onestep_weak} and \ref{prop:onestep_strong}.
\begin{assumption} \label{assump:spectral_conditioning}   Assume that 
  \begin{align*}
    \rho_U I_{d_*} \preceq \left( U_{[r]} \right)^{\tp} U_{[r]} \preceq \kappa_U I_{d_*}, \ \ 
    \rho_V I_{d_*} \preceq \left( V_{[c]} \right)^{\tp} V_{[c]} \preceq \kappa_V I_{d_*}.
  \end{align*}
  for $0 < \rho_U \leq \kappa_U < 1$ and $0 < \rho_V \leq \kappa_V < 1$.
\end{assumption}

Finally, we formally define our plug-in estimator of $\mathbb{T}$,
\begin{equation}
  \widehat{\mathbb{T}} = (\widehat{\Theta}^{(1)}_{\mathcal{C}} - \widehat{\Theta}^{(2)}_{\mathcal{C}})\left(\left[ \widehat{\Theta}^{(1)}_{\mathcal{D}} - \widehat{\Theta}^{(2)}_{\mathcal{D}} \right]_{(d_*)} \right)^{\dagger}(\widehat{\Theta}^{(1)}_{\mathcal{R}} - \widehat{\Theta}^{(2)}_{\mathcal{R}}),
\end{equation}
where $[ \cdot ]_{(d^*)}$ denotes the best rank $d_*$ approximation of its matrix argument, and
\begin{equation} \label{block_means}
  \widehat{\Theta}^{(g)}_{\mathcal{I}} = \frac{1}{m} \sum_{k=1}^m [A_k^{(g)}]_{\mathcal{I}}
\end{equation}
for $g=1,2$,  $\mathcal{I} \in \{\mathcal{C},\mathcal{D},\mathcal{R}\}$.
The learned projections will have orthonormal bases $\hat{U}$ and $\hat{V}$ given by the leading $d_*$ left and right singular vectors of $\widehat{\mathbb{T}}$.
Note that by Assumption~\ref{assump:spectral_conditioning} and Proposition~\ref{prop:onestep_weak}, the leading $d_*$ left and right singular vectors of $\mathbb{T}$ are orthonormal bases for the oracle projections.

%We will state our subspace perturbation bound in two parts.
First, we state a simultaneous high probability bound on the error of each factor of $\widehat{\mathbb{T}}$, before stating a high probability bound on the resulting subspace error.
\begin{lemma} \label{lem:operator_norms}
  Assume the Gaussian edge model holds.
  For integers $p,q \geq 1$, define
  $$
    \delta(p,q) = \frac{15}{2\log(3/2)} \cdot \frac{\sqrt{\log(\min\{p,q\})}}{\sqrt{p} + \sqrt{q}}.
  $$
  Note that $\delta(p,q) < 5$ for all $p,q \geq 1$, and $\delta(p,q) \rightarrow 0$ as $\max\{p,q\} \rightarrow \infty$.

  Define
  % \begin{align*}
  %   \xi = 2\bigg( &\exp\left\{ - \frac{\sigma^2}{8m} (\sqrt{r} + \sqrt{n-c})^2 \right\} \\
  %   + &\exp\left\{  - \frac{\sigma^2}{8m} (\sqrt{n-r} + \sqrt{c})^2 \right\} \\ + &\exp\left\{  - \frac{\sigma^2}{8m} (\sqrt{n-r} + \sqrt{n-c})^2 \right\} \bigg).
  % \end{align*}
  \begin{equation*}
    \hspace{-.5cm}
    \xi = 2\bigg( \exp\left\{ - \frac{\sigma^2}{8m} (\sqrt{r} + \sqrt{n-c})^2 \right\}
     + \exp\left\{  - \frac{\sigma^2}{8m} (\sqrt{n-r} + \sqrt{c})^2 \right\}  + \exp\left\{  - \frac{\sigma^2}{8m} (\sqrt{n-r} + \sqrt{n-c})^2 \right\} \bigg).
  \end{equation*}
  Then, with probability at least $1-\xi$, we have that
  \begin{align*}
    \lVert (\widehat{\Theta}^{(1)}_{\mathcal{C}} - \widehat{\Theta}^{(2)}_{\mathcal{C}}) - (\Theta^{(1)}_{\mathcal{C}} - \Theta^{(2)}_{\mathcal{C}}) \rVert_2 &\leq 2\{2 + \delta(r,n-c)\} \sigma \frac{\sqrt{r} + \sqrt{n-c}}{\sqrt{m}} =: \epsilon_{\mathcal{C}}, \\
    \lVert (\widehat{\Theta}^{(1)}_{\mathcal{R}} - \widehat{\Theta}^{(2)}_{\mathcal{R}}) - (\Theta^{(1)}_{\mathcal{R}} - \Theta^{(2)}_{\mathcal{R}}) \rVert_2 &\leq 2\{2 + \delta(n-c,r)\} \sigma \frac{\sqrt{n-c} + \sqrt{r}}{\sqrt{m}} =: \epsilon_{\mathcal{R}}, \\
    \left\lVert \left[ \widehat{\Theta}^{(1)}_{\mathcal{D}} - \widehat{\Theta}^{(2)}_{\mathcal{D}} \right]_{(d_*)} -  ( \Theta^{(1)}_{\mathcal{D}} - \Theta^{(2)}_{\mathcal{D}} ) \right\rVert_2 &\leq 4\{2 + \delta(n-r,n-c)\} \sigma \frac{\sqrt{n-r} + \sqrt{n-c}}{\sqrt{m}} =: \epsilon_{\mathcal{D}}.
  \end{align*}
  simultaneously over blocks $\mathcal{C}$, $\mathcal{R}$, and $\mathcal{D}$.
\end{lemma}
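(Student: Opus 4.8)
The plan is to exploit the fact that, under the Gaussian edge model, each block estimator in \eqref{block_means} is a sample mean, so the centered block errors are Gaussian random matrices whose operator norms can be controlled by a single non-asymptotic concentration inequality, after which a union bound over the three blocks delivers the simultaneous statement.

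First I would introduce, for each block $\mathcal{I}\in\{\mathcal{C},\mathcal{R},\mathcal{D}\}$, the centered error matrix
\begin{equation*}
  E_{\mathcal{I}} = (\widehat{\Theta}^{(1)}_{\mathcal{I}} - \widehat{\Theta}^{(2)}_{\mathcal{I}}) - (\Theta^{(1)}_{\mathcal{I}} - \Theta^{(2)}_{\mathcal{I}}).
\end{equation*}
Under the Gaussian edge model each entry of $\widehat{\Theta}^{(g)}_{\mathcal{I}}$ is an average of $m$ independent $\mathcal{N}([\Theta^{(g)}]_{ij},\sigma^2)$ variables, so each entry of $E_{\mathcal{I}}$ is an independent $\mathcal{N}(0, 2\sigma^2/m)$ variable, the variance doubling because the two group means are independent. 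The blocks $\mathcal{C}$, $\mathcal{R}$, $\mathcal{D}$ index disjoint sets of node pairs of dimensions $r\times(n-c)$, $(n-r)\times c$, and $(n-r)\times(n-c)$ respectively, so $E_{\mathcal{C}}$, $E_{\mathcal{R}}$, $E_{\mathcal{D}}$ are independent Gaussian matrices of these sizes, though independence is not actually needed for the final union bound.

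Next I would apply a non-asymptotic upper bound on the operator norm of a $p\times q$ random matrix $W$ with independent $\mathcal{N}(0, 2\sigma^2/m)$ entries, of the form
\begin{equation*}
  \lVert W \rVert_2 \leq 2\{2 + \delta(p,q)\}\,\sigma\,\frac{\sqrt{p}+\sqrt{q}}{\sqrt{m}}
\end{equation*}
off an event of the block-specific probability appearing in $\xi$. I would establish this via the standard $\epsilon$-net discretization of the unit spheres in $\mathbb{R}^p$ and $\mathbb{R}^q$, combined with Gaussian tail bounds for the scalar projections $u^{\tp} W v$ over a net, choosing the net radius so that the covering cardinality produces the constant $15/\{2\log(3/2)\}$ and the $\sqrt{\log(\min\{p,q\})}$ numerator in $\delta(p,q)$; carefully tracking these constants (and the leading factor of two in $\xi$ coming from the two-sided tail) is the delicate part of the argument. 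Applying this bound directly to $E_{\mathcal{C}}$ and $E_{\mathcal{R}}$ yields $\epsilon_{\mathcal{C}}$ and $\epsilon_{\mathcal{R}}$ with their stated failure probabilities.

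It remains to handle block $\mathcal{D}$, whose estimator is the rank-truncated $[\widehat{\Theta}^{(1)}_{\mathcal{D}} - \widehat{\Theta}^{(2)}_{\mathcal{D}}]_{(d_*)}$. The key observation is that $\Theta^{(1)}_{\mathcal{D}} - \Theta^{(2)}_{\mathcal{D}}$ is a submatrix of the rank-$d_*$ matrix $\Theta^{(1)}-\Theta^{(2)}$ and hence has rank at most $d_*$. Writing $M = \Theta^{(1)}_{\mathcal{D}} - \Theta^{(2)}_{\mathcal{D}}$ and $\widehat{M} = \widehat{\Theta}^{(1)}_{\mathcal{D}} - \widehat{\Theta}^{(2)}_{\mathcal{D}}$, the triangle inequality together with the Eckart--Young/Weyl identity $\lVert [\widehat{M}]_{(d_*)} - \widehat{M} \rVert_2 = \sigma_{d_*+1}(\widehat{M}) \leq \lVert \widehat{M} - M \rVert_2$, valid since $\operatorname{rank}(M)\leq d_*$ forces $\sigma_{d_*+1}(M)=0$, gives
\begin{equation*}
  \lVert [\widehat{M}]_{(d_*)} - M \rVert_2 \leq \lVert [\widehat{M}]_{(d_*)} - \widehat{M} \rVert_2 + \lVert \widehat{M} - M \rVert_2 \leq 2\lVert E_{\mathcal{D}} \rVert_2,
\end{equation*}
which accounts for the extra factor of two in $\epsilon_{\mathcal{D}}$. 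Applying the Gaussian operator-norm bound to $E_{\mathcal{D}}$ and then taking a union bound over the three blocks produces the simultaneous statement with total failure probability $\xi$. The main obstacle is the explicit constant bookkeeping in the net-based concentration inequality that yields the precise form of $\delta(p,q)$ and the stated probabilities; the block-$\mathcal{D}$ truncation is a short but essential refinement, and the union bound is routine.
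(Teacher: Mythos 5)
Your overall architecture matches the paper's: a non-asymptotic operator-norm bound for each Gaussian error block, the triangle inequality plus Eckart--Young to absorb the rank-$d_*$ truncation on block $\mathcal{D}$ (using that $\Theta^{(1)}_{\mathcal{D}}-\Theta^{(2)}_{\mathcal{D}}$ inherits rank at most $d_*$), and a union bound over the three blocks. Two of your choices differ harmlessly from the paper. The paper bounds $\lVert \widehat{\Theta}^{(g)}_{\mathcal{I}}-\Theta^{(g)}_{\mathcal{I}}\rVert_2$ separately for $g=1,2$ (each an i.i.d.\ Gaussian matrix with entry variance $\sigma^2/m$), takes a union bound over $g$ --- which is where the leading factor of $2$ in each term of $\xi$ actually comes from, not a two-sided tail --- and combines via the triangle inequality, which is where the factor of $2$ in $\epsilon_{\mathcal{C}}$, $\epsilon_{\mathcal{R}}$ comes from. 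Your route of treating the combined error $E_{\mathcal{I}}$ as a single Gaussian matrix with entry variance $2\sigma^2/m$ would give a $\sqrt{2}$ in place of that $2$, which still implies the stated bounds, so this is fine (indeed slightly sharper), though your bookkeeping for $\xi$ would need to be redone.

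The genuine gap is the concentration inequality itself. You propose to prove $\lVert W\rVert_2 \leq \{2+\delta(p,q)\}\,\tilde\sigma(\sqrt{p}+\sqrt{q})$ by an $\epsilon$-net discretization and defer the ``constant bookkeeping.'' A standard net argument cannot deliver a leading constant of $2$: discretizing both unit spheres costs a multiplicative factor like $(1-2\epsilon)^{-1}$ or $(1-\epsilon)^{-2}$ against a covering number $(1+2/\epsilon)^{p+q}$, and optimizing over $\epsilon$ leaves you with a constant strictly larger than $2$ (typically around $3$ for a high-probability bound). The sharp constant $2$ together with the specific correction $\delta(p,q)=\tfrac{15}{2\log(3/2)}\cdot\tfrac{\sqrt{\log(\min\{p,q\})}}{\sqrt{p}+\sqrt{q}}$ is exactly the content of \cite{bandeira16sharp}, Corollary 3.11 (the $\log(3/2)$ arises from setting their free parameter $\epsilon=1/2$ in a moment-method bound, not from a net radius), and the paper simply invokes that result as a black box (its Lemma~\ref{lem:bandeira}) rather than re-deriving it. As written, the step you flag as ``delicate'' is not merely delicate but unachievable by the method you describe; you should instead cite the Bandeira--van Handel bound (or an equivalent sharp Gaussian operator-norm inequality), after which the rest of your argument goes through.
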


%\liza{Is it possible to postpone the definitions of $\epsilon^{(1)}_T$ through $\epsilon^{(6)}_T$ to the proof, and just state the main claim and say something like ``where $\epsilon_{T}$ is a constant independent of ..."?   I am not sure it is possible given that $s_{\min}$ and $s_{\max}$ and $1-\rho_U$ and $1 - \rho_V$ appear everywhere, but if it doesn't matter exactly whether they are squared or not - which it does seem to - then why not at least make the bound slightly looser and put them all over the smallest version of their similar denominators and absorb that denominator into the claim?}

\begin{proposition} \label{prop:subspace_error}
  Under the Gaussian edge model, suppose the difference in expected adjacency matrices has rank $d_*$ and Assumption~\ref{assump:spectral_conditioning} holds, so that  $U_{[r]}$ and $V_{[c]}$ are the oracle projections of dimensions $d_r = d_c = d_*$ (see Proposition~\ref{prop:power_bound}). 
  Denote $\xi$, $\epsilon_{\mathcal{C}}$, $\epsilon_{\mathcal{R}}$, and $\epsilon_{\mathcal{D}}$ as in Lemma~\ref{lem:operator_norms}.  
  Suppose that
  \begin{equation} \label{D_err}
    2\epsilon_{\mathcal{D}} \leq s_{\mathrm{min}} \sqrt{(1-\rho_U)(1-\rho_V)}.
  \end{equation}
  Define
  \begin{align*} \label{epsilon_T}
    \epsilon_{\mathbb{T}} &= \sum_{j=1}^7 \epsilon_{\mathbb{T}}^{(j)}, \\
    \epsilon_{\mathbb{T}}^{(1)} &= \epsilon_{\mathcal{C}}\epsilon_{\mathcal{R}}\frac{1}{s_{\mathrm{min}}}, \quad
    &\epsilon_{\mathbb{T}}^{(2)} &= 4 \epsilon_{\mathcal{C}} \epsilon_{\mathcal{R}} \epsilon_{\mathcal{D}}\frac{1}{s_{\mathrm{min}}^2}, \\
    \epsilon_{\mathbb{T}}^{(3)} &=  \rho_V^{1/2} \epsilon_{\mathcal{C}} \frac{s_{\mathrm{max}}}{s_{\mathrm{min}}}, \quad
    &\epsilon_{\mathbb{T}}^{(4)} &= \rho_U^{1/2} \epsilon_{\mathcal{R}} \frac{s_{\mathrm{max}}}{s_{\mathrm{min}}}, \\
    \epsilon_{\mathbb{T}}^{(5)} &= 4 \rho_U^{1/2} \epsilon_{\mathcal{C}} \epsilon_{\mathcal{D}} \frac{s_{\mathrm{max}}}{s_{\mathrm{min}}^2},\quad
    &\epsilon_{\mathbb{T}}^{(6)} &= 4 \rho_V^{1/2} \epsilon_{\mathcal{R}} \epsilon_{\mathcal{D}} \frac{s_{\mathrm{max}}}{s_{\mathrm{min}}^2}, \\
    \epsilon_{\mathbb{T}}^{(7)} &= 4  (\rho_U\rho_V)^{1/2} \epsilon_{\mathcal{D}} \frac{s_{\mathrm{max}}^2 }{s_{\mathrm{min}}^2}. & &
  \end{align*}

  Then with probability at least $\xi$,
  $$
    \min_{O \in \mathcal{O}_{d_*}} \lVert \hat{U} - U_{[r]}O \rVert_2 + \min_{O \in \mathcal{O}_{d_*}} \lVert \hat{V} - V_{[c]}O \rVert_2 \leq \frac{6 \epsilon_{\mathbb{T}}}{(1-\rho_U)^2(1-\rho_V)^2} \left( \frac{s_{\mathrm{max}}}{s_{\mathrm{min}}^2}\right),
  $$
  where $\hat{U}$ and $\hat{V}$ are the leading $d_*$ left and right singular vectors of $\widehat{\mathbb{T}}$.
\end{proposition}
As $\widehat{\mathbb{T}}$ is independent of the test statistics, the main result of this section is obtained by plugging Proposition~\ref{prop:subspace_error} into Proposition~\ref{prop:power_bound}.
Although this result is fully non-asymptotic, to understand the bounds, we can consider an asymptotic regime where the overall network size $n$ grows, while $m$, $r$, $c$, and $\sigma$ remain fixed.
In this case, if the singular vectors satisfy a coherence condition similar to condition \eqref{coherence_cond} of Assumption~\ref{assump:coherence}, \citepbody{cape19twotoinfinity}, we would expect $\rho_U$ and $\rho_V$ to have asymptotic order $1/n$,
with $\psi_{\mathrm{orc}} = \psi_{\mathrm{max}}$, and $\psi_{\mathrm{orc}}$ asymptotically bounded.
Note that for fixed $m$ and $\sigma$, $\xi$ will go to zero exponentially in $n$, so all that remains for convergence in type II error is that \eqref{D_err} holds for sufficiently large $n$, and
\begin{equation} \label{epsilon_T_cond}
  \frac{s_{\mathrm{max}} \epsilon_{\mathbb{T}}}{s_{\mathrm{min}}^2} \rightarrow 0
\end{equation}
as $n \rightarrow \infty$.
In this regime, the asymptotic leading term of $\epsilon_{\mathbb{T}}$ will be either $\epsilon_{\mathbb{T}}^{(1)}$ or $\epsilon_{\mathbb{T}}^{(2)}$.
Both \eqref{D_err} and \eqref{epsilon_T_cond} are satisfied if the difference in adjacency matrices is well conditioned (i.e., $s_{\mathrm{max}}/s_{\mathrm{min}}$ is asymptotically bounded), and its singular values dominate $\sqrt{n}$ asymptotically, similar to a spiked singular value condition often seen in random matrix theory 
\ifjasa
\citepsupp{gavish17optimal}.
\else 
\citepbody{gavish17optimal}.
\fi
\fi

\section{Experiments on simulated and real data} \label{sec:simulations}

\ifjasa
In this section, we apply our projection tests to mesoscale network testing with synthetic networks from the logistic link binary edge model with both known and unknown dispersion parameters.
Simulation results for synthetic networks from the Gaussian edge model are given in Appendix~\ref{subsec:gaussian_sims}
\else
In this section, we apply our projection tests to mesoscale network testing with synthetic networks from the Gaussian edge model with unknown edge variance $\sigma^2$, and the logistic link binary edge model with both known and unknown dispersion parameters.
\fi
\ifjasa
\else
We verify the size of our testing procedures under any choice of projection dimension $d$, as well as their empirical power properties compared to classical tests which ignore the network structure, and bootstrap tests based on a latent space model, which are expensive to calibrate, and are not robust to model misspecification.
\fi
%We then apply our tests to cell-cell interaction (CCI) networks between patients exhibiting moderate or severe coronavirus disease 2019 (COVID-19) symptoms.
We then apply our tests to compare functional magnetic resonance imaging (fMRI) brain images between patients with Parkinson's disease and healthy controls.

\ifjasa
\else
\subsection{Synthetic Gaussian edge networks} \label{subsec:gaussian_sims}

In this section, we generate synthetic networks from a Gaussian edge model on $n=100$ nodes, with variance $\sigma^2=50$.
We test the equality of a $20 \times 30$ off-diagonal rectangle with nominal level $\alpha=0.05$,
and vary $m$, the number of networks per sample from $1$ to $10$.
For each of $500$ independent replications, the expected adjacency matrices are generated according to the following setup.
\begin{enumerate}
  %\ditem Gaussian edges, with variance $\sigma^2 = 50$.
  \item The network is generated from a $3$-dimensional inner product latent space model, with
  $\Theta^{(g)} = X^{(g)}[Y^{(g)}]^{\tp}$. The entries of $X^{(g)}$ and $Y^{(g)}$ are Gaussian random variables.
  %\ditem Hypothesis set: $10 \times 30$ off-diagonal rectangle
  %$$\mathcal{S} = \{(i,j) : i=1,\ldots,10, j = 21,\ldots,50\}.$$
  \item In the null setting,
  \begin{align*}
    &X_i^{(1)} \sim \mathcal{N}(0,I_3), &\quad &X_i^{(2)} = X_i^{(1)} &\quad &(i=1,\ldots,20), \\
    &X_i^{(1)} \sim \mathcal{N}(0,I_3), &\quad &X_i^{(2)} \sim \mathcal{N}(0,I_3) &\quad &(i=21,\ldots,100), \\
    &Y_i^{(1)} \sim \mathcal{N}(0,I_3), &\quad &Y_i^{(2)} \sim \mathcal{N}(0,I_3) &\quad &(i=1,\ldots,70), \\
    &Y_i^{(1)} \sim \mathcal{N}(0,I_3), &\quad &Y_i^{(2)} = Y_i^{(1)} &\quad &(i=71,\ldots,100). \\
  \end{align*}
  \item In the alternative setting,
  \begin{align*}
    &X_i^{(1)} \sim \mathcal{N}(0,I_3), &\quad &X_i^{(2)} = X_i^{(1)} + \mathcal{N}(0,\frac{1}{3\sqrt{2}}I_3) &\quad &(i=1,\ldots,20), \\
    &X_i^{(1)} \sim \mathcal{N}(0,I_3), &\quad &X_i^{(2)} \sim \mathcal{N}(0,I_3) &\quad &(i=21,\ldots,100), \\
    &Y_i^{(1)} \sim \mathcal{N}(0,I_3), &\quad &Y_i^{(2)} \sim \mathcal{N}(0,I_3) &\quad &(i=1,\ldots,70), \\
    &Y_i^{(1)} \sim \mathcal{N}(0,I_3), &\quad &Y_i^{(2)} = Y_i^{(1)} + \mathcal{N}(0,\frac{1}{3\sqrt{2}}I_3) &\quad &(i=71,\ldots,100). \\
  \end{align*}
  %\liza{ I *think* I know  what's happening here but I better let you rewrite it,  to make sure it's correct.  Instead of first defining the common parts and then separately modifying what's different for the null and the alternative, just define the null case first fully and then the alternative fully.  It won't be any longer but will be a lot less confusing.  }
\end{enumerate}
This setup produces networks with exactly low rank edge expectation matrices in both the null and alternative cases.
Moreover, since the positions are generated with spherical covariance, the conditions of Proposition~\ref{prop:onestep_strong} should hold in expectation, and thus $\widehat{\mathbb{T}}$ should provide a good estimate of the oracle projections for any choice of dimension.

The first competing method (``Basic'') is a baseline approach which ignores the network structure, discards the edges in $-\mathcal{S}$, and applies the classical $F$-test of equality of means to the two vectorized samples,
$$
  \{ \operatorname{vec}([A_k^{(1)}]_{\mathcal{S}}) \}_{k=1}^m, \tabby \{ \operatorname{vec}([A_k^{(2)}]_{\mathcal{S}}) \}_{k=1}^m.
$$
Note that this test is only possible for $m > 1$.
In these settings with relatively small $m$, this amounts to $600$ univariate comparisons, each based on at most $20$ Gaussian samples.
%This is an omnibus comparison of the vectorized mean, so
While this test does pool the univariate comparisons, it does no dimension reduction.

The second competing method (``Posn-$p$'') is a bootstrap test based on an inner product latent space model for the networks in each sample.
For $p=2,3,4$, we estimate expected adjacency matrices for each group under both the null and alternative mesoscale hypotheses and assuming a $p$-dimensional latent space model.
Under the alternative, $\widehat{\Theta}^{(1)}$ and $\widehat{\Theta}^{(2)}$ amount to rank $p$ truncation of the mean adjacency matrix for each sample.
Under the null we fit low rank estimates $\widetilde{\Theta}^{(1)}$ and $\widetilde{\Theta}^{(2)}$ constrained to have $\widetilde{\Theta}^{(1)}_{\mathcal{S}} = \widetilde{\Theta}^{(2)}_{\mathcal{S}}$.
Our implementation of this constrained estimation is described in Appendix~\ref{app:admm}.

Then, similar to the ``Boot-EPA'' test described in \citebody{ghoshdastidar18practical}, we calculate a Frobenius norm statistic
$
    \lVert \widehat{\Theta}^{(1)}_{\mathcal{S}} - \widehat{\Theta}^{(2)}_{\mathcal{S}} \rVert_F,
$
and calibrate a cutoff based on $500$ replications under a parametric bootstrap, fit using the estimated expected adjacency matrices and edge variance under the null. % $\widetilde{\Theta}^{(1)}$ and $\widetilde{\Theta}^{(2)}$.
In this setting, the models with $p \geq 3$ are correctly specified, although the model with $p=4$ is overparameterized.

Finally, for $m>1$ we apply our projection tests with statistic $w^{(\mathrm{G-P})}$ (``OrcProj'', ``Proj-$d$'') using both oracle and learned left and right projections.
For $m=1$ we instead use the projection test with statistic $w^{(\mathrm{G})}$.
Under the oracle, the left and right projections are the $6$-dimensional singular subspaces of the true difference $\Theta^{(1)}_{\mathcal{S}} - \Theta^{(2)}_{\mathcal{S}}$, while our learned projections will be based on the leading $d$-dimensional singular subspaces of $\widehat{\mathbb{T}}$, for $d=4,6,8$.
We implement these projection tests with twice the dimensions since this is the analogous dimension to the one used in the position model bootstrap test.
The results for all of these methods are shown in Figure~\ref{gauss1}.

\begin{figure}
\twoImages{gaussian_1}{gaussian_2}
\caption{Rejection rate for Gaussian edge networks with inner product latent space model. Dashed lines correspond to $\alpha=0.05$ and $1$. 
Point colors correspond to $2p$ or $d$, with basic tests displayed in black and oracle projection tests in orange.}
\label{gauss1}
\end{figure}

In the left panel of Figure~\ref{gauss1}, we compare the methods under the null.
As expected, the basic and all the projection tests control the type I error at the nominal level.
On the other hand, the position model bootstrap tests have at best asymptotic control---the rejection rate for $p=3$ appears to be decreasing towards the nominal level, but this is not the case for the misspecified model with $p=2$ or the overparameterized model with $p=4$.

In the right panel of Figure~\ref{gauss1}, we compare the methods under the alternative.
The position model bootstrap tests have the highest empirical power, but this is not comparable given their issues maintaining control of the type I error rate.
The projection tests show increased power compared to the basic test, especially as $m$ grows and the learned projections approach the oracle.
The rejection rates of the tests based on learned projections are also fairly insensitive to the choice of dimension.

As a second comparison, we demonstrate the efficacy of our method when the expected adjacency matrices are not exactly low rank, but instead have low effective rank.
Similar to the first setting, we test the equality of a $20 \times 30$ off-diagonal rectangle with nominal level $\alpha=0.05$, in Gaussian edge networks on $n=100$ nodes, and again varying $m$ from $1$ to $10$.
We generate the expected adjacency matrices according to the following setup.%\liza{Rewrite this in the same way as the first one.}
\begin{enumerate}
  %\ditem Gaussian edges, with variance $\sigma^2 = 50$.
  \item The network is generated from a $3$-dimensional Euclidean distance latent space model with
  $
    \Theta^{(g)}_{ij} = \lVert X^{(g)}_i - Y^{(g)}_j \rVert_2
  $. The entries of $X^{(g)}$ and $Y^{(g)}$ are Gaussian random variables.
  %\ditem Hypothesis set: $10 \times 30$ off-diagonal rectangle
  %$$\mathcal{S} = \{(i,j) : i=1,\ldots,10, j = 21,\ldots,50\}.$$
  \item In the null setting,
  \begin{align*}
    &X_i^{(1)} \sim \mathcal{N}(0,I_3), &\quad &X_i^{(2)} = X_i^{(1)} &\quad &(i=1,\ldots,20), \\
    &X_i^{(1)} \sim \mathcal{N}(0,I_3), &\quad &X_i^{(2)} \sim \mathcal{N}(0,I_3) &\quad &(i=21,\ldots,100), \\
    &Y_i^{(1)} \sim \mathcal{N}(0,I_3), &\quad &Y_i^{(2)} \sim \mathcal{N}(0,I_3) &\quad &(i=1,\ldots,70), \\
    &Y_i^{(1)} \sim \mathcal{N}(0,I_3), &\quad &Y_i^{(2)} = Y_i^{(1)} &\quad &(i=71,\ldots,100). \\
  \end{align*}
  \item In the alternative setting,
  \begin{align*}
    &X_i^{(1)} \sim \mathcal{N}(0,I_3), &\quad &X_i^{(2)} = X_i^{(1)} + \mathcal{N}(0,\frac{\sqrt{2}}{3}I_3) &\quad &(i=1,\ldots,20), \\
    &X_i^{(1)} \sim \mathcal{N}(0,I_3), &\quad &X_i^{(2)} \sim \mathcal{N}(0,I_3) &\quad &(i=21,\ldots,100), \\
    &Y_i^{(1)} \sim \mathcal{N}(0,I_3), &\quad &Y_i^{(2)} \sim \mathcal{N}(0,I_3) &\quad &(i=1,\ldots,70), \\
    &Y_i^{(1)} \sim \mathcal{N}(0,I_3), &\quad &Y_i^{(2)} = Y_i^{(1)} + \mathcal{N}(0,\frac{\sqrt{2}}{3}I_3) &\quad &(i=71,\ldots,100). \\
  \end{align*}
  %\liza{ I *think* I know  what's happening here but I better let you rewrite it,  to make sure it's correct.  Instead of first defining the common parts and then separately modifying what's different for the null and the alternative, just define the null case first fully and then the alternative fully.  It won't be any longer but will be a lot less confusing.  }
\end{enumerate}

% \begin{itemize}
%   %\ditem Gaussian edges, with variance $\sigma^2 = 50$.
%   \ditem $3$ dimensional unsquared distance latent space model
%   $$
%     \Theta^{(g)}_{ij} = \lVert X^{(g)}_i - X^{(g)}_j \rVert_2.
%   $$
%   %\ditem Hypothesis set: $10 \times 30$ off-diagonal rectangle
%   %$$\mathcal{S} = \{(i,j) : i=1,\ldots,10, j = 21,\ldots,50\}.$$
%   \ditem i.i.d standard Gaussian positions,
%   \begin{align*}
%   X_i^{(1)} &\sim \mathcal{N}(0,I_3) \tabby (i=1,\ldots,100) \\
%   X_i^{(2)} &\sim \mathcal{N}(0,I_3) \tabby (i=11,\ldots,70).
%   \end{align*}
%   \ditem Signal controlled by the difference in positions
%   \begin{itemize}
%     \ditem Null case: $X_i^{(2)} = X_i^{(1)}$ for $i=1,\ldots,10,71,\ldots,100$.
%     \ditem Alternative case: $X_i^{(2)} = X_i^{(1)} + \mathcal{N}(0,\frac{2}{3}I_3)$ for $i=1,\ldots,10,71,\ldots,100$.
%   \end{itemize}
% \end{itemize}

In this setting, the expected adjacency matrices have low dimensional, nonlinear structure.
However, they still exhibit a low effective rank, with visual inspection of a scree plot showing $5$ spiked singular values.
We apply the same methods, but for the projection tests the learned projections are the left and right singular subspaces of a rank $d$ imputation of the mean difference in adjacency matrices, after removing the edges in $\mathcal{S}$ \citepbody{mazumder10spectral}.
The results are shown in Figure~\ref{gauss2}.

\begin{figure}
\twoImages{gaussian_3}{gaussian_4}
\caption{Rejection rate for Gaussian edge networks with Euclidean distance latent space model. Dashed lines correspond to $\alpha=0.05$ and $1$. 
Point colors correspond to $2p$ or $d$, with basic tests displayed in black and oracle projection tests in orange.}
\label{gauss2}
\end{figure}

In the left panel of Figure~\ref{gauss2}, we compare the methods under the null.
As before, the basic and all the projection tests exactly control type I error at the nominal level, but now all of the position model bootstrap tests are misspecified, and nearly always reject under the null.

In the right panel of Figure~\ref{gauss2}, we compare the methods under the alternative.
Even though the expected adjacency matrices are not exactly low rank, the projection tests show increased power compared to the basic test.
Although the scree plot for $\Theta^{(g)}$ would suggest a $5$-dimensional low rank model for each sample, the highest rejection rate for the projection test is with $d=4$, decreasing as $d$ grows.
This implies that the dimensions corresponding to the smaller spiked singular values either cannot be reliably recovered due to the noise, or are no longer spiked after taking the difference between the adjacency matrices.

\fi

\subsection{Synthetic binary edge networks} \label{subsec:binary_sims}

To evaluate mesoscale two-sample comparisons of networks with binary edges, we test two different mesoscale hypotheses %with nominal level $\alpha=0.05$ 
using our test based on $w^{(\mathrm{E})}$ and its competitors.  For the first example, we generate synthetic logistic link binary edge networks, which falls under the general exponential family edge model case, with support $\{0,1\}$, dispersion $\phi=1$, and inverse link function $h(x) = e^x/(1 + e^x)$.
We vary $m$ from $1$ to $10$ and fix $n=100$ nodes.

The mesoscale hypothesis is about the connections between the first $r =20$ and the last $c=30$ nodes, and thus the set $\mathcal{S}$ is a $20 \times 30$ rectangle.
For each of $500$ independent replications, we generate the network from a $2$-dimensional inner product latent space model with
  $h^{-1}\left( \mathbb{E}A_1^{(g)} \right) = \Theta^{(g)} = X^{(g)}[Y^{(g)}]^{\tp}$. The entries of $X^{(g)}$ and $Y^{(g)}$ are generated as i.i.d.\ $\mathcal{N}(0,1)$ for all $i$ for $g=1$, and for all but the following exceptions for $g = 2$:      
\begin{enumerate}
  %\ditem Gaussian edges, with variance $\sigma^2 = 50$.
  %\ditem Hypothesis set: $10 \times 30$ off-diagonal rectangle
  %$$\mathcal{S} = \{(i,j) : i=1,\ldots,10, j = 21,\ldots,50\}.$$
  \item Under the null,  $X_i^{(2)} = X_i^{(1)}$ for $i=1,\ldots,20$, and $Y_i^{(2)} = Y_i^{(1)}$ for $i=71,\ldots,100$.
  \item Under the alternative, $X_i^{(2)} = X_i^{(1)} + \mathcal{N}(0,16^{-1}I_2)$  for $i=1,\ldots,20$, and  $Y_i^{(2)} = Y_i^{(1)} + \mathcal{N}(0,16^{-1}I_2)$ for $i=71,\ldots,100$.
  % \item In the null setting,
  % \begin{align*}
  %   &X_i^{(1)} \sim \mathcal{N}(0,I_2), &\quad &X_i^{(2)} = X_i^{(1)} &\quad &(i=1,\ldots,20), \\
  %   &X_i^{(1)} \sim \mathcal{N}(0,I_2), &\quad &X_i^{(2)} \sim \mathcal{N}(0,I_2) &\quad &(i=21,\ldots,100), \\
  %   &Y_i^{(1)} \sim \mathcal{N}(0,I_2), &\quad &Y_i^{(2)} \sim \mathcal{N}(0,I_2) &\quad &(i=1,\ldots,70), \\
  %   &Y_i^{(1)} \sim \mathcal{N}(0,I_2), &\quad &Y_i^{(2)} = Y_i^{(1)} &\quad &(i=71,\ldots,100). \\
  % \end{align*}
  % \item In the alternative setting,
  % \begin{align*}
  %   &X_i^{(1)} \sim \mathcal{N}(0,I_2), &\quad &X_i^{(2)} = X_i^{(1)} + \mathcal{N}(0,\frac{1}{4\sqrt{2}}I_2) &\quad &(i=1,\ldots,20), \\
  %   &X_i^{(1)} \sim \mathcal{N}(0,I_2), &\quad &X_i^{(2)} \sim \mathcal{N}(0,I_2) &\quad &(i=21,\ldots,100), \\
  %   &Y_i^{(1)} \sim \mathcal{N}(0,I_2), &\quad &Y_i^{(2)} \sim \mathcal{N}(0,I_2) &\quad &(i=1,\ldots,70), \\
  %   &Y_i^{(1)} \sim \mathcal{N}(0,I_2), &\quad &Y_i^{(2)} = Y_i^{(1)} + \mathcal{N}(0,\frac{1}{4\sqrt{2}}I_2) &\quad &(i=71,\ldots,100). \\
  % \end{align*}
\end{enumerate}
This setup produces edge expectation parameter matrices with rank $2$; non-zero differences in edge expectation parameter matrices have rank $4$.  The expected adjacency matrices $h(\Theta^{(g)})$ for $g=1,2$  also have low effective rank, with visual inspection of a scree plot showing three spiked singular values.  

We compare the following methods.  The ``Basic'' approach performs $600$ univariate (two-sided) binomial proportion tests for all the node pairs in $\mathcal{S}$. %, without a continuity correction.
This gives us 600 independent test statistics asymptotically distributed as $\chi^2_1$, so the test is conducted by comparing their sum to the appropriate quantile of the $\chi^2_{600}$ distribution.
When $m=1$, this test is trivial and will never reject the null hypothesis.

The second competing method (``Posn-$p$'') is a bootstrap test based on an inner product latent space model for the networks in each sample.
For $p=2, 3, 4$, we estimate expected adjacency matrices for each group under both the null and alternative mesoscale hypotheses and assuming a $p$-dimensional inner product latent space model.
 Ignoring the link function, this corresponds to a projection test with $d = 2p$, as we assume each expected adjacency matrix has rank $p$, so their differences have rank at most $2p$.
Under the alternative, $\widehat{\Theta}^{(1)}$ and $\widehat{\Theta}^{(2)}$ are the rank $p$ truncations of the mean adjacency matrices for each sample.
Under the null, we compute rank $p$ estimates $\widehat{\Theta}^{(1,0)}$ and $\widehat{\Theta}^{(2,0)}$ constrained to have $\widehat{\Theta}^{(1,0)}_{\mathcal{S}} = \widehat{\Theta}^{(2,0)}_{\mathcal{S}}$.
Our implementation of this constrained estimation is described in Appendix~\ref{app:admm}.
Then, similar to the ``Boot-EPA'' test of \citebody{ghoshdastidar18practical}, we calculate a Frobenius norm statistic $\lVert \widehat{\Theta}^{(1)}_{\mathcal{S}} - \widehat{\Theta}^{(2)}_{\mathcal{S}} \rVert_F$, and calibrate a cutoff based on $500$ replications under a parametric bootstrap, resampling from a model fit with estimated expected adjacency matrices under the null. % $\widetilde{\Theta}^{(1)}$ and $\widetilde{\Theta}^{(2)}$.
%\else The position model bootstrap tests for $p=2,3,4$ are performed exactly as for the Gaussian edge model, but with the parametric bootstrap generating binary edge networks.  \fi
While the inner product latent space model does not match the model likelihood, it is common in the RDPG literature to apply such spectral approaches directly to the adjacency matrix, under a semiparametric signal-plus-noise assumption.

Finally, we apply our projection test based on $w^{(\mathrm{E})}$ with both oracle and learned projections of dimension $d = 4, 6, 8$ (``OrcProj'', ``Proj-$d$'').
Recall that $d=4$ is the correct rank of the difference in edge expectation matrices, and is used for the oracle projection.
In the computation of the test statistic, we use the regularized estimator defined in \eqref{theta_bayes} to compute $\widehat{F}$.  
%In this special case, the binomial variance is maximized for proportion $1/2$, so for each node pair we shrink the pooled means towards $1/2$ by a constant shift of $1/(4m)$.
%For small values of $m$, this helps control the type I error rate by increasing the correction for misspecification made by $\widehat{F}^{-1}$.  \liza{This needs a better explanation.  Why is there a misspecification, and is this just for the simulation or do we always recommend doing this when using this test? }
The learned projections are based on an estimate of $\mathbb{T}$, after plugging in estimates of the matrix blocks learned from rank $d$ truncation of $h^{-1}\{(m+2)^{-1}\sum_k (A_k^{(g)} + 1)\}$ for $g=1, 2$, where the probability estimates are regularized as in \eqref{theta_bayes} to avoid evaluating $h^{-1}$ at $0$ or $1$.
%\liza{Where do these thresholds come from?}

\begin{figure}
\twoImages{binary_1}{binary_2}
\caption{Rejection rates for binary networks modeled with logistic link and the inner product latent space model. Dashed lines correspond to $\alpha=0.05$ and $1$. 
  Point colors correspond to $2p$ or $d$, with basic tests displayed in black and oracle projection tests in orange.}
\label{logit1}
\end{figure}

In the left panel of Figure~\ref{logit1}, we compare test performance under the null.
We see that our projection tests empirically control the type I error at the nominal level for any choice of projection dimension.
%The good performance of the asymptotic approximation in Proposition~\ref{prop:wobs}, even for very small $m$, suggests that these approximations may also hold for growing $r$ and $c$, when the projection dimension of the test remains small.
%This asymptotic approximation appears to be better for smaller values and $\underline{d}$, and overall in the bottom left panel when the set of hypothesis node pairs is larger.
The position model bootstrap test empirically controls the type I error for $p=3$ but fails to come anywhere close to the nominal level for either $p=2$ or $p=4$.
The right panel of Figure~\ref{logit1} shows power (rejection rate under the alternative).
%While the rejection rates are not comparable for low values of $m$ when the asymptotic approximation is inaccurate, for larger values of $m$
The projection tests are substantially more powerful than the basic test for any choice of projection dimension, with $d=4$ giving the best empirical power. 
The bootstrap tests achieves the highest power, but given that it only controls type I error for exactly the true latent dimension, one would be reluctant to recommend using it in practice.  
%The oracle projection test power is competitive with the position-based bootstrap test with $\underline{d}=3$, the only position-based test that actually controls the type I error at the nominal level.

To investigate robustness properties of our projection tests, we also consider the case of overdispersed binary edges.
\ifjasa
We generate expected adjacency matrices as before, but introduce overdispersion with a beta-binomial model.
For $m$ from $3$ to $10$, we generate binary edge variables with marginal means $h(\Theta^{(g)}_{ij})$ and variances $2 h(\Theta^{(g)}_{ij})\{1 - h(\Theta^{(g)}_{ij})\}$ \citepbody{ahn95generation}.
\else
In the context of logistic regression with grouped data \citepbody{mccullagh83generalized}, overdispersion can occur when there is positive correlation among the trials in each group, such that the sum of those trials has variance greater than what would be expected from a binomial distribution with the same mean.
In our specific setting, the ``grouping'' comes from the $m$ binary edges for each fixed $(g,i,j)$ triple for $g=1,2$ and $1 \leq i,j \leq n$.

We generate expected adjacency matrices as in the previous binary edge logistic link model comparison, but introduce overdispersion with a beta-binomial model.

For $m > 1$ and a constant (over node pairs) overdisperion parameter $\eta \in [1,m)$, we generate edge variables by sampling
\begin{align*}
  B_{ij,g} &\sim \operatorname{Beta}\left( \frac{h(\Theta^{(g)}_{ij})(m - \eta)}{\eta - 1}, \frac{\{1-h(\Theta^{(g)}_{ij})\}(m - \eta)}{\eta - 1}\right), \\
  \sum_{k=1}^m [ A_k^{(g)} ]_{ij} &\sim \operatorname{Binomial}(m, B_{ij,g}).
\end{align*}
then choosing a configuration of $1$'s and $0$'s for the $m$ layers uniformly at random.
\citebody{ahn95generation} show that this hierarchical sampling strategy produces edges which are independent over $(g,i,j)$ triples,
with dispersion parameter $\eta$.
and satisfy
\begin{equation*}
  \mathbb{E}( [ A_k^{(g)} ]_{ij} ) = h(\Theta^{(g)}_{ij}), \tabby
  \operatorname{Var}\left( \sum_{k=1}^m [ A_k^{(g)} ]_{ij} \right) = \eta m h(\Theta^{(g)}_{ij}) \{ 1 - h(\Theta^{(g)}_{ij})\}
\end{equation*}
for $g=1,2$, $k=1,\ldots,m$ and node pairs $(i,j)$.
In this simulation we set $\eta=2$, and vary $m$ from $3$ to $10$.
\fi
We then apply the projection test described in Section~\ref{subsubsec:glm_test_od}, using the modified statistic $w^{(\mathrm{E-UD})}$ (``OrcProjUD'', ``ProjUD-$d$'').
For comparison we also apply the usual projection test based on $w^{(\mathrm{E})}$, and the same competing basic test and the position model bootstrap test with $p=3$.
For all of the projection tests, we use a learned $4$-dimensional projection.
The results are shown in Figure~\ref{logit2}.  

\begin{figure}
\twoImages{binary_3}{binary_4}
\caption{Rejection rates for binary edge logistic link networks with inner product latent space model and overdispersion parameter $\eta=2$. Dashed lines correspond to $\alpha=0.05$ and $1$. 
Point colors correspond to $2p$ or $d$, with basic tests displayed in black and oracle projection tests in orange.} 
\label{logit2}
\end{figure}

In the left panel of Figure~\ref{logit2}, we see that other than the projection tests, all the methods nearly always reject the mesoscale null hypothesis, suggesting that accounting for overdispersion is  necessary to control type I error.  
The right panel of Figure~\ref{logit2} shows results under the alternative.  The modified projection tests have good power, with the power of the learned projection test approaching that of the oracle projection test as $m$ grows. %\liza{Maybe you should label them to match tests -- E, E-UD -- instead of or in addition to ``Proj"?  It's also confusing because you had "UD" for unknown dispersion before, and now it's become ``UD" for overdispersion.   }
%With overdispersion parameter $\eta = 2$, the effective sample size in terms of $m$ is essentially halved compared to independent Bernoulli trials, and after making this adjustment, the rejection rates under this model are very close to the corresponding rates in Figure~\ref{logit1}. 

\subsection{An application to neuroimaging data} \label{subsec:mri_data}

In this section, we apply our projection tests to mesoscale network testing for neuroimaging data, which was
%\liza{What is this reference if the data comes from Badea?}
originally analyzed by \citebody{badea17exploring}, and is publicly accessible in its preprocessed form \citepbody{xu23data}.
The data consists of functional magnetic resonance imaging (fMRI) brain images for $40$ subjects: $20$ healthy controls and $20$ with Parkinson's disease. There are $17$ female and $23$ male subjects, with mean age around $65$ years.
Each weighted, undirected edge corresponds to a Fisher-transformed (variance stabilized) correlation between activation patterns of two brain regions, parcellated according to a standard automatic anatomical labeling of 116 regions \citepbody[AAL116,][]{tzourio02automated}, which identifies the brain system these regions belong to.  
%The R package \texttt{brainGraph} contains information about the physical location of each of these AAL116 regions, and categorizes the regions by lobe, which we use to formulate mesoscale null hypotheses.
%We use information about the physical location of each of these AAL116 regions to categorizes the regions by lobe and formulate mesoscale null hypotheses.
Research on Parkinson's disease has identified cognitive impairment in tasks related to the frontal lobe (FL, comprised of 28 regions / nodes) \citepbody{taylor86frontal}, as well as dysfunction in the cerebellum (CBM, 26 nodes) \citepbody{wu13cerebellum}.
We will test three mesoscale hypotheses: changes in the mean functional connectivity within these two lobes of interest (``FL/FL'', ``CBM/CBM''), as well as between them (``FL/CBM'').

\begin{comment}
We denote these hypothesis sets, as well as the held out node pairs by \liza{This equation takes up a lot of space and I think can be omitted without losing anything, along with the next equation.  Don't you just need to say that $\mathcal{S}$ consists of the blocks corresponding to FL/FL, CBM/CBM, and FL/CBM, and then you just apply the method the way we have already described? }
\begin{equation} \label{neuro_blocks}
    \begin{pNiceMatrix}[c]
    & \Block[draw,fill=lightgray]{1-1}{\mathrm{FL/FL}} & \Block[draw,fill=lightgray]{1-1}{\mathrm{FL/CBM}} & \Block[draw]{2-4}{\mathcal{R}} & & & & \\
    & \Block[draw,fill=lightgray]{1-1}{\mathrm{FL/CBM}^{\tp}} & \Block[draw,fill=lightgray]{1-1}{\mathrm{CBM/CBM}} & & & & & \\
    & \Block[draw]{1-2}{\mathcal{R}^{\tp}} & & \Block[draw]{1-4}{\mathcal{D}} & & & & \\
    %& & & & & & & \\
    %& & & & & & & \\
    %& & & & & & & \\
  \end{pNiceMatrix}.
\end{equation}
\end{comment}

We apply our mesoscale projection testing approach with learned projections for Gaussian edge networks (``Proj''), as well as the basic $F$-test (``Basic''), a projection test using uniform random orthonormal matrices from the Stiefel manifold (``RandProj''), and a test which compares the edge mean over the matrix block corresponding to each mesoscale test (``Block''), equivalent to the projection test with $1$-dimensional projection onto the vector with all entries proportional to $1$. To learn projections of a given dimension $d=d_r=d_c$, we will use submatrices of the eigenvectors of $\mathbb{T}$, as defined in \eqref{T_matrix}, where we hold out edges incident to either the FL or CBM regions. 
By symmetry, $\Theta_{\mathcal{C}}^{(g)} = \{\Theta_{\mathcal{R}}^{(g)}\}^{\tp}$ for $g=1 ,2$.
When we estimate $\mathbb{T}$, we replace each submatrix by its sample mean. 
% \begin{equation*}
%   (\widehat{\Theta}^{(1)}_{\mathcal{R}} - \widehat{\Theta}^{(2)}_{\mathcal{R}}) \left( \left[ \widehat{\Theta}^{(1)}_{\mathcal{D}} - \widehat{\Theta}^{(2)}_{\mathcal{D}} \right]_{(d)} \right)^{-1}(\widehat{\Theta}^{(1)}_{\mathcal{R}} - \widehat{\Theta}^{(2)}_{\mathcal{R}})^{\tp}
% \end{equation*}
%where the subscripts correspond to the blocks in \eqref{neuro_blocks}, estimated by their sample means.

Visual inspection of a scree plot 
leads us to select $d=6$ as the dimension of our learned projections.
However, for both learned and uniform random projections, we also evaluate sensitivity to the projection dimension, showing results for $d = 2, 6 ,10$. 
The projection testing results for this selected dimension $d=6$ are bolded in Table~\ref{tab:pval_table_mri}.
This sensitivity analysis is purely for illustrative purposes---in practice selection of $d$, like the choice of projections, should be independent of the edges in the hypothesis set to preserve test validity.
The results for each hypothesis test are reported in Table~\ref{tab:pval_table_mri}.
%in the form of \citebody{holm79simple} corrected $p$-values, as well as the rejection decision at the $\alpha=0.05$ level.
%This Holm correction is done internally to each row of the table.
For the uniform random projections, we report the median $p$-value and rejection rate over $100$ replications.
\begin{table}
\begin{center}
\small
\begin{tabular}{|l|lll|}
\hline
             & \multicolumn{3}{c|}{$p$-value (or median $p$-value)}                                                          \\ \hline
             {\bf Test} & \multicolumn{1}{l|}{{\bf FL/FL}} & \multicolumn{1}{l|}{{\bf CBM/CBM}} & {\bf FL/CBM} \\ \hline
Basic        & \multicolumn{1}{l|}{0.453}  & \multicolumn{1}{l|}{$4.0 \times 10^{-3}$}  & $7.8 \times 10^{-5}$  \\ \hline
Proj ($d=2$)     & \multicolumn{1}{l|}{0.097} & \multicolumn{1}{l|}{0.016} & $6.5 \times 10^{-4}$    \\ \hline
%Proj ($d=4$)      & \multicolumn{1}{l|}{0.031} & \multicolumn{1}{l|}{0.013} & $2.8 \times 10^{-3}$    \\ \hline
{\bf Proj ($\bm{d=6}$)}     & \multicolumn{1}{l|}{{\bf 0.069}} & \multicolumn{1}{l|}{$\bm{3.1 \times 10^{-3}}$} & $\bm{1.1 \times 10^{-4}}$   \\ \hline
%Proj ($d=8$)     & \multicolumn{1}{l|}{0.097} & \multicolumn{1}{l|}{$3.6 \times 10^{-3}$} & $1.1 \times 10^{-5}$ \\ \hline
Proj ($d=10$)   & \multicolumn{1}{l|}{0.093} & \multicolumn{1}{l|}{$7.3 \times 10^{-3}$} & $6.9 \times 10^{-6}$     \\ \hline
RandProj ($d=2$)  & \multicolumn{1}{l|}{0.553}  & \multicolumn{1}{l|}{0.467}  & 0.429  \\ \hline
%RandProj ($d=4$)  & \multicolumn{1}{l|}{0.365}  & \multicolumn{1}{l|}{0.341}  & 0.257  \\ \hline
RandProj ($d=6$)  & \multicolumn{1}{l|}{0.348}  & \multicolumn{1}{l|}{0.210}  & 0.215   \\ \hline
%RandProj ($d=8$)  & \multicolumn{1}{l|}{0.425}  & \multicolumn{1}{l|}{0.173}  & 0.171   \\ \hline
RandProj ($d=10$) & \multicolumn{1}{l|}{0.541}  & \multicolumn{1}{l|}{0.151}  & 0.077  \\ \hline
Block & \multicolumn{1}{l|}{0.073}  & \multicolumn{1}{l|}{0.279}  & 0.215 \\ \hline
\end{tabular}
\caption{$p$-values for fMRI data mesoscale testing. For uniform random projections, the median $p$-value and rejection rate over $100$ replications is reported. Results for the heuristically selected projection dimension ($d=6$) are bolded.}
\label{tab:pval_table_mri}
\end{center}
\end{table}

Table~\ref{tab:pval_table_mri} shows that the basic $F$-test does not find the connections within the FL/FL hypothesis set to be significantly different across groups, while the differences in the other two rectangles are found to be highly significant, at the level $\alpha = 0.01$.
The mesoscale projection-based tests generally agree with these conclusions, but the $p$-value for the FL/FL set are much smaller, and significant at $\alpha= 0.1$.  The results are also not  highly sensitive to the selected projection dimension.
The random projections do not find anything significant, showing that the learned projections are indeed identifying differential structure.
The naive block pooling gives a comparable $p$-value to the learned projection tests for the FL/FL hypothesis set, but for the other two hypotheses (CBM/CBM and FL/CBM) it is not significant, and most of the signal appears to be lost in this crude summarization.
Our proposed mesoscale tests adaptively balance dimension reduction and signal aggregation, detecting differences in all three sets of connections, which agrees with previuos findings that these lobes show damage or dysfunction in subjects with Parkinson's disease \citepbody{taylor86frontal,wu13cerebellum}.

We additionally demonstrate the sensitivity of these testing approaches to the number of networks in each sample.
For $\tilde m=4, 8, 12, 16$, we form subsamples by selecting $\underline{m}$ networks without replacement from each of the two populations.
We then test the mesoscale null hypothesis for the CBM/CBM hypothesis set, which was rejected by both the basic and projection tests on the full data.
Both the learned and uniform random projection tests use dimension $d=6$.
This process was repeated $200$ times.  The results are summarized in Figure~\ref{fig:m_sensitivity_mri}, showing the learned projection test leads to a smaller $p$-value on average than the competing methods for all values of $\tilde{m}$, suggesting it is indeed increasing power by reducing the dimension of the test without removing much signal (since for these case-control tests, we would expect to reject the mesoscale hypotheses based on previous findings).  

\begin{figure}
  \begin{center}
  \includegraphics[width=.7\textwidth]{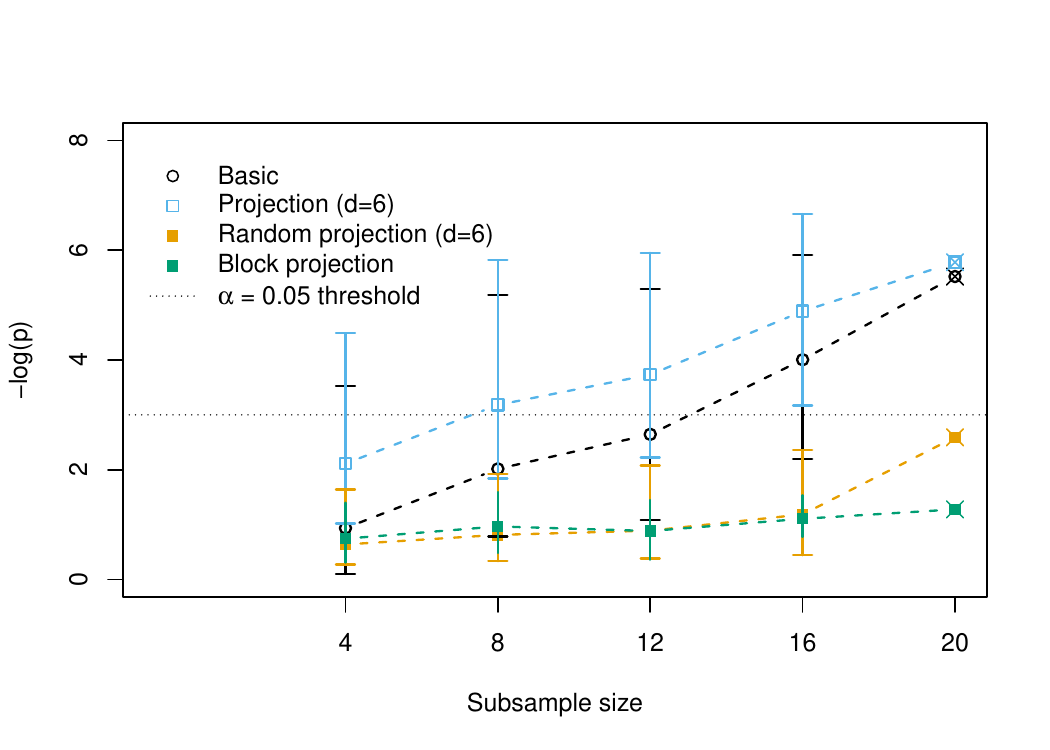}
  \caption{Median $-\log(p)$ for testing the CBM/CBM hypothesis in fRMI data as a function of sample size $m$. For $m < 20$, vertical bars span from the 25th to 75th empirical quantiles over $200$ replications. The horizontal dotted line corresponds to $\alpha=0.05$.}
  \label{fig:m_sensitivity_mri}
\end{center}
\end{figure}

% PETER: tried other results of this kind for Liza's talk, dropped for space/not in complete agreement across all blocks studied (note may not be the case that M/F controls actually have no signal)

% We also apply our tests to compare the edge expectations of the FL/FL hypothesis set between the male and female subgroups of controls.
% These two samples are made up of $12$ male controls and $8$ female controls.
% The basic test ($p = 0.99$) shows little to no difference between these two groups.
% This is in agreement with the corresponding projection tests, which return $p > 0.17$ uniformly over all choices of $d$ from $1$ to $28$.

% While these results are promising in terms of power, they raise some questions in contrast to the results on synthetic networks.
% First, note that in our synthetic comparisons, although the expected adjacency matrices had low effective rank, the difference was also dense, so that pooling information across edges in the basic way was still an effective strategy.
% In this real data, it is possible that these differences are sparse, and projection is not only reducing the dimension but also localizing the node pairs that drive the group differences, and ignoring those that contribute only noise to the test statistic.
% On the other hand, the good performance of the learned projection test, especially for large values of $d$ relative to the size of the hypothesis set, may suggest that the edge independence assumption has been violated.
% The robustness of our projection testing methodology to edge dependence merits futher investigation in future work.

\section{Discussion and future work} \label{sec:conclusion}

In this paper, we have developed new mesoscale two-sample tests for aligned networks, allowing for comparisons at local (individual node pair), intermediate (a subnetwork) and global (full network) scales. 
Testing at mesoscale allows the analyst to effectively balance power and localization, and we show that 
 in the presence of low rank or low effective rank structure, which has been commonly observed in networks, information from outside the mesoscale hypothesis edge set can used to produce powerful tests that are robust to structural misspecification.
\ifjasa
\else
We prove finite sample type I error control and exact power calculations for Gaussian edge networks, and propose methodology for networks with general exponential family edges, with proven asymptotic type I error control.
\fi

%Future directions in theory for this problem will focus on obtaining stronger asymptotic results for general exponential family edges.   We conjecture that the asymptotic distribution result in Proposition~\ref{prop:wobs2} should still  hold when $r$,$c$ and $n$ grow, even for fixed $m$, provided we are in a suitable asymptotic regime.    Such a result would give a better understanding of test control and power in this more general setting.

In Appendix~\ref{subsec:fdr}, we discuss further modifications to methodology and theory for simultaneous mesoscale testing of several hypothesis sets.
Extensions of interest include adjustments for edge dependence or heteroscedasticity, perhaps motivated by nested model comparison procedures for generalized linear mixed models with more flexible covariance structures.
\ifjasa
\else
There is also potential to work with more general non-linear projected null hypotheses, essentially extending the use of linear subspaces to submanifolds of $\mathbb{R}^r$ and $\mathbb{R}^c$.
Finally,  it would be valuable to tailor our mesoscale hypothesis testing methodology to robustly test for equality of positions in latent position models, perhaps by developing a procedure to adaptively learn a suitable set of ``null'' nodes to align the network embeddings.
\fi

The setting of multiple matrix-valued observations split into two samples is not unique to networks.   
For example, one could observe a matrix of counts (e.g., of microbiotic taxa), or subject-specific covariance or precision matrix estimates \citepbody{xia18multiple}.
Many models in those settings also reduce the intrinsic dimension of data using low rank or latent variable models, and thus our methodology would be applicable to two-sample testing for fixed sub-rectangles in general matrix data, as long as suitable assumptions can be worked out.

% \ifblind
% \else
% \section*{Acknowledgments}
% %The authors thank Jean Yang and her group at the University of Sydney, in particular Yue Cao, for pre-processing and helpful discussions of cell-cell interaction data.   
% This work was done while PWM was a PhD student at the University of Michigan.  \ifjasa \else It is supported by NSF grant 2052918 to EL and NSF grant 2210439 to EL and JZ. \fi
% \fi

\bibliographystylebody{abbrvnat}
\bibliographybody{graph_testing0}
% graph_testing has automatic Zotero library, graph_testing0 is a static version, node_testing0 has other manually added bibtex, node_testing is a deprecated automatic Zotero library

\pagebreak

% sub-header for supplement
\begin{center}
\section*{Supplementary materials for ``Mesoscale two-sample testing for networks''}

\ifblind
\else
\bigskip

Peter W. MacDonald, University of Waterloo

\smallskip

Elizaveta Levina, University of Michigan

\smallskip

Ji Zhu, University of Michigan
\fi

\bigskip

\today

\bigskip
\end{center}

\appendix

\section{Methodological extensions}

\subsection{Mesoscale testing with nuisance covariates} \label{app:nuisance}

Here we generalize the tests in Sections~\ref{subsubsec:gaussian_test1} and \ref{subsubsec:gaussian_test2} to the setting with network-level nuisance covariates. For example, in a neuroimaging data set like the one we analyzed in Section~\ref{subsec:mri_data}, nuisance covariates may include age, sex, and head motion in the scanner.  
 
As in Section~\ref{sec:testing}, for simplicity we focus on a rectangular hypothesis set $\mathcal{S}$ with $r$ rows and $c$ columns. In contrast to Section~\ref{sec:testing}, we do not require balanced groups here, and instead assume we observe $m_1 + m_2 = M$ networks with networks $1, \dots, m_1$ from group 1 and $m_1+1, \dots M$ networks from group 2.   Let $g(k) \in \{1,2\}$ for $k=1,\ldots,M,$ denote the group membership of network $k$.   Our goal is construct an efficient mesoscale test for
$$
  H_0: \Theta_{ij}^{(1)} = \Theta_{ij}^{(2)} \quad \forall (ij) \in \mathcal{S},
$$
after projection by an arbitrary orthonormal matrix $(V \otimes U) \in \real^{rc \times d_rd_c}$. 
As in Section~\ref{sec:testing}, let
$
  Y = \left[ %\begin{pmatrix}
  \vecz([A_1]_{\mathcal{S}}) \  \cdots  \   \vecz([A_{M}]_{\mathcal{S}})
%\end{pmatrix} 
\right] \in \real^{rc \times M}
$
be a matrix made up of the collected edges in the hypothesis set. 

Suppose that, for each $k$ we also observe a $q$-dimensional vector of nuisance covariates $\bm{z}_k \in \real^q$ for $k=1,\ldots,M$, and that each entry of $Y$ follows an independent normal distribution with
\begin{equation} \label{gaussian_model_nuisance}
  Y_{(ij),k} \ind \mathcal{N}\left( \Theta_{ij}^{(2)} + (\Theta_{ij}^{(1)} - \Theta_{ij}^{(2)})\mathbb{I}(g(k)=1) + \beta_{ij}^{\tp} \bm{z}_k  \, , \,  \sigma^2 \right).
\end{equation}
In the usual two-sample testing setting, we have no nuisance covariates, and $q=0$. Each row of $Y$ (i.e., ~the edge variables for a fixed node pair) gives $M$ observations from a Gaussian linear regression model with an intercept and $q+1$ covariates, indexed by $k$. All rows share the same design matrix, but have their own coefficient vectors.

To simplify notation, let 
\begin{align}
  \bm{Z}
  %&= \begin{pmatrix}
%     \bm{x}_1^{\tp} \\ \vdots \\ \bm{x}_M^{\tp}
% \end{pmatrix}
= \begin{pmatrix}
    \bm{1}_{m_1} & \bm{1}_{m_1} & \bm{z}_1^{\tp} \\ ~ & ~ & \vdots \\ \bm{1}_{m_2} & \bm{0}_{m_2} & \bm{z}_M^{\tp}
\end{pmatrix} \in \real^{M \times (q+2)}, \label{X_nuisance} \ \ \   
\mathcal{B} = \begin{pmatrix}
  \vdots & \vdots & \vdots \\ \Theta_{ij}^{(2)} & \Theta_{ij}^{(1)} - \Theta_{ij}^{(2)} & \beta_{ij}^{\tp} \\ \vdots & \vdots & \vdots
\end{pmatrix} \in \real^{rc \times (q+2)} %\nonumber
\end{align}
where the rows of $\mathcal{B}$ correspond to pairs $(i,j) \in \mathcal{S}$.   
With this notation, we can define generalized test statistics, analogous to the ones defined in Sections~\ref{subsubsec:gaussian_test1} and \ref{subsubsec:gaussian_test2}, and state their exact distributions.
The two propositions which follow generalize Propositions~\ref{prop:fobs_tilde} and \ref{prop:fobs}. 
The proofs of these two results immediately follow their statements.

Analogous to Section~\ref{subsubsec:gaussian_test1}, to define a test statistic, we must find an independent estimator of $\sigma^2$ with known non-central $\chi^2$ distribution after rescaling.
The discussion in Section~\ref{subsubsec:gaussian_test1} motivates a test statistic with a doubly non-central F distribution under any choice of edge expectation parameters.

\begin{proposition} \label{prop:fobs_tilde_nuisance}
  Suppose $Y$ follows the model defined in \eqref{gaussian_model_nuisance}. 
  Suppose $U \in \real^{r \times d_r}$ and $V \in \real^{c \times d_c}$ are orthonormal matrices, and $d_rd_c < rc$. 
  Define $\bm{Z}$ as in \eqref{X_nuisance} and let
  \begin{align}
  %\bm{H}_Z^{\perp} &= I_M - \bm{Z} (\bm{Z}^{\tp}\bm{Z})^{-1} \bm{Z}^{\tp}, \label{hat_nuisance} \\
  Q_{\perp}(U,V) &= I_{rc} - (V \otimes U)(V \otimes U)^{\tp} \label{q_nuisance}, \\
  h_2 &= \bm{e}_2^{\tp}(\bm{Z}^{\tp}\bm{Z})^{-1}\bm{e}_2, \nonumber
  \end{align}
  where $\bm{e}_2 = (0,1,0,\ldots,0)^{\tp} \in \real^{q+2}$. Then
  \[
  w^{(\mathrm{G})} = \frac{\nu_2 \lVert (V \otimes U)^{\tp} Y \bm{Z}(\bm{Z}^{\tp}\bm{Z})^{-1}\bm{e}_2 \rVert_2^2}{\nu_1 \lVert Q_{\perp}(U,V) Y \bm{Z}(\bm{Z}^{\tp}\bm{Z})^{-1}\bm{e}_2 \rVert_2^2}
\]
  % \begin{equation*}
  %   \tilde{w}_{\mathrm{obs}} = \frac{\nu_2 \left\lVert W^{\tp} Y \bm{1}_m \right\rVert_2^2}{(\nu_1 m) \cdot \left\lVert Y - \frac{1}{m} QQ^{\tp} Y \bm{1}_m\bm{1}_m^{\tp} \right\rVert_F^2}.
  % \end{equation*}
  has a doubly non-central $F_{\nu_1,\nu_2}(\psi,\zeta)$ distribution \citepsupp{bulgren71representations}, with   $\nu_1 = d_rd_c$, $\nu_2 = d_rd_c(M - q - 2)$, and 
  \begin{align*}
    \psi = \psi(U,V) &= \frac{1}{\sigma^2 h_2} \lVert (V \otimes U)^{\tp} Y \bm{Z}(\bm{Z}^{\tp}\bm{Z})^{-1}\bm{e}_2 \rVert_2^2 , \\
    \zeta = \zeta(U,V) &= \frac{1}{2\sigma^2 h_2} \lVert Q_{\perp}(U,V) \vecz(\Theta^{(1)}_{\mathcal{S}} - \Theta^{(2)}_{\mathcal{S}}) \rVert_2^2.
  \end{align*}
\end{proposition}

\begin{proof}[Proof of Proposition~\ref{prop:fobs_tilde_nuisance}]
   Following the notation in \eqref{X_nuisance}, $Y$ has a matrix normal distribution,
$
  Y \sim \mathcal{MN}_{rc \times M}\left( \mathcal{B} \bm{Z}^{\tp} ~,~ \sigma^2 (I_{rc} \otimes I_{M}) \right).
$
After projection onto the coordinates of $(V \otimes U) \in \real^{rc \times d_rd_c}$, and using a standard linear regression estimator for each row-wise regression model, we have
\[
  (V \otimes U)^{\tp} Y \bm{Z}(\bm{Z}^{\tp}\bm{Z})^{-1} \sim \mathcal{MN}_{d_rd_c \times (q+2)}\left( (V \otimes U)^{\tp} \mathcal{B} ~,~ \sigma^2 (I_{d_rd_c} \otimes (\bm{Z}^{\tp}\bm{Z})^{-1}) \right).
\]
Note that the second column of $\mathcal{B}$ is $\vecz(\Theta^{(2)}_{\mathcal{S}} - \Theta^{(1)}_{\mathcal{S}})$. 
Thus,
\[
  (V \otimes U)^{\tp} Y \bm{Z}(\bm{Z}^{\tp}\bm{Z})^{-1}\bm{e}_2 \sim \mathcal{N}_{d_rd_c}\left( (V \otimes U)^{\tp} \vecz(\Theta^{(1)}_{\mathcal{S}} - \Theta^{(2)}_{\mathcal{S}}) ~,~ \sigma^2 h_2 I_{d_rd_c} \right),
\]
and it follows that
\[
  \frac{1}{\sigma^2 h_2} \lVert (V \otimes U)^{\tp} Y \bm{Z}(\bm{Z}^{\tp}\bm{Z})^{-1}\bm{e}_2 \rVert_2^2
\]
has a non-central $\chi^2$ distribution with $d_rd_c$ degrees of freedom, and non-centrality parameter
\[
  \psi(U,V) = \frac{1}{2\sigma^2h_2}\lVert (V \otimes U)^{\tp} \vecz(\Theta^{(1)}_{\mathcal{S}} - \Theta^{(2)}_{\mathcal{S}}) \rVert_2^2 = \frac{1}{2\sigma^2h_2} \lVert U^{\tp}(\Theta^{(1)}_{\mathcal{S}} - \Theta^{(2)}_{\mathcal{S}})V \rVert_F^2.
\]
Note that $Q_{\perp}(U,V)$ is an idempotent projection matrix with rank $rc - d_rd_c$. Then
\[
  Q_{\perp}(U,V) Y \bm{Z}(\bm{Z}^{\tp}\bm{Z})^{-1}\bm{e}_2 \sim \mathcal{N}_{rc} \left( Q_{\perp} \mathcal{B} \bm{e}_2 ~,~ \sigma^2 h_2 Q_{\perp} \right),
\]
and it follows that
\[
  \frac{1}{\sigma^2 h_2} \lVert Q_{\perp}(U,V) Y \bm{Z}(\bm{Z}^{\tp}\bm{Z})^{-1}\bm{e}_2 \rVert_2^2
\]
has a non-central $\chi^2$ distribution with $(rc - d_rd_c)$ degrees of freedom, and non-centrality parameter
\[
  \zeta(U,V) = \frac{1}{2\sigma^2 h_2} \lVert Q_{\perp}(U,V) \vecz(\Theta^{(1)}_{\mathcal{S}} - \Theta^{(2)}_{\mathcal{S}}) \rVert_2^2.
\]
It is independent of $(V \otimes U)^{\tp} Y \bm{Z}(\bm{Z}^{\tp}\bm{Z})^{-1}\bm{e}_2$ since the two matrices which left-multiply $Y$ are mutually orthogonal. This completes the proof by the definition of the doubly non-central $F$ distribution.
\end{proof}

If $M > q+2$, then similar to the construction of Section~\ref{subsubsec:gaussian_test2}, we can find an alternative test statistic based on an independent estimator of $\sigma^2$ which uses only the projected data.

\begin{proposition} \label{prop:fobs_nuisance}
  Suppose $Y$ follows the model defined in \eqref{gaussian_model_nuisance}, and $M > q+2$. 
  Suppose $U \in \real^{r \times d_r}$ and $V \in \real^{c \times d_c}$ are orthonormal matrices. 
  Define $\bm{Z}$ as in \eqref{X_nuisance}, $h_2$ and $\bm{e}_2$ as in Proposition~\ref{prop:fobs_tilde_nuisance}, and let
  \begin{equation}
  \bm{H}_Z^{\perp} = I_M - \bm{Z} (\bm{Z}^{\tp}\bm{Z})^{-1} \bm{Z}^{\tp}. \label{hat_nuisance}
  \end{equation}
  Then
  \[
  w^{(\mathrm{G-P})} = \frac{\nu'_2 \lVert (V \otimes U)^{\tp} Y \bm{Z}(\bm{Z}^{\tp}\bm{Z})^{-1}\bm{e}_2 \rVert_2^2}{\nu_1 h_2 \lVert (V \otimes U)^{\tp} Y \bm{H}_Z^{\perp} \rVert_F^2}
\]
  % \begin{equation*}
  %   \tilde{w}_{\mathrm{obs}} = \frac{\nu_2 \left\lVert W^{\tp} Y \bm{1}_m \right\rVert_2^2}{(\nu_1 m) \cdot \left\lVert Y - \frac{1}{m} QQ^{\tp} Y \bm{1}_m\bm{1}_m^{\tp} \right\rVert_F^2}.
  % \end{equation*}
  has a non-central $F_{\nu_1,\nu'_2}(\psi)$ distribution with
  \begin{equation*}
    \nu_1 = d_rd_c, \quad \nu'_2 = d_rd_c(M - q - 2),
  \end{equation*}
  and $\psi = \psi(U,V)$ defined as in Proposition~\ref{prop:fobs_tilde_nuisance}.
\end{proposition}

\begin{proof}[Proof of Proposition~\ref{prop:fobs_nuisance}]
    The proof proceeds similarly to the proof of Proposition~\ref{prop:fobs_tilde_nuisance} to show that 
    \[
  \frac{1}{\sigma^2 h_2} \lVert (V \otimes U)^{\tp} Y \bm{Z}(\bm{Z}^{\tp}\bm{Z})^{-1}\bm{e}_2 \rVert_2^2
    \]
has a non-central $\chi^2$ distribution with $d_rd_c$ degrees of freedom, and non-centrality parameter $\psi(U,V)$.

Note that $\bm{H}_Z^{\perp}$ is an idempotent projection matrix with rank $M - q - 2$.
Then
\[
  (V \otimes U)^{\tp} Y \bm{H}_Z^{\perp} \sim \mathcal{MN}_{d_rd_c \times M}\left( \bm{0} ~,~ \sigma^2 (I_{d_rd_c} \otimes  \bm{H}_Z^{\perp}) \right),
\]
and by idempotency,
\[
  \frac{1}{\sigma^2} \lVert (V \otimes U)^{\tp} Y \bm{H}_Z^{\perp} \rVert_F^2 \sim \chi^2_{d_rd_c(M - (q+2))}.
\]
It is independent of $(V \otimes U)^{\tp} Y \bm{Z}(\bm{Z}^{\tp}\bm{Z})^{-1}$ since the two matrices which right-multiply $Y$ are mutually orthogonal. This completes the proof by the definition of the  non-central $F$ distribution.
\end{proof}

We apply this mesoscale testing with nuisance correction on the fMRI data from Section~\ref{subsec:mri_data}, adjusting for age and sex as nuisance covariates.
We test for differences in the regression effect of Parkinson's status for the same three hypothesis sets as in Section~\ref{subsec:mri_data}: node pairs within the frontal lobe (FL/FL), node pairs within the cerebellum (CBM/CBM), and node pairs between these two lobes (FL/CBM).
We compare our nuisance-corrected projection test with learned projection to a basic test which aggregates test statistics for the edgewise regression effects (``Basic''), and test based the difference in mean estimated regression effects over the matrix block (``Block'').
As in the original uncorrected data, analysis of a scree plot based on the held out edges suggests a learned projection dimension of $d=6$ is appropriate.

\begin{table}
\begin{center}
\small
\begin{tabular}{|l|lll|}
\hline
             & \multicolumn{3}{c|}{$p$-value}                                                          \\ \hline
             {\bf Nuisance-corrected test} & \multicolumn{1}{l|}{{\bf FL/FL}} & \multicolumn{1}{l|}{{\bf CBM/CBM}} & {\bf FL/CBM} \\ \hline
Basic        & \multicolumn{1}{l|}{0.204}  & \multicolumn{1}{l|}{$1.5 \times 10^{-3}$}  & $1.1 \times 10^{-6}$  \\ \hline
Proj ($d=6$)    & \multicolumn{1}{l|}{0.052} & \multicolumn{1}{l|}{$0.013$} & $4.5 \times 10^{-6}$   \\ \hline
Block & \multicolumn{1}{l|}{0.051}  & \multicolumn{1}{l|}{0.264}  & 0.131 \\ \hline
\end{tabular}
\caption{$p$-values for fMRI data mesoscale testing, adjusting for age and sex. Learned projection results (``Proj'') are shown for the heuristically selected projection dimension $d=6$.}
\label{tab:pval_table_nuisance}
\end{center}
\end{table}

Results in Table~\ref{tab:pval_table_nuisance} differ from the corresponding rows of Table~\ref{tab:pval_table_mri}, but the same rejection decisions would be made at the $\alpha=0.05$ level. This is unsurprising, as in this dataset, the age and sex covariates are approximately balanced between the case and control groups.
The rejection decisions made by these nuisance-corrected tests let us conclude that the CBM/CBM and FL/CBM blocks are both significantly different between Parkinson's patients and healthy controls, after adjusting for age and sex. 

\ifjasa

\else
\fi

\subsection{Non-rectangular or undirected hypothesis sets} \label{subsec:nonrect}

One interpretation of the tests in Section~\ref{sec:testing} is as the comparison of nested GLMs.
When the hypothesis node pairs form a rectangle, the left and right projections are easy to write compactly with matrices, but the underlying idea is still compatible with non-rectangular hypotheses.
For projections with dimensions $d_r$ and $d_c$, the edge expectations under the corresponding projection null will have at most $d_rd_c$ free parameters, and thus whether or not it is rectangular, whenever $\lvert \mathcal{S} \rvert \geq d_rd_c$, projection will reduce the test dimension.

In the general non-rectangular case, we require left and right subspaces of $\mathbb{R}^n$, with orthonormal bases $\bar{U} \in \mathbb{R}^{n \times d_r}$ and $\bar{V} \in \mathbb{R}^{n \times d_c}$.
These subspaces, along with $\mathcal{S}$, will define a subspace of $\mathbb{R}^{\lvert \mathcal{S} \rvert}$.
We will restrict our estimates of the (vectorized) group means to this low dimensional subspace.

%which will restrict the behavior of the edge expectation matrices anywhere in the network.\liza{again, please rewrite with more precision - ``restrict the behavior anywhere in the network" is too vague and also confusing}
%Let $\bar{U} \in \mathbb{R}^{n \times d_r}$ and $\bar{V} \in \mathbb{R}^{n \times d_c}$ denote orthonormal bases for these subspaces.
Let $\mathcal{P}_{\mathcal{M}}$ denote the $p \times p$ matrix of the projection operator onto a subspace $\mathcal{M} \subseteq \mathbb{R}^p$,
and for a non-empty subset $\mathcal{N} \subseteq \{1,\ldots,p\}$, let $\pi_{\mathcal{N}}$ denote the $\lvert \mathcal{N} \rvert \times p$ matrix made up of the rows of $I_p$ which correspond to the elements in $\mathcal{N}$.

%\liza{entries of what?}
Given left and right-hand side projections, we can write a projected null in the following general form
\begin{equation} \label{working_assump_general}
  \vecz(\Theta_{\mathcal{S}}^{(g)}) \in \pi_{\mathcal{S}}(\bar{V} \otimes \bar{U} )
\end{equation}
for $g=1,2$.
%\liza{What does $\cdot$ signify here?  If it is functional composition, as it seems it should be, I think it would be clearer written as $\pi_S( )$. }
%While the mathematical statement is less clear than \eqref{working_assump} for rectangular hypothesis sets, this projected hypothesis is well defined in the non-rectangular case.

Assuming $ \pi_{\mathcal{S}} (\bar{V} \otimes \bar{U} )$ has full rank, the corresponding projection-based tests are exactly the ones in the previous sections; motivated by the comparison of groupwise (generalized) linear models with coefficients $\gamma_1, \gamma_2 \in \mathbb{R}^{d_rd_c}$, the means are given by
$
    \left\{ \pi_{\mathcal{S}} (\bar{V} \otimes \bar{U} )\right\} (\gamma_1 + \gamma_2)
$
for group $1$,
$
    \left\{ \pi_{\mathcal{S}} (\bar{V} \otimes \bar{U} )\right\} (\gamma_1 - \gamma_2)
$
for group $2$, and $\gamma_2 = \bm{0}_{d_rd_c}$ under the mesoscale null hypothesis.

This extension to general hypothesis sets also suggests a modification of the above tests for undirected networks, and networks with no self loops.
For networks with no self loops, we can simply remove diagonal entries from $\mathcal{S}$.
%For instance, if $\mathcal{S} = \mathcal{T} \times \mathcal{T}$ for $\mathcal{T} \subseteq \{1,\ldots,n\}$ which forms a square, we modify $\mathcal{S}$ to include only (without loss of generality) the upper triangle of $\mathcal{S}$, with or without the diagonal.
%\pwm{Added this to match new code:}
For undirected networks, we must discard dependent edge variables below the diagonal from $\mathcal{S}$, but the methodological modification takes slightly more care, since symmetry can also induce rank deficiency in the projected data (for the Gaussian edge model) or GLM design matrix (for general exponential family edges).
As a simple example, suppose the hypothesis set is $\mathcal{S} = \mathcal{T} \times \mathcal{T}$ in an undirected network.  Denoting both the left and right projections by $V$, the projected data takes the form 
$   V^{\tp} [A^{(g)}_k]_{\mathcal{S}} V \in \mathbb{R}^{d \times d} $
for $g=1,2$ and $k=1,\ldots,m$ will also be symmetric, and thus restricted to a subspace of $\mathbb{R}^{d \times d}$, where $d$ is the column dimension of $V$.

In the undirected case, the hypothesis set must contain the symmetric counterpart of all of its node pairs, and the learned left and right projections have to be the same ($\bar{V} = \bar{U}$).
For given projections, the mesoscale test operates under a working GLM, with mean structure
\begin{equation} \label{working_glm}
  \mathbb{E}\left\{ \pi_{\mathcal{S}} \operatorname{vec}\left( A_k^{(g)} \right) \right\} = h\{ \pi_{\mathcal{S}}(\bar{U} \otimes \bar{U}) \beta^{(g)} \}.
\end{equation}

Let $\pi^{\triangle}_{\mathcal{S}}$ denote the operator which restricts to only node pairs on or above the diagonal, and suppose $G_{\mathcal{S}}$ is the linear mapping which satisfies
\begin{equation*}
  \pi_{\mathcal{S}} \operatorname{vec}(M) = G_{\mathcal{S}} \pi^{\triangle}_{\mathcal{S}} \operatorname{vec}(M)
\end{equation*}
for any $n \times n$ matrix $M$.
By construction, $G_{\mathcal{S}}$ has a (left) pseudo-inverse $G_{\mathcal{S}}^{\dagger}$, and \eqref{working_glm} can be rewritten as
\begin{equation} \label{working_glm2}
  \mathbb{E}\left\{ \pi^{\triangle}_{\mathcal{S}} \operatorname{vec}\left( A_k^{(g)} \right) \right\}  = G_{\mathcal{S}}^{\dagger} h\{ \pi_{\mathcal{S}}(\bar{U} \otimes \bar{U}) \beta^{(g)} \}.
\end{equation}
In words, application of the linear map $G_{\mathcal{S}}^{\dagger}$ on the right-hand side of \eqref{working_glm2} replaces the edge expectation of a given node pair by the average of its expectation and that of its symmetric counterpart. Then by injectivity of $h$ we can move $G_{\mathcal{S}}^{\dagger}$ inside the link function and reparametrize
\begin{equation*}
    \mathbb{E}\left\{ \pi^{\triangle}_{\mathcal{S}} \operatorname{vec}\left( A_k^{(g)} \right) \right\} = h( \tilde{W} \tilde{\beta}^{(g)}),
\end{equation*}
where $\tilde{W}$ is an orthonormal basis for $\operatorname{col} \left\{ G_{\mathcal{S}}^{\dagger} \pi_{\mathcal{S}}(\bar{U} \otimes \bar{U}) \right\}$.
In general,
$
  \operatorname{col} \left\{ G_{\mathcal{S}}^{\dagger} \pi_{\mathcal{S}}(\bar{U} \otimes \bar{U}) \right\} $
will not coincide with
$
  \operatorname{col} \left\{ \pi^{\triangle}_{\mathcal{S}}(\bar{U} \otimes \bar{U}) \right\},
$ the ``naive'' choice of subspace under undirectedness, and they may even have different dimensions.

In summary, for undirected networks, we augment the general working assumption in \eqref{working_assump_general} as
\begin{equation*}
  \pi^{\triangle}_{\mathcal{S}} \vecz(\Theta^{(g)}) \in \operatorname{col}\left\{ G_{\mathcal{S}}^{\dagger }\pi_{\mathcal{S}} (\bar{U} \otimes \bar{U} )\right\}
  %\mathcal{H}_{0,\mathcal{S},\mathrm{undir}}^{\bar{U}}: \mathcal{P}_{\operatorname{col}\left\{ G_{\mathcal{S}}^{\dagger }\pi_{\mathcal{S}} (\bar{U} \otimes \bar{U} )\right\}} \left\{ \pi^{\triangle}_{\mathcal{S}} \vecz(\Theta^{(1)} - \Theta^{(2)}) \right\} = \bm{0}.
\end{equation*}
for $g=1,2$.
%which will allow us to work with only the node pairs on or above the diagonal.

%\section{Extensions} \label{sec:extensions}

\subsection{Density and degree correction with restricted projections} \label{subsec:densitydegree}

We previously noted that the projected null hypothesis is conservative, and restricts the class of alternatives for which the projection test has non-trivial power.  
This same property of projection testing can be used to design modified tests which ignore certain types of differences between the edge expectation parameter matrices.
Here we describe two such modifications which are of particular interest in applications, the density-corrected and the degree-corrected two-sample mesoscale tests (which is their interpretation for binary edge networks).
%, but the methodology will produce a similar effect for general exponential family edges.
As above, we restrict to the case of rectangular hypothesis sets for simplicity, but these modifications apply in the non-rectangular case as well (see Section~\ref{subsec:nonrect}).

In the density-corrected case, note that under the model \eqref{expo_edges}, two-sample differences in edge density will correspond to differences in $\vecz(\Theta^{(1)}_{\mathcal{S}})$ and $\vecz(\Theta^{(2)}_{\mathcal{S}})$ in a one-dimensional subspace
$
  \operatorname{span}(\bm{1}_c \otimes \bm{1}_r).
$
%analogous to allowing sample-specific intercept terms under the null in the generalized linear models under comparison.
Thus the density corrected projection test will center the left and right projections $U$ and $V$, restricting them to satisfy
$$
  (V \otimes U)^{\tp} (\bm{1}_c \otimes \bm{1}_r) = \bm{0}_{d_rd_R} , 
$$
and therefore changes in the sum of the entries of $\Theta_{\mathcal{S}}^{(1)}$ and $\Theta_{\mathcal{S}}^{(1)}$ will be ignored by the corresponding projection test.
%We will apply a density corrected projection test to real data in Section~\ref{subsec:cci_data}.
Practically, while this centering step can be applied to $U$ and $V$ estimated as in Section~\ref{subsec:proj_learning}, it is helpful to modify the projection learning method, applying imputation or one-step estimation to a centered estimate of the difference in edge expectation parameters.
The result is an improved ordering of the projection directions tailored to density corrected testing, as the learned orthonormal matrix is already approximately centered.
%We apply a density corrected projection test to real data in Section~\ref{subsec:cci_data}.
%\liza{Does this need to be mentioned here because it's a potential general recommendation?  If so, it should be explicitly made into one.  If not, that is, it's specific to the dataset features that are not likely to be common, then it should be just mentioned in that section. }

Under model \eqref{expo_edges}, degree corrections are typically modeled with additive row and column fixed or random effects $\delta_i^{(g)}$ for $g=1,2$ and $i=1,\ldots,n$, such that
$$
  \mathbb{E}[A_k^{(g)}]_{ij} = h\left( \delta_i^{(g)} + \delta_j^{(g)} + \Theta_{ij}^{(g)} \right)
$$
for $g=1,2$, $k=1,\ldots,m$, and $1 \leq i,j \leq n$, with additional centering assumptions on $\Theta^{(g)}$ \citepbody[e.g.,][]{ma20universal}.
Thus, restricted to the hypothesis set, two-sample differences in these fixed effect parameters will correspond to a linear subspace,
$$
  \operatorname{col}\begin{pmatrix} \bm{1}_c \otimes I_r & \bm{1}_r \otimes I_c \end{pmatrix},
$$
which has $(r + c - 1)$ linearly independent directions.
Just as for density correction, a degree corrected projection test could restrict the left and right projections so that $\operatorname{col}(V \otimes U)$ is orthogonal to changes in the degree effects, analogous to allowing sample-specific row and column fixed effect terms under the null.

\subsection{Multiple testing and false discovery rates} \label{subsec:fdr}

In many applications, we may be interested in simultaneous multiple testing for several non-overlapping hypothesis sets $\mathcal{S}_1,\ldots,\mathcal{S}_K$, with guarantees on an error metric such as the false discovery rate (FDR).
For instance, we may want to test for differences between all the blocks induced by a collection of known node communities, or rows corresponding to multiple nodes of interest.
%While the flexibility of the mesoscale framework means that we can easily test the hypothesis set $\mathcal{S} = \mathcal{S}_1 \cup \cdots \cup \mathcal{S}_K$, we may want to instead treat this as a multiple testing scenario, and aim to control a metric like the false discovery rate (FDR) is a distinct problem.  \liza{I don't think I understand this sentence. }
Under the same independent edge assumption made in model \eqref{expo_edges}, testing with fixed projections will render the $p$-values independent, and thus standard thresholding approaches
%such as the Benjamini-Hochberg linear step-up procedure will
control the FDR at the nominal level \citepsupp{storey04strong}.

However, our methodology for mesoscale testing with learned projections will lead to dependent test statistics, as the edges in $\mathcal{S}_1$ will be used directly for testing $\mathcal{H}_{0,\mathcal{S}_1}$, and indirectly for learning projections to test $\mathcal{H}_{0,\mathcal{S}_j}$ for $j=2,\ldots,K$.   Some simple alterations to the data splitting strategy can be used to produce independent $p$-values.
First, if $\mathcal{S}_1 \cup \cdots \cup \mathcal{S}_K \neq \{1,\ldots,n\}^2$, we may use the remaining edges to learn projections for all the tests.
A conditioning argument, integrating over these remaining edge variables, shows that standard thresholding approaches for independent $p$-values will still control the FDR.
If this first strategy is not feasible, but $m > 1$, we may instead do an initial layer of data splitting, using a subset of the networks to learn projections, and the held out set to test the mesoscale hypotheses.
The same conditioning argument justifies control of the FDR at the nominal level.

In future work, we will consider both empirical and theoretical analysis of the effect of dependence among multiple mesoscale hypothesis test statistics, both when edges are re-used, directly and indirectly, for different hypotheses, and when the hypothesis sets are overlapping.
While this will render the test statistics dependent, it is unclear if and how this dependence will affect FDR control.

\subsection{Relationship to latent position testing} \label{subsubsec:positions}

In this section, we will expand on the relationship between mesoscale network testing and testing for equality of individual node positions under a latent space model.
%This is a digression which is not required for the development of our methodology.   \liza{And so it should probably go somewhere later in the paper (for thesis purposes, since there are no constraints on length), or be converted to a much shorter remark added to the previous section under length constraints for a paper.}
As is typical of two-sample testing under latent space models, in this section we will assume that the expected adjacency matrices are symmetric.

A general latent space model assumes that the expected adjacency matrices for each sample are parameterized by $d$-dimensional latent positions $Z_i^{(g)}$ for $i=1,\ldots,n$ and $g=1,2$, and specifies
\begin{equation} \label{general_lsm}
  [\Theta^{(g)}]_{ij} = D\left(Z_i^{(g)},Z_j^{(g)}\right)
\end{equation}
for a similarity function $D: \mathbb{R}^d \times \mathbb{R}^d \rightarrow \mathbb{R}$.
For instance, $D$ may be the (generalized) inner product or a linear function of the Euclidean distance.
This model is typically identifiable only up to some group of invariant transformations; %, such that applying a common transformation to all the positions will produce the same expected adjacency matrix.
for instance, when $D$ is the inner product, the model is invariant under orthogonal transformations of latent positions.

Under this general latent space model, we may formally state a {\em positional} null hypothesis
$$
  \mathcal{H}^{(\mathrm{posn})}_{0,i}: Z_i^{(1)} = Z_i^{(2)},
$$
to test the equality of node $i$'s latent position in two samples.
The positional null has both pros and cons.
It can be of scientific value to isolate changes in the behavior of a given node $i$.
On the other hand, it can be difficult to interpret since it depends on latent variables which are not fully identified.   Using terminology introduced by \citesupp{breusch86hypothesis}, model unidentifiability renders $\mathcal{H}^{(\mathrm{posn})}_{0,i}$ {\em untestable} for any $i=1,\ldots,n$.
In particular, it is not {\em confirmable}: for any parameter set where the positional null is true, there is an equivalent parameter set where it is false.
While in some cases $\mathcal{H}^{(\mathrm{posn})}_{0,i}$ may be {\em refutable} (with an inner product similarity we can distinguish between latent positions with different Euclidean norms), in others it is completely vacuous (with a Euclidean distance similarity, the latent space model is invariant to translation of the latent positions).

In recent work, \citesupp{du21hypothesis} propose a test for the positional null hypothesis under the generalized RDPG by making an additional identifying assumption.
They %implicitly
assume a known ``seed'' set of nodes $\mathcal{J} \subseteq \{1,\ldots,n\}$ such that the positional null hypothesis holds for all $j \in \mathcal{J}$, and the map
\begin{equation} \label{posn_ident}
  z \mapsto \left\{ D(z,Z_j^{(1)}) \right\}_{j \in \mathcal{J}}
\end{equation}
is injective.
In their specific setting with generalized inner product similarity, this condition requires that the $\lvert \mathcal{J} \rvert \times d$ matrix of seed node positions has linearly independent columns.

The condition in \eqref{posn_ident} renders the positional null hypothesis testable, and in fact makes it equivalent to a mesoscale null hypothesis.
In particular, under the general latent space model \eqref{general_lsm}, and if \eqref{posn_ident} holds for a seed set $\mathcal{J}$, $\mathcal{H}^{(\mathrm{posn})}_{0,i}$ is equivalent to the mesoscale null hypothesis $\mathcal{H}_{0,\mathcal{S}^{(\mathcal{J})}_i}$, where
$
  \mathcal{S}^{(\mathcal{J})}_i = \{(i,j) : j \in \mathcal{J} \}.
$

% \begin{proposition} \label{prop:posn_ident}
%   Suppose model \eqref{expo_edges} holds, and the expected adjacency matrices follow a latent space model as in \eqref{general_lsm}.
%
%   Suppose there exists a seed set $\mathcal{J} \subseteq \{1,\ldots,n\}$ such that the positional null holds for all $j \in \mathcal{J}$, and \eqref{posn_ident} holds.
%
%   Then $\mathcal{H}^{(\mathrm{posn})}_{0,i}$ is equivalent to the mesoscale hypothesis $\mathcal{H}_{0,\mathcal{S}^{(\mathcal{J})}_i}$, where
%   $$
%     \mathcal{S}^{(\mathcal{J})}_i = \{(i,j) : j \in \mathcal{J} \}.
%   $$
% \end{proposition}

While \eqref{posn_ident} makes $\mathcal{H}^{(\mathrm{posn})}_{0,i}$ testable in principle, any practical test, including the one used in \citesupp{du21hypothesis}, depends on the correct specification of $\mathcal{J}$, and validity of the test cannot be guaranteed otherwise.
On the other hand, the corresponding mesoscale hypothesis based on the node pairs in $\mathcal{S}^{(\mathcal{J})}_i$ is defined in terms of identified parameters, and although it may no longer coincide with the positional null, it remains  a well-defined mesoscale null hypothesis, even when $\mathcal{J}$ is misspecified.
In summary, while our mesoscale testing methodology does not specifically address the positional null, it can be used to more robustly test closely related hypotheses about fully identified model parameters.

\ifjasa
\section{Statements of additional theoretical guarantees} 

In this appendix, we state and discuss additional theoretical guarantees for the general exponential family edge model with overdispersion, and for the power of the test with learned projections. All proofs are given in Appendix~\ref{app:proofs}

\else
\fi

\section{Technical proofs} \label{app:proofs}

\subsection{Proofs of Propositions~\ref{prop:onestep_weak} and \ref{prop:onestep_strong}}

% expand defn of T with SVD notation

% for weak result, show that column and row spaces now match
\begin{proof}[Proof of Proposition~\ref{prop:onestep_weak}]
  Recall the definition of $\mathbb{T}$,
  \begin{equation*}
    \mathbb{T} = (\Theta^{(1)}_{\mathcal{C}} - \Theta^{(2)}_{\mathcal{C}})(\Theta^{(1)}_{\mathcal{D}} - \Theta^{(2)}_{\mathcal{D}})^{\dagger}(\Theta^{(1)}_{\mathcal{R}} - \Theta^{(2)}_{\mathcal{R}}) \in \mathbb{R}^{r \times c}.
  \end{equation*}
  From \eqref{block_svd}, we expand
  \begin{equation*}
    \mathbb{T} = \left( U_{[r]}SV_{-[c]}^{\tp} \right) \left( U_{-[r]}SV_{-[c]}^{\tp} \right)^{\dagger} \left( U_{-[r]}SV_{[c]}^{\tp} \right).
  \end{equation*}
  Then by the definition of the Moore-Penrose pseudo-inverse,
  \begin{equation} \label{t_expanded}
    \mathbb{T} = U_{[r]}S \left( V_{-[c]}^{\tp} V_{-[c]} \right) S^{-1} \left( U_{-[r]}^{\tp} U_{-[r]} \right) SV_{[c]}^{\tp},
  \end{equation}
  where $S^{-1}$ exists since $\Theta^{(1)} - \Theta^{(2)}$ has rank $d_*$.

  By assumption, $V_{-[c]}^{\tp} V_{-[c]}$ and $U_{-[r]}^{\tp} U_{-[r]}$ are full rank, so we can write
  \begin{align*}
    \Theta_{\mathcal{S}}^{(1)} - \Theta_{\mathcal{S}}^{(2)} &= U_{[r]}S V_{[c]}^{\tp} \\
    \mathbb{T} &= U_{[r]} \left\{ S \left( V_{-[c]}^{\tp} V_{-[c]} \right) S^{-1} \left( U_{-[r]}^{\tp} U_{-[r]} \right) S \right\} V_{[c]}^{\tp},
  \end{align*}
  and both $S$ and $ \{ S ( V_{-[c]}^{\tp} V_{-[c]} ) S^{-1} ( U_{-[r]}^{\tp} U_{-[r]} ) S \}$ are full rank.
  It follows that the $d_{\mathcal{S}}$-dimensional column and row spaces of $\Theta_{\mathcal{S}}^{(1)} - \Theta_{\mathcal{S}}^{(2)}$ and $\mathbb{T}$ coincide, which completes the proof.
\end{proof}

% strong trivial from the SVD expansion
\begin{proof}[Proof of Proposition~\ref{prop:onestep_strong}]
  Beginning from \eqref{t_expanded}, we have by assumption that
  \begin{align*}
    \mathbb{T} &= U_{[r]}S \left( V_{-[c]}^{\tp} V_{-[c]} \right) S^{-1} \left( U_{-[r]}^{\tp} U_{-[r]} \right) SV_{[c]}^{\tp} \\
    &= (\rho_U\rho_V) U_{[r]}SV_{[c]}^{\tp} \\
    &= (\rho_U\rho_V) \left( \Theta_{\mathcal{S}}^{(1)} - \Theta_{\mathcal{S}}^{(2)} \right).
  \end{align*}
  Since $\mathbb{T}$ is a scalar multiple of $\Theta_{\mathcal{S}}^{(1)} - \Theta_{\mathcal{S}}^{(2)}$, it follows that they have the same leading left and right singular subspaces for any $d \in \{1,\ldots,d_{\mathcal{S}}\}$.
\end{proof}

\subsection{Proof of Proposition~\ref{prop:wobs}}
\label{sec:app:proof2}
\begin{proof}[Proof of Proposition~\ref{prop:wobs}]

Throughout this proof, since $m$ is the only growing dimension, we will index sequences by $m$ instead of $n$, which was used in statements in the body of the paper. 

To begin, we define $(2d_rd_c) \times (2d_rd_c)$ covariance matrices $F_m$ and $G_m(\tilde{\gamma})$ as a function of an arbitrary $\tilde{\gamma} \in \mathbb{R}^{d_rd_c}$:
\begin{equation*}
  F_m = \bar{X}^{\tp} \operatorname{diag}(h'(\mu_{\mathcal{S},m})) \bar{X}, \tabby
  G_m(\tilde{\gamma}) = \bar{X}^{\tp} \\ \operatorname{diag}(h'(\bar{X}\tilde{\gamma})) \bar{X}.
\end{equation*}

The proof will proceed in three steps.
First, we verify Conditions 1 to 4 required for \citesupp{lv14model}, Theorem 7, which shows the asymptotic normality of the GLM coefficients.
Then we verify that under this particular local alternative setup, the MLEs $\hat{\gamma}_{1}$ and $\hat{\gamma}_{2}$ are asymptotically independent, and the rescaled limit $\sqrt{m}\gamma_{m,2}$ converges.
Finally, we will prove that $\widehat{G}\widehat{F}^{-1}\widehat{G}$ is a consistent estimator of $\tilde{G}\tilde{F}^{-1}\tilde{G}$, and the result will follow by Slutsky's theorem.

% apply Lv+Liu theorem

Recall the definition of the population target  $\gamma_m = \begin{pmatrix} \gamma_{1,m} & \gamma_{2,m} \end{pmatrix}$ which solves the population score equations.

For a given $m$ and arbitrary $\tilde{\gamma}$, the covariance under the misspecified GLM is
\begin{equation*}
  \bm{X}^{\tp} \operatorname{diag}\left\{ h'(\bm{X}\tilde{\gamma})\right\} \bm{X} = m G_m(\tilde{\gamma}),
\end{equation*}
while the covariance under the saturated model is
\begin{equation*}
  \bm{X}^{\tp} \operatorname{diag}\left\{ h'(\bm{1}_m \otimes \mu_{\mathcal{S},m})\right\} \bm{X} = m F_m
\end{equation*}

Condition 1 of \citesupp{lv14model} follows immediately from Assumption~\ref{assump:glm}, part (A), and orthonormality of $U$ and $V$.

To verify Condition 2 of \citesupp{lv14model}, define
\begin{equation*}
  V(\tilde{\gamma}) = F_m^{-1/2}G_m(\tilde{\gamma})F_m^{-1/2}, \tabby
  N_m(\kappa) = \left\{ \tilde{\gamma} : \lVert \sqrt{m} F_m^{1/2}(\tilde{\gamma} - \gamma_m) \rVert_2 \leq \kappa \right\} \subseteq \mathbb{R}^{2d_rd_c},
\end{equation*}
and let $\lambda_{\mathrm{min}}(M)$ denote the smallest eigenvalue of a symmetric matrix $M$.
Note that by Assumption~\ref{assump:glm}, part (A), $V$ is a continuous function of $\tilde{\gamma}$.

Define
\begin{equation*}
  \delta = \inf_{m \geq 1} \min_{g \in \{1,2\}} \min_{(i,j) \in \mathcal{S}} \left\{ h'([\Theta^{(g)}_{\mathcal{S},m}]_{ij}) \right\}.
\end{equation*}
Note that $\delta > 0$ by assumption, since $h'$ is strictly positive, continuous, and the entries of $\Theta^{(1)}_{\mathcal{S},m}$ and $\Theta^{(2)}_{\mathcal{S},m}$ are uniformly bounded.
The first part of the condition holds since for any $m$,
$$
  \lambda_{\mathrm{min}}(F_m) \geq 2\delta > 0,
$$
where the inequality holds by definition of $\delta$ and since $2^{-1/2}\bar{X}$ is orthonormal.

This minimum eigenvalue condition also implies that for any $m$ and $\kappa > 0$, $N_m(\kappa)$ is contained in a ball centered at $\gamma$ with radius proportional to $m^{-1/2}$.
Thus, by continuity of $V$, for any $\kappa > 0$,
$$
  \min_{\tilde{\gamma} \in N_m(\kappa)}  \lambda_{\mathrm{min}}\left\{ V(\tilde{\gamma}) \right\} \geq \min_{\lVert \tilde{\gamma} - \gamma \rVert_2 \leq 1} \left( \lambda_{\mathrm{min}} \left\{ V(\tilde{\gamma}) \right\} \right) > 0
$$
for sufficiently large $m$.

To verify Condition 3 of \citesupp{lv14model}, define
$$
  \tilde{V}(\tilde{\gamma}_1,\ldots,\tilde{\gamma}_{2d_rd_c}) = F_m^{-1/2} \tilde{G}_m(\tilde{\gamma}_1,\ldots,\tilde{\gamma}_{2d_rd_c}) F_m^{-1/2},
$$
where
$$
  \tilde{G}(\tilde{\gamma}_1,\ldots,\tilde{\gamma}_{2d_rd_c}) = \bar{X}^{\tp}
 \left\{ \begin{pmatrix} \vdots & & \vdots \\
   h'(\bar{X}\tilde{\gamma}_1) & \cdots & h'(\bar{X}\tilde{\gamma}_{2d_rd_c}) \\ \vdots & & \vdots
 \end{pmatrix} \circ \bar{X} \right\},
$$
and $\circ$ denotes the Hadamard (element-wise) product of two matrices.
By Assumption~\ref{assump:glm}, part (A), $\tilde{V}$ is a continuous function of its arguments.
Thus, similar to the verification of Condition 2, for any $\kappa > 0$
$$
  \max_{\tilde{\gamma}_1,\ldots,\tilde{\gamma}_{2d_rd_c} \in N_m(\kappa)} \lVert \tilde{V}(\tilde{\gamma}_1,\ldots,\tilde{\gamma}_{2d_rd_c}) - V(\gamma_m) \rVert_2 \rightarrow 0
$$
as $m \rightarrow \infty$, by construction of $N_m(\kappa)$ and continuity of $\tilde{V}$.

The first part of Condition 4 of \citesupp{lv14model} follows from Assumption~\ref{assump:glm}, part (B).
For the second part, let $\bar{x}_i$ denote the $i$th row of $\bar{X}$ for $i=1,\ldots,2rc$.
Then
\begin{align*}
  \sum_{k=1}^m \sum_{i=1}^{2rc} \left( \bar{x}_i^{\tp} \left\{ m F_m \right\}^{-1} \bar{x}_i \right)^{3/2} &= m^{-3/2} \sum_{k=1}^m \sum_{i=1}^{2rc} \left( \bar{x}_i^{\tp} F_m^{-1} \bar{x}_i \right)^{3/2} \\
  &\leq m^{-3/2} \sum_{k=1}^m \sum_{i=1}^{2rc} \left( 2\delta \bar{x}_i^{\tp}\bar{x}_i \right)^{3/2} \\
  &= m^{-1/2} \sum_{i=1}^{2rc} \left( 2\delta \bar{x}_i^{\tp}\bar{x}_i \right)^{3/2} \\ &\rightarrow 0 \\
\end{align*}
as $m \rightarrow \infty$.
This completes the verification of the four conditions.

Thus, by \citesupp{lv14model}, Theorem 7,
\begin{equation} \label{thm7}
  \sqrt{m} F_m^{-1/2} G_m(\gamma_m) \left( \hat{\gamma}_m - \gamma_m \right) \indist \mathcal{N}\left( \bm{0}_{2d_rd_c}, I_{2d_rd_c} \right).
\end{equation}

Next, we characterize the limiting behavior of $\gamma_m$ .
Note that the system of equations which $\gamma_m$ solves can be rewritten as
\begin{small}
\begin{align*}
  \bm{0}_{d_ld_c} &= (V \otimes U)^{\tp} \left\{ h\left(\tau + \frac{\Delta}{\sqrt{m}} \right) +  h\left(\tau - \frac{\Delta}{\sqrt{m}} \right) - h\left\{(V \otimes U)(\gamma_{m,1} + \gamma_{m,2})\right\} - h\left\{(V \otimes U)(\gamma_{m,1} - \gamma_{m,2}) \right\} \right\} \\
  \bm{0}_{d_ld_c} &= (V \otimes U)^{\tp} \left\{ h\left(\tau + \frac{\Delta}{\sqrt{m}} \right) - h\left(\tau - \frac{\Delta}{\sqrt{m}} \right) - h\left\{(V \otimes U)(\gamma_{m,1} + \gamma_{m,2})\right\} + h\left\{(V \otimes U)(\gamma_{m,1} - \gamma_{m,2}) \right\} \right\}.
\end{align*}
\end{small}

Equivalently, we define
\begin{small}
\begin{equation*}
  \hspace{-1cm}
  Q_m(x,y) = \begin{pmatrix} (V \otimes U)^{\tp} \left\{ h \left(\tau + \frac{\Delta}{\sqrt{m}} \right) +  h \left( \tau - \frac{\Delta}{\sqrt{m}} \right) - h\left\{(V \otimes U)\left(x + \frac{y}{\sqrt{m}} \right)\right\} - h\left\{(V \otimes U) \left( x - \frac{y}{\sqrt{m}} \right) \right\} \right\} \\
    \sqrt{m} (V \otimes U)^{\tp} \left\{ h\left(\tau + \frac{\Delta}{\sqrt{m}} \right) - h\left(\tau - \frac{\Delta}{\sqrt{m}} \right) - h\left\{ (V \otimes U)\left(x + \frac{y}{\sqrt{m}}\right)\right\} + h\left\{(V \otimes U)\left(x - \frac{y}{\sqrt{m}} \right)\right\} \right\}
  \end{pmatrix}
\end{equation*}
\end{small}
so that $\bm{0}_{2d_rd_c} = Q_m(\gamma_{m,1},\sqrt{m}\gamma_{m,2})$.
Similarly recall the definition
\begin{equation*}
  Q(x,y) = 2\begin{pmatrix}
    (V \otimes U)^{\tp} \left\{ h(\tau) - h((V \otimes U)x) \right\} \\
    (V \otimes U)^{\tp}\left\{ \operatorname{diag}\{h'(\tau)\} \Delta - \operatorname{diag}\{h'((V \otimes U) x)\}(V \otimes U)y \right\}
  \end{pmatrix}
\end{equation*}
which also has a unique solution $\bm{0}_{2d_rd_c} = Q(\tilde{\tau},\tilde{\Delta})$.
Consider the compact set $\bm{K} = \{(x,y) : \lVert (x,y) \rVert_2 \leq K_{\gamma}\}$.
We will first show that $Q_m$ converges uniformly to $Q$ on $\bm{K}$.

%We consider each block of these functions in two terms.

First, it is easy to see that
\begin{equation*}
  (V \otimes U)^{\tp} \left\{ h(\tau + \Delta/\sqrt{m}) +  h(\tau - \Delta/\sqrt{m}) \right\} \rightarrow 2(V \otimes U)^{\tp}h(\tau),
\end{equation*}
free of $(x,y)$.

Next consider
\begin{equation*}
  (V \otimes U)^{\tp} \left\{ h((V \otimes U)(x + y/\sqrt{m}) + h((V \otimes U)(x - y/\sqrt{m}) \right\}.
\end{equation*}
By compactness of $\bm{K}$, $x \pm y/\sqrt{m}$ converges uniformly to $x$.
Then note that the composition of a continuous function on compact domains preserves uniform continuity, so this term converges uniformly to
\begin{equation*}
    2 (V \otimes U)^{\tp} h((V \otimes U)x).
\end{equation*}
Thus convergence holds for the first $d_rd_c$ coordinates.

For the remaining coordinates, consider a single element
\begin{equation*}
  \sqrt{m} \left\{ h([(V \otimes U)(x - y/\sqrt{m}]_{ij}) - h([(V \otimes U)(x + y/\sqrt{m}]_{ij}) \right\}.
\end{equation*}
By a linear expansion of both terms around $h([(V \otimes U)x]_{ij})$, we have that it equals
\begin{equation*}
  (h'(\xi_{ij}^{(m+)}) + h'(\xi_{ij}^{(m-)}))[(V \otimes U)y]_{ij}
\end{equation*}
for intermediate values $\xi_{ij}^{(m\pm)}$.
By compactness of $\bm{K}$, these intermediate values both converge uniformly to $[(V \otimes U)x]_{ij}$.
Combining over all the elements, we conclude that
\begin{equation*}
  \sqrt{m} (V \otimes U)^{\tp} \left\{ h((V \otimes U)(x - y/\sqrt{m}) - h((V \otimes U)(x + y/\sqrt{m}) \right\} \rightarrow 2 \{\operatorname{diag}h'((V \otimes U) x)\}(V \otimes U)y
\end{equation*}
uniformly over $\bm{K}$.
Similarly,
\begin{equation*}
  \sqrt{m} (V \otimes U)^{\tp} \left\{ h(\tau - \Delta/\sqrt{m}) - h(\tau + \Delta/\sqrt{m}) \right\} \rightarrow 2 \{\operatorname{diag}h'(\tau)\} \Delta,
\end{equation*}
free of $(x,y)$.

Then by uniform convergence, and the definition of $\bm{K}$, we conclude that the roots of $Q_m$ also converge to the roots of $Q$ \citepsupp[][Theorem 9.4-3]{keener10theoretical}, that is
\begin{equation*}
  \gamma_{m,1} \rightarrow \tilde{\tau}, \tabby \sqrt{m}\gamma_{m,2} \rightarrow \tilde{\Delta}
\end{equation*}
as $m \rightarrow \infty$.
Note that $\gamma_{m,2} \rightarrow \bm{0}$.

Let $S^F_{g,m} = \operatorname{diag}(h'(\tau + (-1)^{g-1} \Delta/\sqrt{m}))$ and rewrite
\begin{equation*}
  F_m = \begin{pmatrix}
    (V \otimes U)^{\tp} (S^F_{1,m} + S^F_{2,m}) (V \otimes U) & (V \otimes U)^{\tp} (S^F_{1,m} - S^F_{2,m}) (V \otimes U) \\
    (V \otimes U)^{\tp} (S^F_{1,m} - S^F_{2,m}) (V \otimes U) & (V \otimes U)^{\tp} (S^F_{1,m} + S^F_{2,m}) (V \otimes U)
  \end{pmatrix}.
\end{equation*}
Thus,
\begin{equation*}
  F_m \rightarrow \begin{pmatrix} \tilde{F} & 0 \\ 0 & \tilde{F} \end{pmatrix},
\end{equation*}
where $\tilde{F} = 2 (V \otimes U)^{\tp} \operatorname{diag}(h'(\tau)) (V \otimes U)$.

Similarly let $S^G_{g,m} = \operatorname{diag}(h'((V \otimes U) (\gamma_{m,1} + (-1)^{g-1}\gamma_{m,2})))$ and rewrite
\begin{equation*}
  G_m(\gamma_m) = \begin{pmatrix}
    (V \otimes U)^{\tp} (S^G_{1,m} + S^G_{2,m} (V \otimes U) & (V \otimes U)^{\tp} (S^G_{1,m} - S^G_{2,m}) (V \otimes U) \\
    (V \otimes U)^{\tp} (S^G_{1,m} - S^G_{2,m}) (V \otimes U) & (V \otimes U)^{\tp} (S^G_{1,m} + S^G_{2,m}) (V \otimes U).
  \end{pmatrix}
\end{equation*}
Thus,
\begin{equation*}
  G_m(\gamma_m) \rightarrow \begin{pmatrix} \tilde{G} & 0 \\ 0 & \tilde{G} \end{pmatrix}
\end{equation*}
where $\tilde{G} = 2 (V \otimes U)^{\tp} \operatorname{diag}(h'((V \otimes U)\tilde{\tau})) (V \otimes U)$.

By \eqref{thm7} and Slutsky's theorem, the last $d_rd_c$ coordinates of $\hat{\gamma}_m$ satisfy
\begin{equation*}
  \sqrt{m} \tilde{F}^{-1/2} \tilde{G} \hat{\gamma}_{m,2} \indist \mathcal{N}\left( \tilde{F}^{-1/2} \tilde{G} \tilde{\Delta}, I_{d_rd_c} \right),
\end{equation*}
and by continuous mapping theorem,
\begin{equation*}
  m \hat{\gamma}_{m,2}^{\tp} \tilde{G} \tilde{F}^{-1} \tilde{G} \hat{\gamma}_{m,2} \indist \chi^2_{d_rd_c} \left( \tilde{\Delta}^{\tp} \tilde{G} \tilde{F}^{-1} \tilde{G} \tilde{\Delta} \right).
\end{equation*}

% verify that Fhat and Ghat converge in prob
To complete the proof, we will show that $\widehat{G}\widehat{F}^{-1}\widehat{G}$ is a consistent estimator of $\tilde{G}\tilde{F}^{-1}\tilde{G}$, in the sense that
$$
  \lVert \widehat{G}\widehat{F}^{-1}\widehat{G} - \tilde{G}\tilde{F}^{-1}\tilde{G} \rVert_2 = o_{\mathbb{P}}(1).
$$
First, by \eqref{thm7}, $\lVert \hat{\gamma}_1 - \tilde{\tau} \rVert_2 = o_{\mathbb{P}}(1)$, thus by continuity,
\begin{equation} \label{var_pt1}
  \lVert \widehat{G} - \tilde{G} \rVert_2 = o_{\mathbb{P}}(1).
\end{equation}
Which implies that
\begin{equation} \label{var_pt2}
  \lVert \widehat{G} \rVert_2 \leq \lVert \tilde{G} \rVert_2 + \lVert \widehat{G} - \tilde{G} \rVert_2 = O_{\mathbb{P}}(1).
\end{equation}

Define
\begin{equation*}
  \delta' = \min_{(i,j) \in \mathcal{S}} \left\{ h'(\tau_{ij})\right\}.
\end{equation*}
Then
\begin{equation*}
  \frac{1}{2} \lambda_{\mathrm{min}}(\widehat{F}) \geq \min_{(i,j) \in \mathcal{S}} \left\{ h'([\widetilde{\Theta}_{\mathcal{S}}]_{ij})\right\} \geq \delta' + o_{\mathbb{P}}(1)
\end{equation*}
since $\widetilde{\Theta}_{\mathcal{S}}$ is a consistent estimator of $\tau$.
This implies that
\begin{equation} \label{var_pt3}
  \lVert \widehat{F}^{-1} \rVert_2 \leq \frac{1}{\lambda_{\mathrm{min}}(\widehat{F})} = O_{\mathbb{P}}(1), %= O_{\mathbb{P}}(1),
\end{equation}
and finally
\begin{align*} \label{var_pt4}
  \lVert \widehat{F}^{-1} - \tilde{F}^{-1} \rVert_2 &\leq \frac{\lVert \widehat{F} - \tilde{F} \rVert_2}{\lambda_{\mathrm{min}}(\widehat{F}) \lambda_{\mathrm{min}}(\tilde{F})} \\
  &\leq \frac{1}{2\delta' \lambda_{\mathrm{min}}(\widehat{F})} \left\lVert h'\left\{ \vecz(\widetilde{\Theta}_{\mathcal{S}}) \right\} - h'\{ \vecz(\tau)\} \right\rVert_{\infty} = o_{\mathbb{P}}(1) \numberthis
\end{align*}
by the law of large numbers and Assumption~\ref{assump:glm}, part (A).
Combining \eqref{var_pt1}-\eqref{var_pt4}, we conclude that
$$
  \lVert \widehat{G}\widehat{F}^{-1}\widehat{G} - \tilde{G}\tilde{F}^{-1}\tilde{G} \rVert_2 = o_{\mathbb{P}}(1),
$$
which completes the proof of Proposition~\ref{prop:wobs}

\end{proof}

\subsection{Proof of Proposition~\ref{prop:wobs_largen}}

\begin{proof}[Proof of Proposition~\ref{prop:wobs_largen}]
    This proof proceeds similarly to the proof of Proposition~\ref{prop:wobs}, by verifying the regularity conditions of \citesupp{lv14model}, then applying their main results. Some detailed arguments in this proof are omitted as they are similar to the proof of Proposition~\ref{prop:wobs}.
    Define $\bm{X}_n = \sqrt{rc}(\bar{X}_n~\otimes~\bm{1}_m)$.
    For each $n$, define a system of equations in $\gamma \in \real^{2d_rd_c}$,
\[
  \bm{0}_{2d_rd_c} = g_n(\gamma) = \bm{X}_n^{\tp} \left\{ h(\tau_n \otimes \bm{1}_{2m} ) - h(\bm{X}_n \gamma)\right\}.
\]
Expanding the definition of $\bm{X}_n$,
$$
  g_n(\gamma) = m \sqrt{rc} \begin{pmatrix}
    (V_n \otimes U_n)^{\tp} \left\{ h(\Theta_n) - h(\sqrt{rc}(V_n \otimes U_n)(\gamma_1 + \gamma_2)\right\} \\
    (V_n \otimes U_n)^{\tp} \left\{ h(\Theta_n) - h(\sqrt{rc}(V_n \otimes U_n)(\gamma_1 - \gamma_2))\right\}
\end{pmatrix}
$$
and by construction, if $\bar{\gamma}_n = (\bar{\gamma}_{n,1},\bar{\gamma}_{n,2})$ denotes the unique solution to this system, $\bar{\gamma}_{n,2} = \bm{0}_{d_rd_c}$. 
\citesupp{lv14model}, Theorem 5, guarantees that there is a unique solution to this system of equations.
Define
$$
  \bm{F}_n = \bm{X}_n^{\tp} \diag\{ h'(\tau_n) \} \bm{X}_n, \quad \bm{G}_n = \bm{X}_n^{\tp} \diag\{ h'(\bm{X}_n \bar{\gamma}_{n}) \} \bm{X}_n,
$$
which can be simplified as
\begin{equation} \label{big_FG}
  \bm{F}_n = (rcm) \begin{pmatrix}
  F_n & \bm{0} \\ \bm{0} & F_n
\end{pmatrix}, \quad \bm{G}_n = (rcm) \begin{pmatrix}
G_n & \bm{0} \\ \bm{0} & G_n
\end{pmatrix}.
\end{equation}
where
\begin{align*}
  F_n &= (V_n \otimes U_n)^{\tp} \diag\{ h'(\tau_n) \} (V_n \otimes U_n), \\
  G_n &= (V_n \otimes U_n)^{\tp} \diag\{ h'(\sqrt{rc}(V_n \otimes U_n) \bar{\gamma}_{n,1}) \} (V_n \otimes U_n),
\end{align*}

Similar to the proof of Proposition 3, Conditions 1--4 of \citetsupp[][Section 4.1]{lv14model} will hold as long as
\begin{equation} \label{largen_want}
 (rcm) \lambda_{\min}^{3}(F_n) \rightarrow \infty, \quad \inf_{n \geq 1} \lambda_{\min}(F_n^{-1/2}G_nF_n^{-1/2}) > 0.
\end{equation}
Note that
$$
  \lambda_{\min}(F_n^{-1/2}G_nF_n^{-1/2}) \geq \lambda_{\min}(F_n^{-1})\lambda_{\min}(G_n) = \frac{\lambda_{\min}(G_n)}{\lambda_{\max}(F_n)}.
$$
Finally, we have that $\sqrt{rc}\bar{\gamma}_{n,1} = \gamma_{n,1}$, since $\gamma_{n,1}$ is defined in Section~\ref{subsubsec:glm_largen} based on the solution to \eqref{population_glm}, which uses an unnormalized design matrix.
Thus, the desired results in \eqref{largen_want} will hold by \eqref{hprime_cond}, since $rcm \rightarrow \infty$.
Also note that \eqref{largen_want} may continue to hold even when $\lambda_{\min}(F_n) \rightarrow 0$, see the discussion in Remark~\ref{rem:sparse_logit} in the body of the paper, and Remark~\ref{rem:sparse_logit_supp} following this proof.

\citetsupp{lv14model}, Theorem 7, along with the continuous mapping theorem, gives that
$$
    m \hat{\gamma}_{2,n} G_n F_n^{-1} G_n \hat{\gamma}_{2,n} \indist \chi^2_{d_rd_c}.
$$
To complete the proof, it suffices to show that $F_n$ and $G_n$ can be estimated consistently (in operator norm) as $n \rightarrow \infty$.

To estimate $G_n$, by continuity of $h'$ (Assumption~\ref{assump:glm}) we need
$$
  \lVert \sqrt{rc} (V_n \otimes U_n) \{(rc)^{-1/2}\hat{\gamma}_{n,1} - \bar{\gamma}_{n,1}\} \rVert_{\infty} \leq \sqrt{rc} \lVert (V_n \otimes U_n) \rVert_{2 \rightarrow \infty} \lVert (rc)^{-1/2} \hat{\gamma}_{n,1} - \gamma_{n,1} \rVert_2  \inprob 0.
$$
This will hold by \citetsupp{lv14model} Theorem 6, and \eqref{coherence_cond}.

To estimate $F_n$, suppose we have proposed an estimator of $\tau_n$ which satisfies Assumption~\ref{assump:theta_est}.
Then
\begin{align*}
    \lVert \widehat{F}_n - F_n \rVert_2 &\leq (d_rd_c) \max_{1 \leq r_1,r_2 \leq d_rd_c} \left\lvert [\widehat{F}_n - F_n]_{r_1r_2} \right\rvert \\
    &= (d_rd_c) \left\lvert \sum_{(ij) \in \mathcal{S}} \{ h'([\widetilde{\Theta}_n]_{ij}) - h'([\vecz^{-1}(\tau_n)]_{ij})\} [(V \otimes U]_{ij,r1}[(V \otimes U]_{ij,r2} \right\rvert \\
    &\leq d_rd_c \left\{ \sup_{n \geq 1} \lVert (V \otimes U) \rVert_{2 \rightarrow \infty} \right\}^2 \sum_{(ij) \in \mathcal{S}} \left\lvert h'([\widetilde{\Theta}_n]_{ij}) - h'([\vecz^{-1}(\tau_n)]_{ij}) \right\rvert \inprob 0
\end{align*}
where the final convergence is given by Assumption~\ref{assump:theta_est} and \eqref{coherence_cond}.
This completes the proof.

\end{proof}

\begin{remark} \label{rem:sparse_logit_supp}
    The sparsity condition given in Remark~\ref{rem:sparse_logit} follows clearly from the first statement in display \eqref{largen_want}.
    Furthermore, note that bounding the minimum eigenvalues of $\widetilde{F}_n$ and $\widetilde{G}_n$ only requires $\bm{v}^{\tp}\diag\{h'(\Theta_n)\}\bm{v}$ to be bounded away from $0$ for unit vectors $\bm{v} \in \operatorname{col}\{(V_n \otimes U_n)\}$.
    Thus if $(V_n \otimes U_n)$ is chosen to restrict the attention of the test to a denser subset of the hypothesis set edges, Proposition~\ref{prop:wobs_largen} may hold under weaker overall sparsity conditions.
    We leave further investigation of this sort of ``sparsity-avoiding'' mesoscale projection to future work.
\end{remark}

\subsection{Proofs of Lemma~\ref{lem:od_consistent} and Corollary~\ref{cor:wobs_od}}

\begin{proof}[Proof of Lemma~\ref{lem:od_consistent}]
For each $(ij) \in \mathcal{S}$, $g=1,2$, and $k=1,\ldots,m$ define the (random) function
\[
  f_{ij,g,k}(\theta) = \frac{([A_{k}^{(g)}]_{ij} - h(\theta))^2}{h'(\theta)}
\]
By a linear expansion,
\[
  f_{ij,g,k}(\widetilde{\Theta}_{ij}^{(g)}) = f_{ij,g,k}(\Theta_{ij}^{(g)}) + f'_{ij,g,k}(\breve{\Theta}_{ij}^{(g)})(\widetilde{\Theta}_{ij}^{(g)} - \Theta_{ij}^{(g)}),
\]
where $\breve{\Theta}_{ij}^{(g)}$ is between $\widetilde{\Theta}_{ij}^{(g)}$ and $\Theta_{ij}^{(g)}$.
\[
  f'_{ij,g,k}(\theta) = (h(\theta) - [A_{k}^{(g)}]_{ij}) - ([A_{k}^{(g)}]_{ij} - h(\theta))^2 \frac{h''(\theta)}{\{h'(\theta)\}^2} = O_{\prob}(1)
\]
for $\theta$ in a neighbourhood of $\Theta_{ij}^{(g)}$ by continuity of $h$, $h'$, and $h''$, and since higher moments of $[A_{k}^{(g)}]_{ij}$ exist (Assumption~\ref{assump:glm_od}). Thus,
\[
  f'_{ij,g,k}(\breve{\Theta}_{ij}^{(g)})(\widetilde{\Theta}_{ij}^{(g)} - \Theta_{ij}^{(g)}) = o_{\prob}(1).
\]
It follows that
\[
  \hat{\phi} = \frac{m}{m-1} \cdot \frac{1}{2rcm}  \sum_{g=1}^n \sum_{k=1}^m \sum_{(i,j) \in \mathcal{S}} f_{ij,g,k}(\Theta_{ij}^{(g)}) + o_{\prob}(1).
\]
Note that
\[
  \expect\left\{ f_{ij,g,k}(\Theta_{ij}^{(g)}) \right\} = \expect\left\{ \big\lvert f_{ij,g,k}(\Theta_{ij}^{(g)}) \big\rvert \right\} = \frac{\var([A_{k}^{(g)}]_{ij})}{h'(\Theta_{ij}^{(g)})} = \phi.
\]
By Assumptions~\ref{assump:glm_od} and \eqref{hprime_cond},
$$
    \var\left\{ f_{ij,g,k}(\Theta_{ij}^{(g)}) \right\} \leq \expect\left(\left\{ f_{ij,g,k}(\Theta_{ij}^{(g)}) \right\}^2\right) = \frac{\expect\{ ([A_{k}^{(g)}]_{ij} - \Theta_{ij}^{(g)})^4\}}{\{h'(\Theta_{ij}^{(g)})\}^2}
$$
is uniformly bounded, so by the weak law of large numbers, $\hat{\phi} \inprob \phi$.
\end{proof}

\begin{proof}[Proof of Corollary~\ref{cor:wobs_od}]
    Identical arguments to the proof of Proposition~\ref{prop:wobs_largen} can be used to show that
    \begin{equation} \label{intermediate_od}
        \phi^{-2} m \hat{\gamma}_{2,n} G_n F_n^{-1} G_n \hat{\gamma}_{2,n} \indist \chi^2_{d_rd_c}.
    \end{equation}
    since in place of $\bm{F}_n$ (see display \ref{big_FG}) we will have 
    $$
        \bm{X}_n^{\tp} \operatorname{Cov}(\vecz(Y)) \bm{X}_n = \phi^2 \bm{X}_n^{\tp} \diag\{ h'(\tau_n) \} \bm{X}_n = \phi^2 \bm{F}_n.
    $$
    under the exponential family edge model with dispersion parameter $\phi$.
    The result then follows by \eqref{intermediate_od}, Lemma~\ref{lem:od_consistent}, and Slutsky's theorem.
\end{proof}

\subsection{Proofs of Propositions~\ref{prop:fobs_tilde} and \ref{prop:fobs}}

\begin{proof}[Proof of Proposition~\ref{prop:fobs_tilde}]
    We prove this result as a special case of Proposition~\ref{prop:fobs_tilde_nuisance}. Following the notation of Appendix~\ref{app:nuisance}, with no nuisance covariates ($q=0$) we have
    \begin{align*}
  \bm{Z} &= \begin{pmatrix}
      \bm{1}_{m} & \bm{1}_{m} \\ \bm{1}_{m} & \bm{0}_{m},
  \end{pmatrix} \in \real^{M \times 2} \\
   \bm{Z}^{\tp}\bm{Z} &= \begin{pmatrix}
    2m & m \\ m & m
\end{pmatrix}, \\
   (\bm{Z}^{\tp}\bm{Z})^{-1} &= \frac{1}{m^2} \begin{pmatrix}
    m & -m \\ -m & 2m
\end{pmatrix} = \begin{pmatrix}
  1/m & -1/m \\ -1/m & 2/m
\end{pmatrix},  \\
   %h_2 &= 1/m_1 + 1/m_2 = (m_1+m_2)/(m_1m_2), \\
 \bm{Z}(\bm{Z}^{\tp}\bm{Z})^{-1}\bm{e}_2 &= \begin{pmatrix}
    \frac{1}{m} \bm{1}_{m} \\ -\frac{1}{m} \bm{1}_{m}
\end{pmatrix}. \\
%    \bm{Z} (\bm{Z}^{\tp}\bm{Z})^{-1} \bm{Z}^{\tp} &= \begin{pmatrix}
%     \frac{1}{m_1} \bm{1}_{m_1 \times m_1} & \bm{0}_{m_1 \times m_2} \\
%     \bm{0}_{m_2 \times m_1} & \frac{1}{m_2} \bm{1}_{m_2 \times m_2}
% \end{pmatrix}
\end{align*}
Thus,
\begin{equation} \label{ydiff}
  Y \bm{Z}(\bm{Z}^{\tp}\bm{Z})^{-1}\bm{e}_2 = \frac{1}{m} \sum_{k=1}^m \vecz([A^{(1)}_k]_{\mathcal{S}}) - \frac{1}{m} \sum_{k=1}^m \vecz([A^{(2)}_k]_{\mathcal{S}}) = \bar{Y}^{(\mathrm{diff})} \in \real^{rc}.
\end{equation}
Substituting \eqref{ydiff} into Proposition~\ref{prop:fobs_tilde_nuisance} proves the result.
\end{proof}

\begin{proof}[Proof of Proposition~\ref{prop:fobs}]
    We prove this result as a special case of Proposition~\ref{prop:fobs_nuisance}. Following the notation of Appendix~\ref{app:nuisance}, with no nuisance covariates ($q=0$) and $m > 1$, we have
    \begin{align}
        h_2 &= 2/m, \label{h2} \\
           \bm{Z} (\bm{Z}^{\tp}\bm{Z})^{-1} \bm{Z}^{\tp} &= \begin{pmatrix}
    \frac{1}{m} \bm{1}_{m \times m} & \bm{0}_{m \times m} \\
    \bm{0}_{m \times m} & \frac{1}{m} \bm{1}_{m \times m}
\end{pmatrix}. \nonumber
    \end{align}
    Thus,
    \begin{equation} \label{yresid}
        Y \bm{H}^{\perp}_{Z} = Y\left\{ I_{2m} - \begin{pmatrix}
    \frac{1}{m} \bm{1}_{m \times m} & \bm{0}_{m \times m} \\
    \bm{0}_{m \times m} & \frac{1}{m} \bm{1}_{m \times m}
\end{pmatrix} \right\} = Y^{(\mathrm{resid})}.
    \end{equation}
    Substituting \eqref{ydiff}, \eqref{h2}, and \eqref{yresid} into Proposition~\ref{prop:fobs_nuisance} proves the result.
\end{proof}

\subsection{Proof of Proposition~\ref{prop:power_bound}}

Before proving Proposition~\ref{prop:power_bound}, we state and prove two lemmas.
The first is a property of Kronecker products of orthonormal matrices, and the second is a property of the CDF of the non-central $F$-distribution.

% state orthogonal kronecker lemma

\begin{lemma} \label{lem:on_kron}
  Suppose $d_1d_2 \leq n$, and $\hat{U}_1,U_1 \in \mathbb{R}^{n \times d_1}$ and $\hat{U}_2,U_2 \in \mathbb{R}^{n \times d_2}$ are orthonormal matrices which satisfy
  $$
    \lVert \hat{U}_1 - U_1O_1 \rVert_2 \leq \epsilon_1, \tabby \lVert \hat{U}_2 - U_2O_2 \rVert_2 \leq \epsilon_2
  $$
  for some orthonormal matrices $O_1 \in \mathbb{R}^{d_1 \times d_1}$ and $O_2 \in \mathbb{R}^{d_2 \times d_2}$.
  Then $\hat{U}_1 \otimes \hat{U}_2$, $U_1 \otimes U_2$, and $O_1 \otimes O_2$ are orthonormal matrices, and
  $$
    \lVert (\hat{U}_1 \otimes \hat{U}_2) - (U_1 \otimes U_2) (O_1 \otimes O_2) \rVert_2 \leq \epsilon_1 + \epsilon_2.
  $$
\end{lemma}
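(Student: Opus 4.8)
The plan is to reduce the entire statement to two standard facts about Kronecker products: the mixed-product property $(A \otimes B)(C \otimes D) = (AC) \otimes (BD)$, and the multiplicativity of the operator norm, $\lVert A \otimes B \rVert_2 = \lVert A \rVert_2 \lVert B \rVert_2$. For the orthonormality claims, I would combine the transpose identity $(A \otimes B)^{\tp} = A^{\tp} \otimes B^{\tp}$ with the mixed-product property to compute the Gram matrix $(A \otimes B)^{\tp}(A \otimes B) = (A^{\tp}A) \otimes (B^{\tp}B)$. Since $\hat{U}_1, U_1$ have orthonormal columns we have $\hat{U}_1^{\tp}\hat{U}_1 = I_{d_1}$, and similarly $\hat{U}_2^{\tp}\hat{U}_2 = I_{d_2}$ and $O_1^{\tp}O_1 = I_{d_1}$, $O_2^{\tp}O_2 = I_{d_2}$; each Gram matrix therefore collapses to $I \otimes I = I$, so $\hat{U}_1 \otimes \hat{U}_2$, $U_1 \otimes U_2$, and $O_1 \otimes O_2$ all have orthonormal columns.

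For the perturbation bound I would first rewrite the target via the mixed-product property as $(U_1 \otimes U_2)(O_1 \otimes O_2) = (U_1 O_1) \otimes (U_2 O_2)$, and then introduce the telescoping decomposition
$$
\hat{U}_1 \otimes \hat{U}_2 - (U_1 O_1) \otimes (U_2 O_2) = (\hat{U}_1 - U_1 O_1) \otimes \hat{U}_2 + (U_1 O_1) \otimes (\hat{U}_2 - U_2 O_2).
$$
Applying the triangle inequality and the norm multiplicativity to the two summands gives $\lVert \hat{U}_1 - U_1 O_1 \rVert_2 \, \lVert \hat{U}_2 \rVert_2 + \lVert U_1 O_1 \rVert_2 \, \lVert \hat{U}_2 - U_2 O_2 \rVert_2$. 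The two factors left untouched in the telescoping step, namely $\hat{U}_2$ and $U_1 O_1$, each have unit operator norm: a matrix with orthonormal columns has operator norm one, and $U_1 O_1$ inherits orthonormal columns from the orthonormality of $U_1$ and $O_1$. Hence the bound collapses to $\epsilon_1 \cdot 1 + 1 \cdot \epsilon_2 = \epsilon_1 + \epsilon_2$, as claimed.

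There is essentially no obstacle in this argument; the only point requiring mild care is verifying that the undisturbed factor in each telescoping term has operator norm exactly one. This is precisely what prevents the cross terms from accumulating a spurious product $\epsilon_1 \epsilon_2$ or a condition-number-type multiplier, and it is the reason the two errors add rather than compound.
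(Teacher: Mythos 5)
Your proposal is correct and follows essentially the same route as the paper: the orthonormality claims via the Gram matrix computation $(A \otimes B)^{\tp}(A \otimes B) = (A^{\tp}A) \otimes (B^{\tp}B)$, and the perturbation bound via a telescoping decomposition, the triangle inequality, and multiplicativity of the operator norm under Kronecker products. The only difference is the (immaterial) choice of middle term in the telescope --- you insert $(U_1O_1) \otimes \hat{U}_2$ where the paper inserts $\hat{U}_1 \otimes U_2O_2$ --- and in both cases the untouched factors have unit operator norm, so the bounds coincide.
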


\begin{proof}[Proof of Lemma~\ref{lem:on_kron}]
  For orthonormal $U \in \mathbb{R}^{n \times r}$ and $V \in \mathbb{R}^{n \times s}$, we have
  \begin{align*}
    (U \otimes V)^{\tp} (U \otimes V) = (U^{\tp}U) \otimes (V^{\tp}V) = I_{rs},
  \end{align*}
  Which shows that $\hat{U}_1 \otimes \hat{U}_2$, $U_1 \otimes U_2$, and $O_1 \otimes O_2$ are orthonormal matrices.
  Then,
  \begin{align*}
    &\lVert (\hat{U}_1 \otimes \hat{U}_2) - (U_1 \otimes U_2) (O_1 \otimes O_2) \rVert_2 \\
    = &\lVert (\hat{U}_1 \otimes \hat{U}_2) - (U_1O_1 \otimes U_2O_2) \rVert_2 \\
    = &\lVert (\hat{U}_1 \otimes \hat{U}_2) - (\hat{U}_1 \otimes U_2O_2) + (\hat{U}_1 \otimes U_2O_2) - (U_1O_1 \otimes U_2O_2) \rVert_2 \\
    \leq &\lVert \hat{U}_1 \otimes (\hat{U}_2 - U_2O_2) \rVert_2 + \lVert (\hat{U}_1 - U_1O_1) \otimes U_2O_2 \rVert_2 \\
    = &\lVert \hat{U}_1 \rVert_2 \lVert \hat{U}_2 - U_2O_2 \rVert_2 + \lVert \hat{U}_1 - U_1O_1 \rVert_2 \lVert U_2O_2 \rVert_2 \\
    \leq &\epsilon_1 + \epsilon_2.
  \end{align*}
\end{proof}

% state another lemma about non-central F CDFs

\begin{lemma} \label{lem:ncf_cdf}
  For a fixed cutoff $t > 0$, and degrees of freedom parameters $\nu_1$ and $\nu_2$, define the function
  $$
    \Phi(z ; t,\nu_1,\nu_2) = \mathbb{P}(F_{\nu_1,\nu_2}(z) \leq t),
  $$
  the CDF of a non-central $F$ distribution with non-centrality parameter $z$.
  Then for constants $0 < c_1 < c_2 < \infty$, $\Phi(z ; t,\nu_1,\nu_2)$ is differentiable with respect to $z$ for $z \in [c_1,c_2]$, and
  $$
    \left\lvert \Phi'(z ; t,\nu_1,\nu_2) \right\rvert \leq \frac{1}{2}\Phi(z ; t,\nu_1,\nu_2).
  $$
\end{lemma}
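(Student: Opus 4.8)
The plan is to exploit the Poisson-mixture representation of the non-central $F$ distribution, which turns $\Phi$ into a convergent series whose derivative telescopes. Recall that if $X \sim \chi^2_{\nu_1}(z)$ and $Y \sim \chi^2_{\nu_2}$ are independent, then $F_{\nu_1,\nu_2}(z) = (X/\nu_1)/(Y/\nu_2)$, and $\{F_{\nu_1,\nu_2}(z) \le t\}$ is the event $\{X/Y \le t\nu_1/\nu_2\}$. Writing $X$ as a Poisson$(z/2)$ mixture of central $\chi^2_{\nu_1+2j}$ variables, I would obtain
$$
\Phi(z; t, \nu_1, \nu_2) = \sum_{j=0}^\infty p_j(z)\, G_j, \qquad p_j(z) = \frac{e^{-z/2}(z/2)^j}{j!},
$$
where $G_j = \mathbb{P}(X_j/Y \le t\nu_1/\nu_2)$ with $X_j \sim \chi^2_{\nu_1+2j}$ independent of $Y \sim \chi^2_{\nu_2}$, so each $G_j \in [0,1]$. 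The one structural fact I need about the coefficients $G_j$ is monotonicity: since $\chi^2_{\nu_1 + 2(j+1)}$ stochastically dominates $\chi^2_{\nu_1 + 2j}$, the ratios $X_j/Y$ are stochastically increasing in $j$, hence $G_0 \ge G_1 \ge G_2 \ge \cdots \ge 0$.

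Next I would differentiate the series term by term. To justify this on $[c_1,c_2]$, note that there $p_j(z) \le e^{-c_1/2}(c_2/2)^j/j! =: M_j$, with $\sum_j M_j = e^{(c_2-c_1)/2} < \infty$; combined with the elementary Poisson identity $p_j'(z) = \tfrac12\{p_{j-1}(z) - p_j(z)\}$ (with $p_{-1} \equiv 0$), this gives $|p_j'(z) G_j| \le \tfrac12(M_{j-1}+M_j)$, a summable bound, so $\sum_j p_j'(z)G_j$ converges uniformly and term-by-term differentiation is valid. Applying the identity and shifting the index in the resulting sum yields the telescoped form
$$
\Phi'(z; t, \nu_1, \nu_2) = \frac12 \sum_{j=0}^\infty p_j(z)\, (G_{j+1} - G_j).
$$

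Finally, the bound falls out of monotonicity. Because $G_{j+1} \le G_j$, every summand is non-positive, so $\Phi$ is non-increasing in $z$ and
$$
\left| \Phi'(z; t, \nu_1, \nu_2) \right| = \frac12 \sum_{j=0}^\infty p_j(z)\,(G_j - G_{j+1}) \le \frac12 \sum_{j=0}^\infty p_j(z)\, G_j = \frac12\, \Phi(z; t, \nu_1, \nu_2),
$$
using $0 \le G_j - G_{j+1} \le G_j$ termwise and then recognizing the series as $\Phi$ itself. I expect the only genuine care to be in setting up the mixture representation and justifying term-by-term differentiation; once the telescoping identity $\Phi' = \tfrac12\sum_j p_j (G_{j+1}-G_j)$ is in hand, the inequality is immediate, so the main obstacle is bookkeeping rather than any substantive difficulty.
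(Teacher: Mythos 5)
Your proof is correct and follows essentially the same route as the paper: the Poisson-mixture series for the non-central $F$ CDF, termwise differentiation via $p_j'(z)=\tfrac12\{p_{j-1}(z)-p_j(z)\}$, and the monotonicity $G_{j+1}\le G_j$ to telescope and bound $|\Phi'|$ by $\tfrac12\Phi$. The only cosmetic differences are that you justify the interchange of differentiation and summation by a Weierstrass-type majorant rather than the paper's power-series argument, and the paper identifies the shifted series $\sum_j p_j G_{j+1}$ as another non-central $F$ CDF with modified cutoff and degrees of freedom, whereas you bound it directly.
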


\begin{proof}[Proof of Lemma~\ref{lem:ncf_cdf}]
  The function $\Phi$ is defined in terms of an infinite series,
  $$
    \Phi(z ; t,\nu_1,\nu_2) = \sum_{j=0}^{\infty} \exp(-z/2) \frac{(z/2)^j}{j!} \mathcal{I}\left( \frac{\nu_1t}{\nu_2 + \nu_1t}; \frac{\nu_1}{2} + j, \frac{\nu_2}{2}\right),
  $$
  where $\mathcal{I}(x ; a,b)$ is the regularized incomplete beta function \citesupp{tiku67tables}.
  Note that $\exp(z/2)\Phi(z)$ is a power series, which converges for any $z$.
  Thus it is uniformly convergent for $z \in [c_1,c_2]$, and we can interchange differentiation and summation:
  $$
    \Phi'(z ; t,\nu_1,\nu_2) = \sum_{j=0}^{\infty} \frac{\partial}{\partial z} \left\{ \exp(-z/2) \frac{(z/2)^j}{j!} \mathcal{I}\left( \frac{\nu_1t}{\nu_2 + \nu_1t}; \frac{\nu_1}{2} + j, \frac{\nu_2}{2}\right) \right\}.
  $$
  Each term on the inside is a product and so the derivative splits into two terms.
  The first terms form the convergent series
  $$
    \sum_{j=0}^{\infty} (-1/2) \left\{ \exp(-z/2) \frac{(z/2)^j}{j!} \mathcal{I}\left( \frac{\nu_1t}{\nu_2 + \nu_1t}; \frac{\nu_1}{2} + j, \frac{\nu_2}{2}\right) \right\} = -\frac{1}{2} \Phi(z ; t,\nu_1,\nu_2).
  $$
  Then the second terms form another convergent series
  $$
    \sum_{j=1}^{\infty} (1/2) \left\{ \exp(-z/2) \frac{(z/2)^{j-1}}{(j-1)!} \mathcal{I}\left( \frac{\nu_1t}{\nu_2 + \nu_1t}; \frac{\nu_1}{2} + j, \frac{\nu_2}{2}\right) \right\}.
  $$
  Define $\omega_1 = \nu_1+2$ and $s = \nu_1t / (\nu_1+2)$, and $i=j-1$.
  This series can be rewritten as
  $$
    \frac{1}{2} \sum_{i=0}^{\infty} \left\{ \exp(-z/2) \frac{(z/2)^i}{i!} \mathcal{I}\left( \frac{\omega_1s}{\nu_2 + \omega_1s}; \frac{\omega_1}{2} + i, \frac{\nu_2}{2}\right) \right\},
  $$
  which is a non-central $F$ CDF evaluated at a smaller cutoff $s$, and with larger numerator degrees of freedom parameter $\omega_1$.
  Note that these changes in cutoff and degrees of freedom imply $\Phi(z ; s,\omega_1,\nu_2) < \Phi(z ; t,\nu_1,\nu_2)$, due to the stochastic monotonicity of the non-central $F$ distribution in its non-centrality parameter.s

  Adding these two convergent series, we have
  $$
    \Phi'(z ; t,\nu_1,\nu_2) = \frac{1}{2}\{\Phi(z ; s,\omega_1,\nu_2) - \Phi(z ; t,\nu_1,\nu_2)\} < 0.
  $$
  Then since $\Phi(z ; s,\omega_1,\nu_2) < \Phi(z ; t,\nu_1,\nu_2)$, we conclude that
  $$
    \left\lvert \Phi'(z ; t,\nu_1,\nu_2) \right\rvert \leq \frac{1}{2}\Phi(z ; t,\nu_1,\nu_2),
  $$
  as desired.
\end{proof}

% combine into final bound

\begin{proof}[Proof of Proposition~\ref{prop:power_bound}]
  To begin, we bound the learned projection test's type II error rates compared to the oracle test, for fixed left and right projections $\hat{U}$ and $\hat{V}$ which satisfy \eqref{subspace_cond}.

  Denote the type II error rate for this test by $\beta_{\mathrm{learn}}^{\hat{U},\hat{V}}$.
  By the independence of the test statistics and the learned projections, it only depends on $\hat{U}$ and $\hat{V}$ through the non-centrality parameter of the test statistic.
  Using notation from Lemma~\ref{lem:ncf_cdf},
  $$
    \beta_{\mathrm{learn}}^{\hat{U},\hat{V}} = \Phi(\psi_{\mathrm{learn}}^{\hat{U},\hat{V}} ; t^*,\nu_1,\nu_2),
  $$
  where $\nu_1$ and $\nu_2$ are defined as in Proposition~\ref{prop:fobs}, $t^*$ is the level $\alpha$ threshold, and
  $$
    \psi_{\mathrm{learn}}^{\hat{U},\hat{V}} = \frac{m}{2\sigma^2} \lVert \hat{U}^{\tp} \left( \Theta^{(1)}_{\mathcal{S}} - \Theta^{(2)}_{\mathcal{S}} \right) \hat{V} \rVert_F^2.
  $$
  Similarly,
  $$
    \beta_{\mathrm{orc}} = \Phi(\psi_{\mathrm{orc}} ; t^*,\nu_1,\nu_2),
  $$
  where by construction, $\psi_{\mathrm{orc}} \geq \psi_{\mathrm{learn}}^{\hat{U},\hat{V}}$, and therefore $\beta_{\mathrm{orc}} \leq \beta_{\mathrm{learn}}^{\hat{U},\hat{V}}$.

  To prove an upper bound, we first bound the difference in non-centrality parameters.
  Let $O_*$ denote the orthogonal transformation matrix which aligns $(\hat{V} \otimes \hat{U})$ and $(V_* \otimes U_*)$.
  Then %since $\epsilon < \sqrt{\psi_{\mathrm{orc}}/\psi_{\mathrm{max}}}$,
  \begin{align*}
    \psi_{\mathrm{learn}}^{\hat{U},\hat{V}} &= \frac{m}{2\sigma^2} \lVert (\hat{V} \otimes \hat{U})^{\tp} \vecz (\Theta^{(1)}_{\mathcal{S}} - \Theta^{(2)}_{\mathcal{S}}) \rVert_2^2 \\
    &= \frac{m}{2\sigma^2} \lVert \left\{(\hat{V} \otimes \hat{U}) - (V_* \otimes U_*)O_* + (V_* \otimes U_*)O_* \right\}^{\tp} \vecz (\Theta^{(1)}_{\mathcal{S}} - \Theta^{(2)}_{\mathcal{S}}) \rVert_2^2 \\
    &\geq \frac{m}{2\sigma^2} \left\{ \lVert (V_* \otimes U_*)^{\tp} \vecz (\Theta^{(1)}_{\mathcal{S}} - \Theta^{(2)}_{\mathcal{S}}) \rVert_2 - \lVert (\hat{V} \otimes \hat{U}) - (V_* \otimes U_*)O_* \rVert_2 \lVert \vecz (\Theta^{(1)}_{\mathcal{S}} - \Theta^{(2)}_{\mathcal{S}}) \rVert_2 \right\}^2 \\
    &\geq \frac{m}{2\sigma^2} \left\{ \left( \frac{2\sigma^2 \psi_{\mathrm{orc}}}{m} \right)^{1/2} - \epsilon \left( \frac{2\sigma^2 \psi_{\mathrm{max}}}{m} \right)^{1/2} \right\}^2 \\
    &= \psi_{\mathrm{orc}} \left( 1 - \epsilon \sqrt{\frac{\psi_{\mathrm{orc}}}{\psi_{\mathrm{max}}}} \right)^2,
  \end{align*}
  where the second inequality uses Lemma~\ref{lem:on_kron}, and the fact that $\epsilon < \sqrt{\psi_{\mathrm{orc}}/\psi_{\mathrm{max}}}$.
  % confirmed: the first step is reverse triangle inequality, then the bound inside the square only holds because the quantity doesn't cross to go below zero by the requirement on \epsilon
  It then follows that
  \begin{align*} \label{psi_diff}
    \psi_{\mathrm{orc}} - \psi_{\mathrm{learn}}^{\hat{U},\hat{V}} &\leq \psi_{\mathrm{orc}} \left\{ 1 - \left( 1 - \epsilon \sqrt{\frac{\psi_{\mathrm{orc}}}{\psi_{\mathrm{max}}}} \right)^2 \right\} \\
    &\leq 2 \psi_{\mathrm{orc}} \left( \epsilon \sqrt{\frac{\psi_{\mathrm{orc}}}{\psi_{\mathrm{max}}}} \right) \\
    &= 2 \epsilon \sqrt{\psi_{\mathrm{orc}}\psi_{\mathrm{max}}}. \numberthis
  \end{align*}

  We now proceed to upper bound the type II error rate.
  By the mean value theorem, there exists $\tilde{\psi} \in (\psi_{\mathrm{learn}}^{\hat{U},\hat{V}}, \psi_{\mathrm{orc}})$ such that
  $$
    \beta_{\mathrm{orc}} - \beta_{\mathrm{learn}}^{\hat{U},\hat{V}} = \Phi'\left( \tilde{\psi} ; t^*,\nu_1,\nu_2 \right) \left(\psi_{\mathrm{orc}} - \psi_{\mathrm{learn}}^{\hat{U},\hat{V}}\right).
  $$
  Thus,
  \begin{align*} \label{typeii_diff}
    \beta_{\mathrm{learn}}^{\hat{U},\hat{V}} &\leq \beta_{\mathrm{orc}} + \left\lvert \Phi'\left( \tilde{\psi} ; t^*,\nu_1,\nu_2 \right) \right\rvert \left(\psi_{\mathrm{orc}} - \psi_{\mathrm{learn}}^{\hat{U},\hat{V}}\right) \\
    &\leq  \beta_{\mathrm{orc}} + \frac{1}{2} \Phi\left( \tilde{\psi} ; t^*,\nu_1,\nu_2 \right) \left(\psi_{\mathrm{orc}} - \psi_{\mathrm{learn}}^{\hat{U},\hat{V}}\right) \\
    &\leq \beta_{\mathrm{orc}} + \frac{1}{2} \beta_{\mathrm{learn}}^{\hat{U},\hat{V}} \left(\psi_{\mathrm{orc}} - \psi_{\mathrm{learn}}^{\hat{U},\hat{V}}\right), \numberthis
  \end{align*}
  where the second inequality uses Lemma~\ref{lem:ncf_cdf}, and the third uses stochastic monotonicity.

  Combining \eqref{psi_diff} and \eqref{typeii_diff}, we have
  \begin{equation*}
    \beta_{\mathrm{learn}}^{\hat{U},\hat{V}} \leq \beta_{\mathrm{orc}} + \beta_{\mathrm{learn}}^{\hat{U},\hat{V}} \epsilon \sqrt{\psi_{\mathrm{orc}}\psi_{\mathrm{max}}},
  \end{equation*}
  which implies
  \begin{equation} \label{typeii_ub}
    \beta_{\mathrm{learn}}^{\hat{U},\hat{V}} \leq \left( \frac{1}{1 - \epsilon \sqrt{\psi_{\mathrm{orc}}\psi_{\mathrm{max}}}} \right) \beta_{\mathrm{orc}}
  \end{equation}
  since $\epsilon < 1/\sqrt{\psi_{\mathrm{orc}}\psi_{\mathrm{max}}}$.

  To complete the proof, we integrate over learned projections, noting that
  \begin{equation*}
    \beta_{\mathrm{learn}} = \int \beta_{\mathrm{learn}}^{\tilde{U},\tilde{V}} d\mathbb{P}(\tilde{U},\tilde{V}).
  \end{equation*}
  Let $\mathcal{G}$ denote the event \eqref{subspace_cond}.
  By assumption $\mathbb{P}(\mathcal{G}) \geq 1-\xi$, independently of the test statistics.
  The lower bound holds directly from the pointwise lower bounds:
  \begin{equation*}
    \beta_{\mathrm{orc}} = \int \beta_{\mathrm{orc}} d\mathbb{P}(\tilde{U},\tilde{V}) \leq \int \beta_{\mathrm{learn}}^{\tilde{U},\tilde{V}} d\mathbb{P}(\tilde{U},\tilde{V}) = \beta_{\mathrm{learn}}.
  \end{equation*}
  For the upper bound, we have
  \begin{equation*}
    \beta_{\mathrm{learn}} = \int_{\mathcal{G}} \beta_{\mathrm{learn}}^{\tilde{U},\tilde{V}} d\mathbb{P}(\tilde{U},\tilde{V}) + \int_{\mathcal{G}^c} \beta_{\mathrm{learn}}^{\tilde{U},\tilde{V}} d\mathbb{P}(\tilde{U},\tilde{V}).
  \end{equation*}
  By \eqref{typeii_ub}, the first term is bounded above by
  \begin{equation*}
    \left( \frac{1}{1 - \epsilon \sqrt{\psi_{\mathrm{orc}}\psi_{\mathrm{max}}}} \right) \beta_{\mathrm{orc}};
  \end{equation*}
  by construction of $\mathcal{G}$, and since $\beta_{\mathrm{learn}}^{\tilde{U},\tilde{V}} \leq 1$ uniformly, the second term is bounded above by $\xi$, which completes the proof.
\end{proof}

\subsection{Proof of Proposition~\ref{prop:subspace_error}}

% state generic lemma for Gaussian matrices, proof is omitted as it is an application of Corollary 3.11 in BvH
We will begin by proving Lemma~\ref{lem:operator_norms}, followed by the proof of Proposition~\ref{prop:subspace_error}.
For the proof of Lemma~\ref{lem:operator_norms}, we state a special form of \citesupp{bandeira16sharp}, Corollary 3.11 which will be applied to bound the operator norm error of each block $\mathcal{C}$, $\mathcal{R}$, $\mathcal{D}$, for $g=1,2$.

\begin{lemma}[\citesupp{bandeira16sharp}, Corollary 3.11] \label{lem:bandeira}
  Suppose $Z$ is a $p \times q$ matrix with iid $\mathcal{N}(0,\sigma^2)$ entries.
  Then, with probability at least
  $$
    1 - \operatorname{exp}\left\{ - \frac{\sigma^2}{8}(\sqrt{p} + \sqrt{q})^2 \right\}
  $$
  we have
  $$
    \lVert Z \rVert_2 \leq \{2 + \delta(p,q)\} \sigma (\sqrt{p} + \sqrt{q}),
  $$
  where $\delta(p,q)$ is defined as in Lemma~\ref{lem:operator_norms}.
\end{lemma}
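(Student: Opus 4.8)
The plan is to reduce the three block bounds to six applications of the cited Gaussian operator-norm inequality (Lemma~\ref{lem:bandeira}), one for each block--group pair, and then to combine them by the triangle inequality, the Eckart--Young--Mirsky theorem, and a union bound.

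First I would record the distribution of each block-mean error. Under the Gaussian edge model the entries of $\{[A_k^{(g)}]_{\mathcal{I}}\}$ are independent $\mathcal{N}([\Theta^{(g)}]_{ij},\sigma^2)$, so averaging over the $m$ layers gives, for each group $g \in \{1,2\}$ and each block $\mathcal{I} \in \{\mathcal{C},\mathcal{R},\mathcal{D}\}$, a centered error matrix $\widehat{\Theta}^{(g)}_{\mathcal{I}} - \Theta^{(g)}_{\mathcal{I}}$ whose entries are independent $\mathcal{N}(0,\sigma^2/m)$. Here $\mathcal{C}$ is $r\times(n-c)$, $\mathcal{R}$ is $(n-r)\times c$, and $\mathcal{D}$ is $(n-r)\times(n-c)$, matching the three exponents appearing in $\xi$. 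Applying Lemma~\ref{lem:bandeira} to each such error matrix with noise level $\sigma/\sqrt m$ in place of $\sigma$ yields, for each of the six matrices, a bound $\lVert \widehat{\Theta}^{(g)}_{\mathcal{I}} - \Theta^{(g)}_{\mathcal{I}} \rVert_2 \le \{2+\delta(p_{\mathcal{I}},q_{\mathcal{I}})\}\,\sigma\,(\sqrt{p_{\mathcal{I}}}+\sqrt{q_{\mathcal{I}}})/\sqrt m$ on an event of probability at least $1 - \exp\{-\tfrac{\sigma^2}{8m}(\sqrt{p_{\mathcal{I}}}+\sqrt{q_{\mathcal{I}}})^2\}$, where $(p_{\mathcal{I}},q_{\mathcal{I}})$ are the block dimensions.

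For blocks $\mathcal{C}$ and $\mathcal{R}$ I would then bound the difference error by the triangle inequality, $\lVert(\widehat{\Theta}^{(1)}_{\mathcal{I}}-\widehat{\Theta}^{(2)}_{\mathcal{I}}) - (\Theta^{(1)}_{\mathcal{I}}-\Theta^{(2)}_{\mathcal{I}})\rVert_2 \le \lVert\widehat{\Theta}^{(1)}_{\mathcal{I}}-\Theta^{(1)}_{\mathcal{I}}\rVert_2 + \lVert\widehat{\Theta}^{(2)}_{\mathcal{I}}-\Theta^{(2)}_{\mathcal{I}}\rVert_2$, which doubles the single-group bound and produces the factor $2$ in $\epsilon_{\mathcal{C}}$ and $\epsilon_{\mathcal{R}}$. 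Block $\mathcal{D}$ requires one additional step to account for the rank truncation, and this is the only genuine subtlety. Writing $M = \widehat{\Theta}^{(1)}_{\mathcal{D}}-\widehat{\Theta}^{(2)}_{\mathcal{D}}$, $M_0 = \Theta^{(1)}_{\mathcal{D}}-\Theta^{(2)}_{\mathcal{D}}$, and $Z = M - M_0$, I would use that $M_0$ is a submatrix of $\Theta^{(1)}-\Theta^{(2)}$ and hence has rank at most $d_*$. By Eckart--Young--Mirsky, $\lVert [M]_{(d_*)} - M\rVert_2 = \sigma_{d_*+1}(M)$, and Weyl's inequality gives $\sigma_{d_*+1}(M) \le \sigma_{d_*+1}(M_0) + \lVert Z\rVert_2 = \lVert Z\rVert_2$ since $\sigma_{d_*+1}(M_0)=0$. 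Hence $\lVert [M]_{(d_*)} - M_0\rVert_2 \le \lVert[M]_{(d_*)}-M\rVert_2 + \lVert Z\rVert_2 \le 2\lVert Z\rVert_2$, and bounding $\lVert Z\rVert_2$ by the sum of the two single-group errors as above yields the factor $4$ in $\epsilon_{\mathcal{D}}$.

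Finally I would take a union bound over all six block--group events. Each block contributes its failure probability twice, once per group, giving total failure probability $2\big(\exp\{-\tfrac{\sigma^2}{8m}(\sqrt r + \sqrt{n-c})^2\} + \exp\{-\tfrac{\sigma^2}{8m}(\sqrt{n-r}+\sqrt c)^2\} + \exp\{-\tfrac{\sigma^2}{8m}(\sqrt{n-r}+\sqrt{n-c})^2\}\big) = \xi$, so all three displayed bounds hold simultaneously with probability at least $1-\xi$; the leading factor of two in $\xi$ is precisely this per-group union. The auxiliary boundedness of $\delta$ and its decay $\delta(p,q)\to 0$ as $\max\{p,q\}\to\infty$ follow from elementary analysis of the scalar function $t\mapsto \sqrt{\log t}/\sqrt t$ together with $(\sqrt p + \sqrt q)^{-1}\sqrt{\log\min\{p,q\}} \le \sqrt{\log\min\{p,q\}}/(2\sqrt{\min\{p,q\}})$. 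The main work is thus the rank-truncation argument for $\mathcal{D}$; every other step is routine, and the point to watch is matching the $\sigma/\sqrt m$ noise scaling and the per-group splitting so that the constants in $\epsilon_{\mathcal{C}},\epsilon_{\mathcal{R}},\epsilon_{\mathcal{D}}$ and $\xi$ emerge exactly as stated.
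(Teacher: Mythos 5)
There is a genuine gap here, and it is structural rather than technical: your proposal does not prove the statement at all. The statement in question is Lemma~\ref{lem:bandeira} itself --- the tail bound $\lVert Z \rVert_2 \leq \{2+\delta(p,q)\}\,\sigma(\sqrt{p}+\sqrt{q})$, holding with probability at least $1-\exp\{-\tfrac{\sigma^2}{8}(\sqrt{p}+\sqrt{q})^2\}$, for a $p \times q$ matrix $Z$ with iid $\mathcal{N}(0,\sigma^2)$ entries --- yet your argument invokes exactly this inequality as a black box (``six applications of the cited Gaussian operator-norm inequality'') and uses it to derive the downstream block-wise error bounds. What you have written is a proof of Lemma~\ref{lem:operator_norms}, not of Lemma~\ref{lem:bandeira}; read as an argument for the latter it is circular. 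A self-contained proof of Lemma~\ref{lem:bandeira} would have to work with the random matrix directly: for instance, bound $\mathbb{E}\lVert Z\rVert_2 \leq \sigma(\sqrt{p}+\sqrt{q})$ via a Gaussian comparison inequality of Slepian--Gordon type, obtain the nonasymptotic correction encoded by $\delta(p,q)$ through the moment-method refinement of \cite{bandeira16sharp}, and then convert the expectation bound into the stated exponential tail by Gaussian concentration of Lipschitz functions of the entries, applied at the deviation level that produces the exponent $\tfrac{\sigma^2}{8}(\sqrt{p}+\sqrt{q})^2$. In the paper this statement carries no proof whatsoever: it is imported verbatim as a special form of Corollary 3.11 of \cite{bandeira16sharp}, so the two legitimate options were to cite it as the paper does, or to reproduce an argument of the kind just sketched; deriving other results \emph{from} it does neither.

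It is worth adding that, judged as a proof of Lemma~\ref{lem:operator_norms}, your write-up is a faithful reconstruction of the paper's argument: the $\mathcal{N}(0,\sigma^2/m)$ scaling of the block means, the triangle inequality across the two groups producing the factor $2$ in $\epsilon_{\mathcal{C}}$ and $\epsilon_{\mathcal{R}}$, the rank-truncation step producing the factor $4$ in $\epsilon_{\mathcal{D}}$, and the union bound over the six block--group events yielding $\xi$ all match the paper. Your handling of the truncation via Weyl's inequality together with Eckart--Young--Mirsky, namely $\lVert [M]_{(d_*)} - M\rVert_2 = \sigma_{d_*+1}(M) \leq \sigma_{d_*+1}(M_0) + \lVert M - M_0\rVert_2 = \lVert M - M_0\rVert_2$, is equivalent to the paper's one-line appeal to optimality of the rank-$d_*$ approximation. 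But that is the proof of a different lemma, so the submission as it stands does not establish the statement you were asked to prove.
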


% prove Lemma 1 (union bound based on the previous lemma)

\begin{proof}[Proof of Lemma~\ref{lem:operator_norms}]
  To begin, we will apply Lemma~\ref{lem:bandeira} to bound the operator norm error of each block $\mathcal{C}$, $\mathcal{R}$, $\mathcal{D}$, for $g=1,2$.

  By \eqref{block_means}, for block $\mathcal{C}$, and $g \in \{1,2\}$, we have that
  $$
    \widehat{\Theta}^{(g)}_{\mathcal{C}} - \Theta^{(g)}_{\mathcal{C}}
  $$
  is an $r \times (n-c)$ matrix, and has iid Gaussian entries with variance $\sigma^2/m$.
  Thus, by Lemma~\ref{lem:bandeira}, union bound, and triangle inequality,
  \begin{equation}
    \lVert (\widehat{\Theta}^{(1)}_{\mathcal{C}} - \widehat{\Theta}^{(2)}_{\mathcal{C}}) - (\Theta^{(1)}_{\mathcal{C}} - \Theta^{(2)}_{\mathcal{C}}) \rVert_2 \leq \epsilon_{\mathcal{C}}
  \end{equation}
  with probability at least
  \begin{equation*}
    1 - 2\exp\left\{ - \frac{\sigma^2}{8m} (\sqrt{r} + \sqrt{n-c})^2 \right\}.
  \end{equation*}

  Similarly, for block $\mathcal{R}$,
  \begin{equation}
    \lVert (\widehat{\Theta}^{(1)}_{\mathcal{R}} - \widehat{\Theta}^{(2)}_{\mathcal{R}}) - (\Theta^{(1)}_{\mathcal{R}} - \Theta^{(2)}_{\mathcal{R}}) \rVert_2 \leq \epsilon_{\mathcal{R}}
  \end{equation}
  with probability at least
  \begin{equation*}
    1 - 2\exp\left\{ - \frac{\sigma^2}{8m} (\sqrt{n-r} + \sqrt{c})^2 \right\}.
  \end{equation*}

  For block $\mathcal{D}$, first suppose $\hat{M}$ and $M$ are matrices, and  $M$ has rank $R$.
  Then
  \begin{equation*}
    \lVert [ \hat{M} ]_{(R)} - M \rVert_2 \leq \lVert [ \hat{M} ]_{(R)} - \hat{M} \rVert_2 + \lVert \hat{M} - M \rVert_2 \leq 2 \lVert \hat{M} - M \rVert_2,
  \end{equation*}
  where the second inequality follows from Eckhart-Young theorem.
  Thus, by a similar argument using Lemma~\ref{lem:bandeira}, union bound, and triangle inequality,
  \begin{equation}
    \left\lVert \left[ \widehat{\Theta}^{(1)}_{\mathcal{D}} - \widehat{\Theta}^{(2)}_{\mathcal{D}} \right]_{(d_*)} -  ( \Theta^{(1)}_{\mathcal{D}} - \Theta^{(2)}_{\mathcal{D}} ) \right\rVert_2 \leq \epsilon_{\mathcal{D}}
  \end{equation}
  with probability at least
  \begin{equation*}
    1 - 2\exp\left\{ - \frac{\sigma^2}{8m} (\sqrt{n-r} + \sqrt{n-c})^2 \right\}.
  \end{equation*}
  A final union bound over the three blocks $\mathcal{C}$, $\mathcal{R}$, and $\mathcal{D}$ completes the proof.
\end{proof}

% prove the proposition with a perturbation of That to T \leq \epsilon_T -- first prove pseudoinverse, then prove bounds on the largest singular values  -- then apply the DK bound from qi22minimax

\begin{proof}[Proof of Proposition~\ref{prop:subspace_error}]
  Suppose the event in Lemma~\ref{lem:operator_norms} holds.
  We will begin by proving a perturbation upper bound
  \begin{equation} \label{T_bound}
    \lVert \widehat{\mathbb{T}} - \mathbb{T} \rVert_2 \leq \frac{\epsilon_{\mathbb{T}}}{(1-\rho_U)(1-\rho_V)}.
  \end{equation}
  Note that both $\mathbb{T}$ and $\widehat{\mathbb{T}}$ are given by the product of three matrices.

  In general, suppose we have matrices $A$, $B$ and $C$ such that the product $ABC$ is well-defined, and perturbed versions of those matrices $\hat{A}$, $\hat{B}$ and $\hat{C}$ such that
  \begin{equation*}
    \lVert \hat{A} - A \rVert_2 \leq \epsilon_A, \tabby \lVert \hat{B} - B \rVert_2 \leq \epsilon_B, \tabby \lVert \hat{C} - C \rVert_2 \leq \epsilon_C.
  \end{equation*}
  Also suppose that
  \begin{equation*}
    \lVert A \rVert_2 \leq \lambda_A, \tabby \lVert B \rVert_2 \leq \lambda_B, \tabby \lVert C \rVert_2 \leq \lambda_C, \tabby
  \end{equation*}
  Then
  \begin{align*} \label{three_terms}
    \lVert \hat{A}\hat{B}\hat{C} - ABC \rVert_2 &\leq \epsilon_A \lambda_B \epsilon_C + \epsilon_A \epsilon_B \epsilon_C + \epsilon_A\lambda_B\lambda_C + \lambda_A \lambda_B \epsilon_C \\
    &+ \epsilon_A \epsilon_B \lambda_C + \lambda_A \epsilon_B \epsilon_C + \lambda_A \epsilon_B \lambda_C \numberthis
  \end{align*}
  Thus we can establish a the upper bound with bounds on the largest singular value of each factor of $\mathbb{T}$, as well as operator norm perturbations for each factor.

  For the first factor of $\mathbb{T}$, $\Theta^{(1)}_{\mathcal{C}} - \Theta^{(2)}_{\mathcal{C}}$, the operator norm perturbation is bounded above by $\epsilon_{\mathcal{C}}$ by Lemma~\ref{lem:operator_norms}, and
  $$
    \lVert \Theta^{(1)}_{\mathcal{C}} - \Theta^{(2)}_{\mathcal{C}} \rVert_2 \leq \lVert U_{[r]}S_{\mathrm{diff}}V_{[-c]}^{\tp} \rVert_2 \leq s_{\mathrm{max}} \sqrt{\rho_U(1 - \kappa_V)} \leq s_{\mathrm{max}} \sqrt{\rho_U}
  $$
  by assumption.
  Similarly, for the third factor we have operator norm perturbation bounded above by $\epsilon_{\mathcal{R}}$, and
  $$
    \lVert \Theta^{(1)}_{\mathcal{R}} - \Theta^{(2)}_{\mathcal{R}} \rVert_2 \leq s_{\mathrm{max}} \sqrt{(1 - \kappa_U)\rho_V} \leq s_{\mathrm{max}} \sqrt{\rho_V}.
  $$

  For the second factor, we need to take additional steps due to the pseudo-inverse.
  Let $\sigma_{\mathrm{min}}(M)$ denote the smallest non-zero singular value of a rank $R$ matrix $M$, and suppose $\hat{M}$ is a perturbed version of $M$ which also has rank $R$.
  Then by \citesupp{wedin73perturbation}, if $\sigma_{\mathrm{min}}(M) \geq \lVert \hat{M} - M \rVert_2$,
  $$
    \lVert \hat{M}^{\dagger} - M^{\dagger} \rVert_2 \leq \frac{2 \lVert \hat{M} - M \rVert_2}{\sigma_{\mathrm{min}}(M) \{ \sigma_{\mathrm{min}}(M) - \lVert \hat{M} - M \rVert_2 \}}. %prev constant (1 + sqrt(5))/2
  $$

  By Lemma~\ref{lem:operator_norms}, we have
  $$
    \left\lVert \left[ \widehat{\Theta}^{(1)}_{\mathcal{D}} - \widehat{\Theta}^{(2)}_{\mathcal{D}} \right]_{(d_*)} -  ( \Theta^{(1)}_{\mathcal{D}} - \Theta^{(2)}_{\mathcal{D}} ) \right\rVert_2 \leq \epsilon_{\mathcal{D}},
  $$
  and by assumption we have
  $$
    \sigma_{\mathrm{min}}(\Theta^{(1)}_{\mathcal{D}} - \Theta^{(2)}_{\mathcal{D}}) \geq s_{\mathrm{min}} \sqrt{(1-\rho_U)(1-\rho_V)} \geq  2\epsilon_{\mathcal{D}}.
  $$
  Thus,
  \begin{align*}
    &\left\lVert \left( \left[ \widehat{\Theta}^{(1)}_{\mathcal{D}} - \widehat{\Theta}^{(2)}_{\mathcal{D}} \right]_{(d_*)} \right)^{\dagger} - \left(\Theta^{(1)}_{\mathcal{D}} - \Theta^{(2)}_{\mathcal{D}} \right)^{\dagger} \right\rVert_2 \\
    \leq &\frac{4\epsilon_{\mathcal{D}}}{2s_{\mathrm{min}}\sqrt{(1-\rho_U)(1-\rho_V)}\{ s_{\mathrm{min}}\sqrt{(1-\rho_U)(1-\rho_V)} - \epsilon_{\mathcal{D}} \}} \\
    &\leq \frac{4\epsilon_{\mathcal{D}}}{s_{\mathrm{min}}^2 (1 - \rho_U)(1 - \rho_V)}
  \end{align*}
  is an operator norm perturbation bound for the second factor.
  Finally, we have
  \begin{align*}
    \left\lVert \left(\Theta^{(1)}_{\mathcal{D}} - \Theta^{(2)}_{\mathcal{D}} \right)^{\dagger} \right\rVert_2 &= \frac{1}{\sigma_{\mathrm{min}}(\Theta^{(1)}_{\mathcal{D}} - \Theta^{(2)}_{\mathcal{D}})} \\
    &\leq \frac{1}{s_{\mathrm{min}}\sqrt{(1-\rho_U)(1-\rho_V)}} \\
    &\leq \frac{1}{s_{\mathrm{min}}(1-\rho_U)(1-\rho_V)}.
  \end{align*}

  We can then establish \eqref{T_bound} by plugging these six upper bounds into the general result \eqref{three_terms}, where the order of the terms $\epsilon_{\mathbb{T}}^{(1)},\ldots,\epsilon_{\mathbb{T}}^{(7)}$ corresponds to the order in which they appear in \eqref{three_terms}.

  The final result of Proposition~\ref{prop:subspace_error} follows from a perturbation bound on the singular subspaces of $\widehat{\mathbb{T}}$.
  From \citesupp{qi22minimax}, Theorem 3.1, we have
  $$
    \min_{O \in \mathcal{O}_{d_*}} \lVert \hat{U} - U_{[r]}O \rVert_2 + \min_{O \in \mathcal{O}_{d_*}} \lVert \hat{V} - V_{[c]}O \rVert_2 \leq \frac{4 \sqrt{2} \lVert \widehat{\mathbb{T}} - \mathbb{T} \rVert_2}{\sigma_{\mathrm{min}}(\mathbb{T})} \leq \frac{6 \lVert \widehat{\mathbb{T}} - \mathbb{T} \rVert_2}{\sigma_{\mathrm{min}}(\mathbb{T})}.
  $$
  The proof is complete by plugging in \eqref{T_bound}, and lower bounding $\sigma_{\mathrm{min}}(\mathbb{T})$:
  \begin{align*}
    \sigma_{\mathrm{min}}(\mathbb{T}) &= \sigma_{\mathrm{min}}\left\{ S (V_{[-c]}^{\tp}V_{[-c]}) S^{-1} (U_{[-r]}^{\tp}U_{[-r]}) S \right\} \\
    &\geq \frac{s_{\mathrm{min}}^2 (1-\rho_U)(1 - \rho_V)}{s_{\mathrm{max}}}.
  \end{align*}
\end{proof}

\section{Additional experiments on simulated data}

\ifjasa

\else
\fi

\subsection{Misspecification of latent dimension}

In this appendix, we perform a simulation study in the setting where marginal edge dependence is induced by misspecification of the dimension of latent structure. %Thus it allows us to comment on an aspect of your comment 2.
In this case, for both Gaussian and binary edge networks, we generate the edge expectation parameters for each group from $p$-dimensional latent space models with inner product similarity.
  Settings are otherwise similar to
  \ifjasa
  Appendix~\ref{subsec:gaussian_sims} and Section~\ref{subsec:binary_sims},
  \else
    Sections~\ref{subsec:gaussian_sims} and \ref{subsec:binary_sims},
  \fi
  with $n=100$, $m=8$, $\sigma^2 = 50$, and an (off-diagonal) rectangular hypothesis set with $r=20$ rows and $c=30$ columns.
  The signal level is specified by the variance of the perturbation of latent positions between group 1 and group 2, that is
  $$
    X_i^{(2)} = X_i^{(1)} + \mathcal{N}(0, \{\mathrm{signal}\}I_p), \quad Y_j^{(2)} = Y_j^{(1)} + \mathcal{N}(0, \{\mathrm{signal}\}I_p)
  $$
  for $i$ corresponding to the hypothesis set rows, and $j$ corresponding to the hypothesis set columns.
  For the Gaussian edge networks we set the signal level to $1/2p$, and for the binary edge networks we set signal level to $1/8p$.
  Setting these signal levels to shrink with $p$ implies that the total signal
  $$
    \lVert \Theta^{(2)}_{\mathcal{S}} - \Theta^{(1)}_{\mathcal{S}} \rVert_F^2
  $$
  should remain approximately constant for different $p$, but that difference will be spread over more latent dimensions.
  
  We perform our mesoscale projection test with projection dimension $d \in \{4,8,12,16\}$. The projection dimension which matches the generative model is $d = 2p$, implying that for $d < 2p$, there will be non-trivial marginal dependence among the edge variables in the orthogonal complement of the projection $(\hat{V} \otimes \hat{U})$. However, our theoretical results imply that the mesoscale projection test should continue to control type I errors at the nominal level, no matter the choice of $d$.

  In Figure~\ref{fig:gaussian_varyp} we plot the results for Gaussian edge networks.
  As predicted by theory, type I errors are controlled at the nominal level for any projection dimension.
  The power under the alternative depends on the relationship between $d$ and $p$: we see that the best choice is always $d = 2p$, as the most efficient projection will capture the structure in the hypothesis set with as few dimensions as possible.

 \begin{figure}
     \centering
     \includegraphics[width=0.48\linewidth]{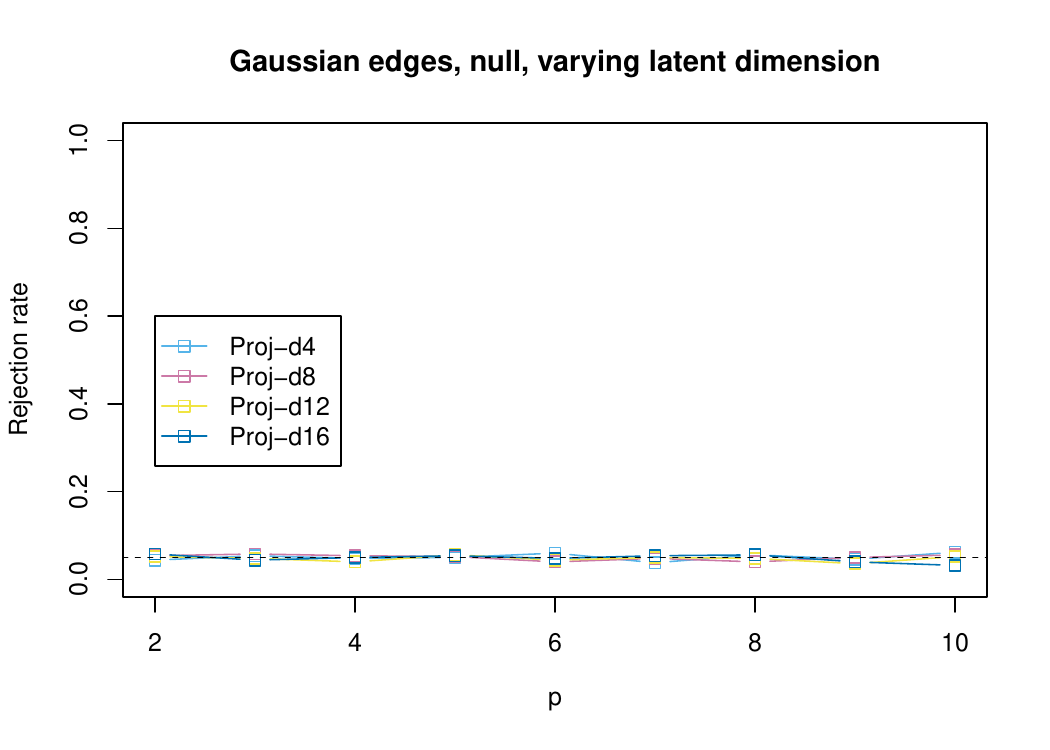}
     \hfill
     \includegraphics[width=0.48\linewidth]{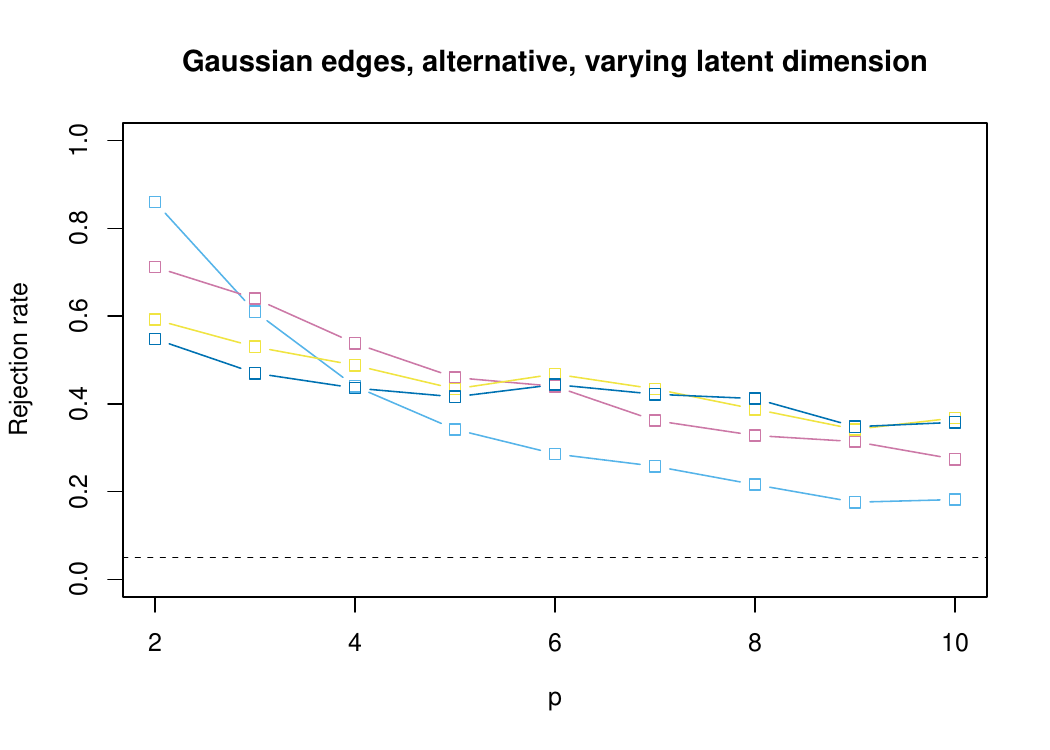}
     \caption{Rejection rates, varying latent space dimension $p$. Gaussian edge networks, $n=100$, $m=8$, $\sigma^2 = 50$, $\mathcal{S}$ is a $20 \times 30$ rectangle.}
     \label{fig:gaussian_varyp}
 \end{figure}

 In Figure~\ref{fig:logit_varyp} we plot the results for binary edge networks.
  With binary edges, we still control type I errors at the nominal level, but the tests become more conservative for larger values of $p$.
  The power under the alternative depends less strongly on the relationship between $d$ and $p$, $d=8$ gives the greatest rejection rate for nearly all choices of $p$. 

 \begin{figure}
     \centering
     \includegraphics[width=0.48\linewidth]{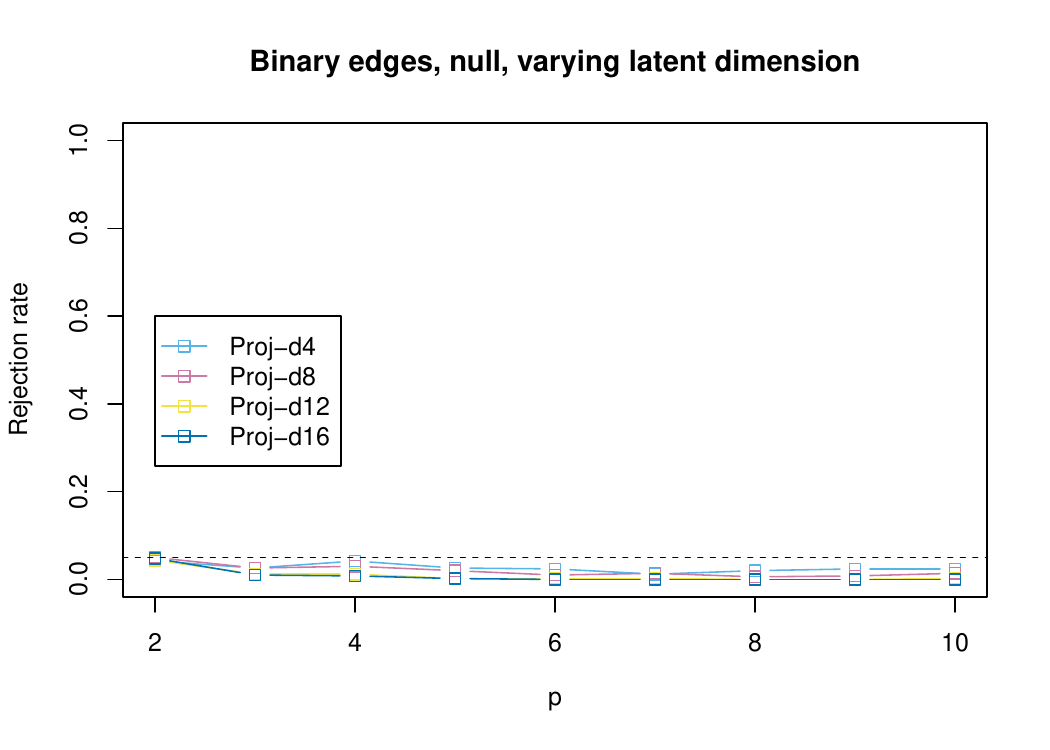}
     \hfill
     \includegraphics[width=0.48\linewidth]{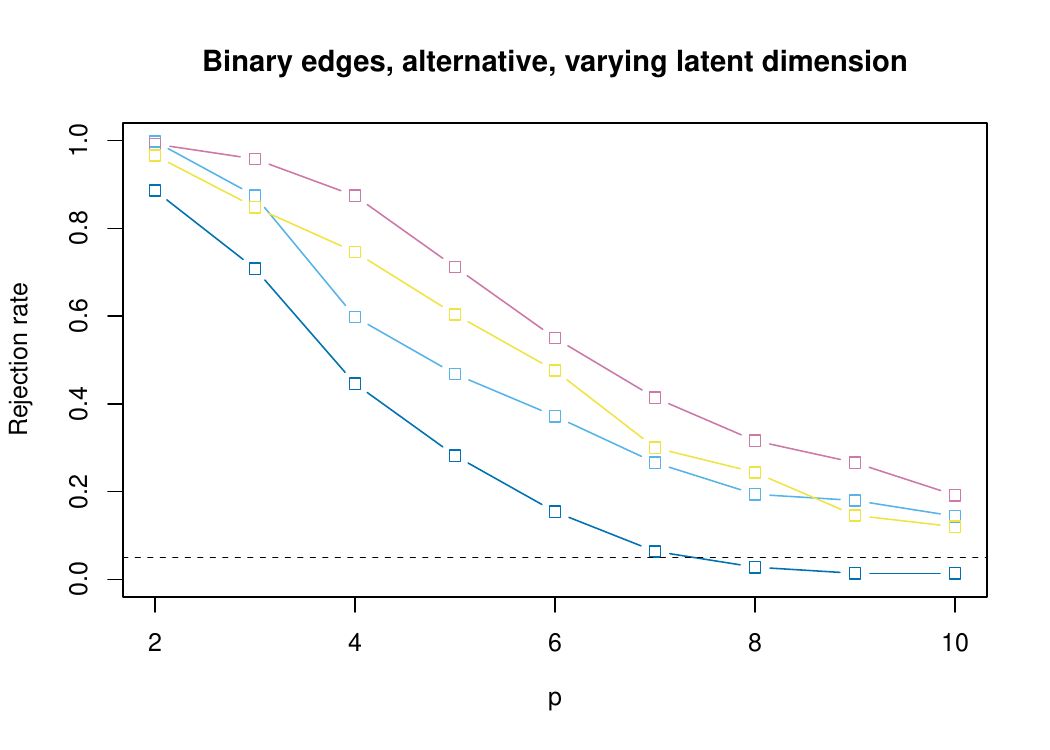}
     \caption{Rejection rates, varying latent space dimension $p$. Binary edge networks, $n=100$, $m=8$, $\mathcal{S}$ is a $20 \times 30$ rectangle.}
     \label{fig:logit_varyp}
 \end{figure}

\subsection{Performance with dependent edges}

In this appendix, we perform a simulation study to evaluate our mesoscale projection tests under {\em sender-receiver edge dependence}, which has been previously introduced as a specific model for weighted networks with dependent edges \citepsupp{du24optimal}.
In this case, dependence is induced through a random effect model, where for all $(ij)$, $k$ and $g$ we generate
  $$
    [A_k^{(g)}]_{ij} = \Theta_{ij}^{(g)} + \eta_{k,i}^{(g)} + \xi_{k,j}^{(g)} + \epsilon_{k,ij}^{(g)},
  $$
  where
  $$
    \eta_{k,i}^{(g)} \iid \mathcal{N}\left( 0, \frac{\kappa\sigma^2}{2} \right), \quad  \xi_{k,j}^{(g)} \iid \mathcal{N}\left( 0, \frac{\kappa\sigma^2}{2} \right), \quad \epsilon_{k,ij}^{(g)} \iid \mathcal{N}\left( 0, (1-\kappa)\sigma^2 \right).
  $$
  In this way, edge variables in the same sample which share a sender or receiver will be correlated; the random effect and residual variances are chosen so that the marginal variance of the edges remains constant ($\sigma^2$).
  Settings are otherwise similar to 
  \ifjasa
  Appendix~\ref{subsec:gaussian_sims},
  \else 
  Section~\ref{subsec:gaussian_sims},
  \fi 
  with $n=100$, $m=8$, $\sigma^2 = 50$, an (off-diagonal) rectangular hypothesis set with $r=20$ rows and $c=30$ columns, a $3$-dimensional inner product latent space model, and signal level $1/6$.

 %  Results with reciprocal edge dependence are given in Figure~\ref{fig:reciprocal}. In this setting we implement the ``Basic'' method, the oracle ($d=6$) projection test, and the usual ($d=6$) projection test. We also implement a projection test where the dependent mirror made of edges reciprocal to the hypothesis set is removed when computing the projection matrices $\hat{V}$ and $\hat{U}$, denoted by ``Proj. (remove mirror)''.

 % \begin{figure}
 %     \centering
 %     \includegraphics[width=0.48\linewidth]{}
 %     \hfill
 %     \includegraphics[width=0.48\linewidth]{}
 %     \caption{Rejection rates, varying reciprocal edge correlations $\rho$. $n=100$, $m=8$, $\sigma^2 = 50$, $\mathcal{S}$ is a $20 \times 30$ rectangle.}
 %     \label{fig:reciprocal}
 % \end{figure}

 % Figure~\ref{fig:reciprocal} shows that reciprocal dependence has little effect on our mesoscale projection tests. Although the projection test without mirrored edges removed does not technically enforce independence between the projections and hypothesis set edges, the effect is mild.

 Results with sender-receiver edge dependence are given in Figure~\ref{fig:senderreceiver}. In this setting we implement the ``Basic'' method, the oracle ($d=6$) projection test, and the usual ($d=6$) projection test. 
 We also implement a projection test where the projection matrices $\hat{V}$ and $\hat{U}$ are chosen to be centered over each row and column, so they are orthogonal to the degree random effects, as described in Appendix A.3, denoted ``Proj. (centered)''.

 \begin{figure}
     \centering
     \includegraphics[width=0.48\linewidth]{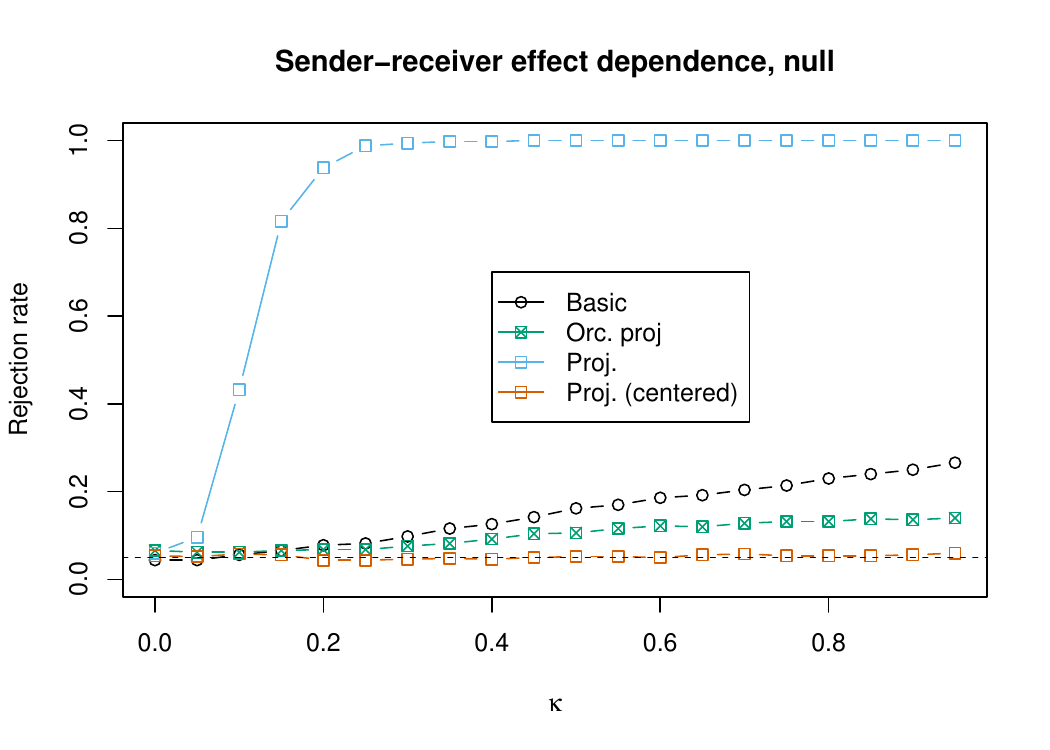}
     \hfill
     \includegraphics[width=0.48\linewidth]{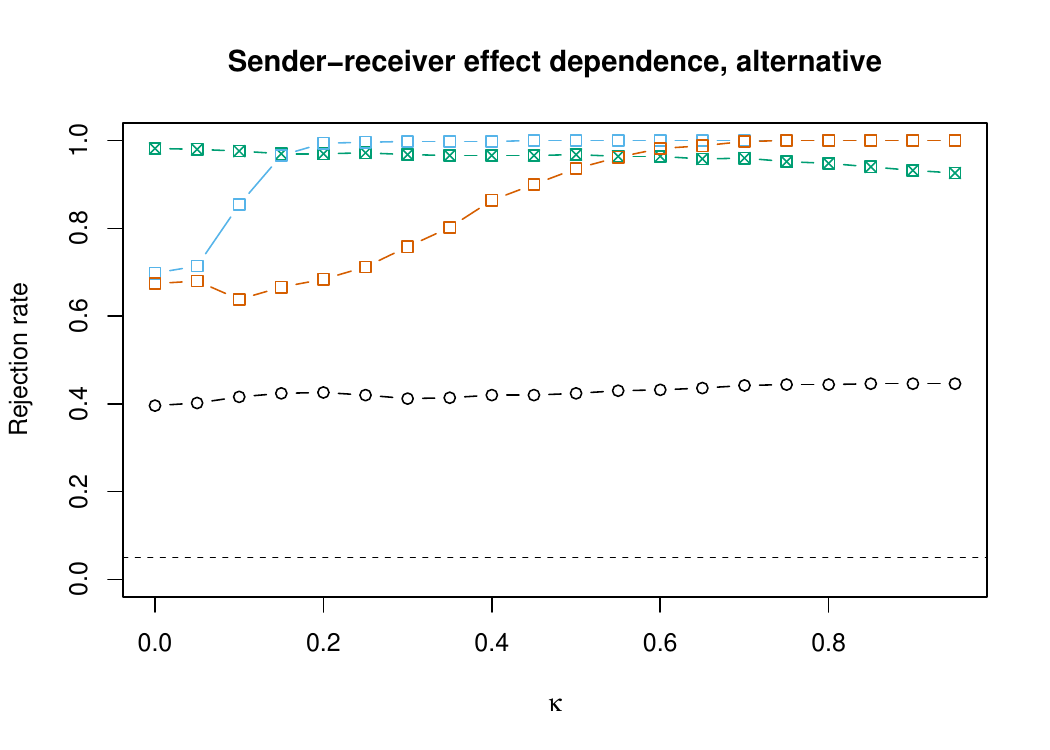}
     \caption{Rejection rates, varying proportion of variance $\kappa$ due to sender-receiver effects. $n=100$, $m=8$, $\sigma^2 = 50$, $\mathcal{S}$ is a $20 \times 30$ rectangle.}
     \label{fig:senderreceiver}
 \end{figure}

 Figure~\ref{fig:senderreceiver} shows that sender-receiver edge dependence has an effect on all of these tests, including the oracle mesoscale projection and the basic test which aggregates all signals. The effect on the level of the usual projection test is strongest, as it nearly always rejects under the null for $\kappa \geq 0.4$. 
 However, the centered projection test does maintain control of type I errors at the nominal level, and by ignoring the variance due to random effects, it achieve better power for large values of $\kappa$.
 In this particular data generating model, where the signal in $\Theta^{(1)}_{\mathcal{S}} - \Theta^{(2)}_{\mathcal{S}}$ is also approximately centered over each row and column, the effect on power is minimal, however in other cases the additional centering restriction may remove signal as well as noise.
 
 In summary, while our projection test is not robust to sender-receiver dependence, neither is the competing basic test. Moreover, our projection-based approach suggests a simple workaround which corrects the level of the test for this particular type of random effect model.

\subsection{Comparison to edge-wise testing approaches}

%% C.3 edgewise comparisons
\newcommand{\sizeS}{\lvert \mathcal{S} \rvert}

In this appendix, we perform a simulation study to compare our more complicated projection-based testing methodology to other approaches based on aggregation of edgewise hypothesis tests.
We implement and compare these approaches for networks with Gaussian edges.

We briefly describe and develop these edge-wise testing approaches as follows.
Define
\[
  W_{ij} = \frac{1}{m_1} \sum_{k=1}^{m_1} [A_k^{(1)}]_{ij} - \frac{1}{m_2} \sum_{k=1}^{m_2} [A_k^{(2)}]_{ij}
\]
for $(i,j) \in \mathcal{S}$.
Under the mesoscale null hypothesis,
\[
  \expect W_{ij} = 0, \quad \var(W_{ij}) = \frac{1}{m_1} \{\sigma_{ij}^{(1)}\}^2 + \frac{1}{m_2} \{\sigma_{ij}^{(2)}\}^2 =: \sigma_{W,ij}^2
\]
where $\{\sigma_{ij}^{(g)}\}^2 = \var(A_{ij}^{(g)})$. % could be different for different edges under the binary model, but under our model will always be the same for g={1,2}
For $m_1$ and $m_2$ greater than 1, let $\hat{\sigma}_{W,ij}^2$ denote the usual unbiased estimator of $\sigma_{W,ij}^2$ based on the sample variances for each $(i,j,g)$ triple.
Then, by the central limit theorem, as $m_1$ and $m_2$ diverge we have
\[
  x_{ij} = \frac{W_{ij}}{\sigma_{W,ij}} \cdot \frac{\sigma_{W,ij}}{\hat{\sigma}_{W,ij}} \indist \mathcal{N}(0,1).
\]
Thus for a finite edge set,
\[
  \sum_{(i,j) \in \mathcal{S}} x_{ij}^2 \indist \chi^2\left( \sizeS \right),
\]
which suggests an asymptotically valid test based on the quantiles of the $\chi^2$-distribution.
If additionally $\sizeS$ diverges, we can apply a Gaussian approximation for the $\chi^2$-distribution, and get
\[
  \frac{\sum_{(i,j) \in \mathcal{S}} x_{ij}^2 - \sizeS}{\sqrt{2\sizeS}}\indist \mathcal{N}(0,1),
\]
which also suggests an asymptotically valid (one-sided) test based on the upper quantiles of the standard Gaussian distribution.

We can also develop a test based on the upper quantiles of the maximal edge-wise difference in means, after standardization.
In particular, \citesupp{xia22hypothesis} show that
\[
  \prob\left( \max_{(i,j) \in \mathcal{S}} x^2_{ij} \geq \log \sizeS - \log\log \sizeS - \log \pi - 2 \log\log (1-\alpha)^{-1} \right) \rightarrow \alpha
\]
as $m_1$, $m_2$, $\sizeS$ diverge, which suggests an asymptotically valid test.
Finally, suppose for simplicity that both $m_1$ and $m_2$ are even. Define $W^{(s)}_{ij}$ for $s=1,2$ based on a random split of the samples within each group. That is, $W^{(s)}_{ij}$ and $W^{(s)}_{ij}$ are each are based on $m_g/2$ samples in group $g=1,2$.
By independence,
\[
  \expect W_{ij}^{(1)} W_{ij}^{(2)} = 0,
\]
and thus
\[
  \var \left( W_{ij}^{(1)} W_{ij}^{(2)} \right) = \var \left( W_{ij}^{(1)} \right)\var \left( W_{ij}^{(2)} \right) = 4 \sigma_{W,ij}^4.
\]
By edge independence
\[
  \var \left( \sum_{(i,j) \in \mathcal{S}} W_{ij}^{(1)} W_{ij}^{(2)} \right) = 4 \sum_{(i,j) \in \mathcal{S}} \sigma_{W,ij}^4,
\]
which has a consistent estimator $\hat{V}^2 = 4 \sum_{(i,j) \in \mathcal{S}} \left( \hat{\sigma}_{W,ij}^2 \right)^2$.
Then, by central limit theorem, as $\sizeS$ diverges, we have
\[
  \frac{1}{\hat{V}} \sum_{(i,j) \in \mathcal{S}} W_{ij}^1 W_{ij}^2 \indist \mathcal{N}(0,1),
\]
which suggests an asymptotically valid test based on the quantiles of the standard Gaussian distribution.

Thus we have four competing tests based on three statistics: $\sum_{(i,j) \in \mathcal{S}} x^2_{ij}$, $\max_{(i,j) \in \mathcal{S}} x^2_{ij}$ and $\hat{V}^{-1} \sum_{(i,j) \in \mathcal{S}} W_{ij}^1 W_{ij}^2$, and calibrated using asymptotic approximations.
The same three statistics can also be used to define tests which are exactly calibrated with a parametric bootstrap under the mesoscale null.
  
We plot the results in Figure~\ref{fig:edgewise}, based on the same Gaussian edge settings as implemented in   
\ifjasa
  Appendix~\ref{subsec:gaussian_sims}.
  \else 
  Section~\ref{subsec:gaussian_sims}.
  \fi 
Note that we do not evaluate the test for $m=1$, since none of the edge-wise test statistics can be computed in this setting.
We denote the test based on $\sum_{(i,j) \in \mathcal{S}} x^2_{ij}$ as ``Sum, Chi'', ``Sum, Z-approx'' when calibrated with a Gaussian approximation to the $\chi^2$, or ``Sum, bootstrap'' when calibrated with a parametric bootstrap. 
We denote the test based on $\max_{(i,j) \in \mathcal{S}} x^2_{ij}$ by  or ``Maximal, bootstrap'' when calibrated with a parametric bootstrap. 
  We compare these edge-wise approaches to the basic test, our mesoscale projection test with $d=4$, and the oracle ($d=6$) projection test.

   \begin{figure}
     \centering
     \includegraphics[width=0.48\linewidth]{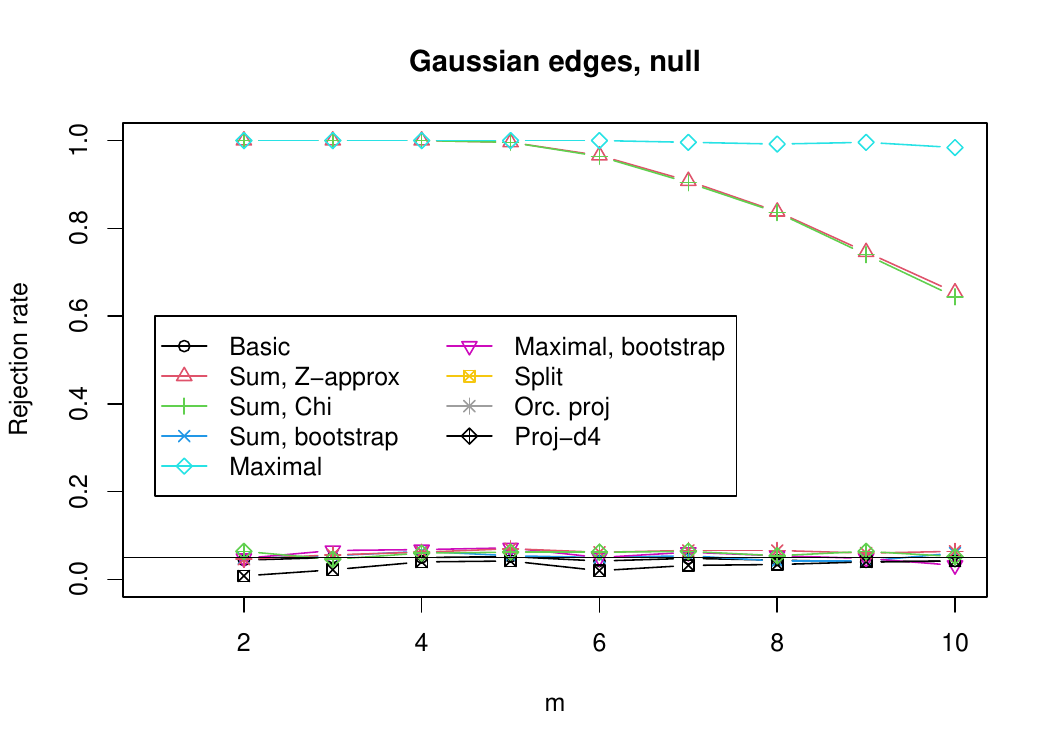}
     \hfill
     \includegraphics[width=0.48\linewidth]{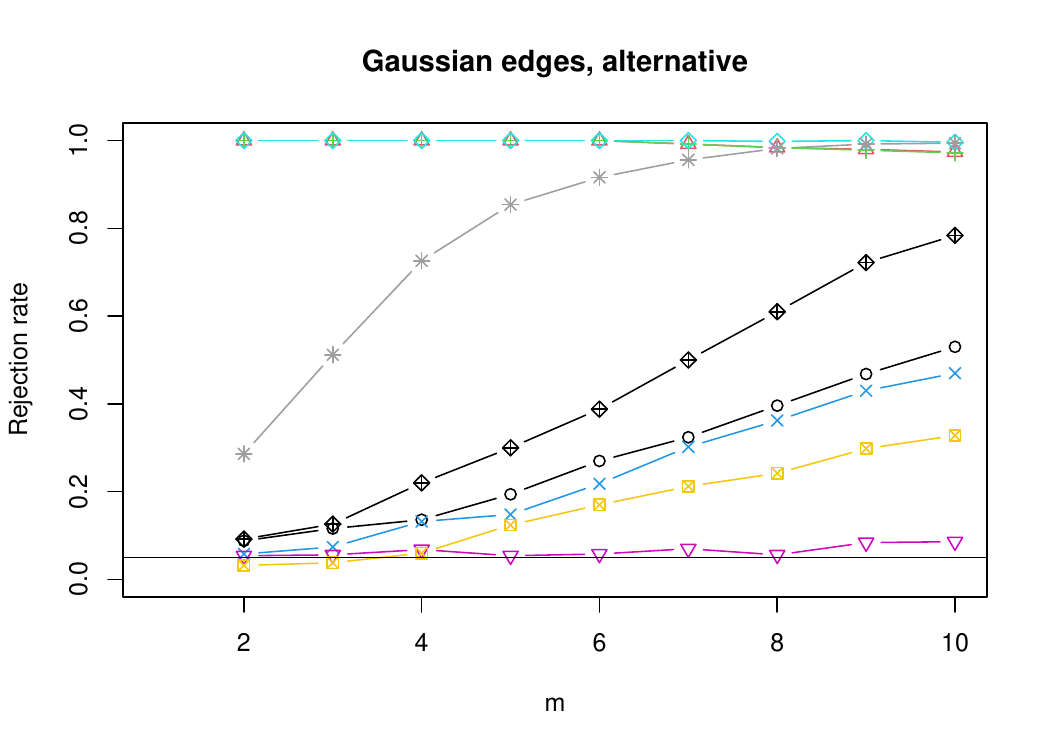}
     \caption{Rejection rates of edge-wise approaches, varying $m$. $n=100$, $\sigma^2 = 50$, $\mathcal{S}$ is a $20 \times 30$ rectangle.}
     \label{fig:edgewise}
 \end{figure}

  Figure~\ref{fig:edgewise} shows that without exact bootstrap calibration, ``Sum, Chi'', ``Sum, Z-approx'', and ``Maximal'' fail to control type I errors at the nominal level. The three edgewise methods that do control type I errors have less power than the basic $F$-test and the mesoscale projection test.
  We observe that the tests based on $\sum_{(i,j) \in \mathcal{S}} x^2_{ij}$ and $\hat{V}^{-1} \sum_{(i,j) \in \mathcal{S}} W_{ij}^1 W_{ij}^2$ are most similar to the ``Basic'' test since they aggregate signals over the hypothesis set with no dimension reduction.
  The test based on $\max_{(i,j) \in \mathcal{S}} x^2_{ij}$ does not effectively aggregate signals in this setting, since no single node pair has contributes overwhelmingly to the signal.

\section{ADMM algorithm for position-based testing} \label{app:admm}

% port over from working_paper.tex
% in sims section, we implement the following admm steps to estimate the low rank expected adjacency matrices under the mesoscale null

For the competing position model bootstrap test described in Section~\ref{sec:simulations}, we fit $d$-dimensional inner product latent position models to each sample of networks, under the mesoscale null hypothesis.
Under the Gaussian edge network model the constrained maximum likelihood estimator solves the optimization problem
\begin{equation} \label{admm_opt}
  \min_{\Theta^{(1)},\Theta^{(2)}} \left\{ \lVert \bar{A}^{(1)} - \Theta^{(1)} \rVert_F^2 + \lVert \bar{A}^{(2)} - \Theta^{(2)} \rVert_F^2 \right\}
\end{equation}
subject to
$$ \operatorname{rank}(\Theta^{(g}) \leq d, \tabby \Theta_{\mathcal{S}}^{(1)} = \Theta_{\mathcal{S}}^{(2)}; $$
where
$$
  \bar{A}^{(g)} = \frac{1}{m} \sum_{k=1}^m A_k^{(g)}
$$
for $g=1,2$.

To solve \eqref{admm_opt}, we derive the following alternating direction method of multipliers (ADMM) steps with step size parameter $\rho > 0$.
For step $k \geq 0$, the iterates $u_k,z_k,w_k$ are each $2n^2$-dimensional vectors, corresponding to the entries of the $n \times n$ expected adjacency matrices for the two samples.

We set starting values
$$ u_0 = \bm{0}, \tabby [z_0^{(g)}]_{\mathcal{S}} = \frac{1}{2} [\bar{A}^{(1)}]_{\mathcal{S}} + \bar{A}^{(2)}_{\mathcal{S}} ), \tabby [z_0^{(g)}]_{\mathcal{S}^c} = [\bar{A}^{(g)}]_{\mathcal{S}^c}. $$

Then the iterations proceed as follows.
\begin{enumerate}
  \item Update $w_{k+1}$
  $$ [w_{k+1}^{(g)}]_{\mathcal{S}} = \frac{2}{4 + 2\rho} ( [\bar{A}^{(1)}]_{\mathcal{S}} + [\bar{A}^{(2)}]_{\mathcal{S}} ) + \frac{\rho}{4 + 2\rho} ( [z_k^{(1)}]_{\mathcal{S}} + [z_k^{(2)}]_{\mathcal{S}} - [u_k^{(1)}]_{\mathcal{S}} - [u_k^{(2)}]_{\mathcal{S}} ) $$
  $$ [w_{k+1}^{(g)}]_{\mathcal{S}^c} = \frac{2}{2 + \rho} [\bar{A}^{(g)}]_{\mathcal{S}^c} + \frac{\rho}{2 + \rho} ([z_k^{(g)}]_{\mathcal{S}^c} - [u_k^{(g)}]_{\mathcal{S}^c} ) $$
  \item Update $z_{k+1}$
  $$z^{(g)}_{k+1} = \left[ \vecz^{-1}(w_{k+1}^{(g)} + u_k^{(g)} )\right]_{(d)}$$
  for $g=1,2$.
  \item Update $u_{k+1} = u_k + w_{k+1} - z_{k+1}$.
\end{enumerate}
These iterations continue either a maximum number of iterations $K$, or if
$$
  \max\{ \lVert z_{k+1} - z_k \rVert_2^2, \lVert [z_{k+1}^{(1)}]_{\mathcal{S}} - [z_{k+1}^{(2)}]_{\mathcal{S}} \rVert_1 \}
$$
is smaller than some prespecified convergence threshold.

%\pagebreak

%\pagebreak

\bibliographystylesupp{abbrvnat}
\bibliographysupp{graph_testing0}

\end{document}